\newcommand{\bra}[1]{\ensuremath{\left\langle#1\right|}}
\newcommand{\ket}[1]{\ensuremath{\left|#1\right\rangle}}
\newcommand{\braket}[2]{\ensuremath{\left\langle#1 \vphantom{#2}\middle|  #2 \vphantom{#1}\right\rangle}}
\newcommand{\ketbra}[2]{\ensuremath{\left|#1\right\rangle\!\left\langle#2\right|}}
\newcommand{\tr}[2]{\mathrm{Tr}_{#1}\left[ #2 \right]}
\newcommand{\iden}{\mathbb{I}}
\renewcommand{\v}[1]{\ensuremath{\boldsymbol #1}}
\newcommand{\R}{\mathbb{R}}
\newcommand{\be}{\begin{equation}}
\newcommand{\ee}{\end{equation}}
\newcommand{\E}{\mathcal{E}}
\newcommand{\I}{\mathbb{I}}
\newcommand{\rhomega}{\rho^{(\omega)}}
\newcommand{\om}[2]{(\omega_{#1 #2})}
\newcommand{\omp}[2]{(\omega_{#1'\! #2})}
\newcommand{\ompp}[2]{(\omega_{#1'\! #2 '\!})}
\theoremstyle{plain}
\newtheorem{thm}{Theorem}
\newtheorem{corol}[thm]{Corollary}
\newtheorem{lem}[thm]{Lemma}
\theoremstyle{definition}
\newtheorem{defn}{Definition}
\newtheorem{rmk}{Remark}
\newtheorem{ex}{Example}
\theoremstyle{remark}
\newtheorem*{centralq}{Central Question}
\begin{document}
	
	\title{An introductory review of the resource theory approach to thermodynamics}
	
	\author{Matteo Lostaglio}
	\address{ICFO-Institut de Ciencies Fotoniques, The Barcelona Institute of Science and Technology, Castelldefels (Barcelona), 08860, Spain}
	
	\begin{abstract}
		I give a self-contained introduction to the resource theory approach to quantum thermodynamics. I will introduce in an elementary manner the technical machinery necessary to unpack and prove the core statements of the theory. The topics covered include the so-called `many second laws of thermodynamics', thermo-majorisation and symmetry constraints on the evolution of quantum coherence. Among the elementary applications, I explicitly work out the bounds on deterministic work extraction and formation, discuss the complete solution of the theory for a single qubit and present the irreversibility of coherence transfers. The aim is to facilitate the task of those researchers interested in engaging and contributing to this topic, presenting scope and motivation of its core assumptions and discussing the relation between the resource theory and complementary approaches. 
	\end{abstract}
	
		\maketitle
	
		\tableofcontents

\bigskip

	This review is based on an introductory mini-course I gave in ICFO in May 2018 on the resource theory approach to quantum thermodynamics, itself partially based on my PhD thesis \cite{lostagliothesis}. Various quantum thermodynamics reviews have appeared in the last couple of years where the resource theory approach is touched upon; these include a broad review on the role of quantum information in thermodynamics \cite{goold2016role} (Section III); a broad review of the various approaches to quantum thermodynamics \cite{vinjanampathy2016quantum} (Section~5); and a short review stating and discussing some of the technical results of the resource theory approach~\cite{ng2018resource}.
	
	However, the aim of the present manuscript is different, in that I want to provide a self-contained introduction to the topic. Currently one has to delve into not always user-friendly appendices scattered throughout the literature to get a technical grasp of the subject; or alternatively accept certain statements without a solid understanding of the scope of the underlying assumptions and proof techniques. Especially in quantum thermodynamics, where a variety of diverse formalisms are being applied to closely related problems, uncritical acceptance does not allow to easily see the relations between complementary lines of research, as well as the current limitations of each approach; in turn, this hinders the development of the field as a whole. Contributing towards the solution of this issue is one of the aims of this manuscript. A second one is that, with hindsight, certain key proofs of the resource-theoretical framework (such as those involving thermo-majorisation) can be simplified considerably. Hence, I present a more direct derivation of some of the results appearing in the literature.
	
	Overall, I hope these notes will facilitate the work of those that wish to approach this topic and promote cross-talking within the larger quantum thermodynamics community and beyond. Great complementary tools are Markus M\"uller's lecture notes \cite{muellernotes} and some PhD thesis \cite{korzekwa2016coherence} (Part I), \cite{alhambra2017non, sparaciari2018multi}.
	
	With this premise, I would like to jump to the core of the matter. The manuscript is divided into three main sections:
	\begin{enumerate}
		\item Section I puts the resource theory approach to thermodynamics into context. This means showing that thermodynamics can be phrased in a language similar to entanglement and discussing how this approach relates to others.
		\item Section II deals with the energetic part of the theory. It proves that thermo-majorisation is the notion that characterises how different out of equilibrium distributions can be transformed into each other. I also discuss how thermo-majorisation is related to the absence of a unique extension of the concept of entropy (and free energy) to non-equilibrium states, and the so-called `second laws' of thermodynamics.
		\item Section III introduces the idea that thermodynamics can be seen as a theory of energy \emph{and} quantum coherence, to be understood as superpositions of different energy states. The study of the thermodynamic role of superpositions is naturally explored within the framework investigating symmetries of open dynamics; new, independent relations are necessary to capture the quantum aspects of the theory.
	\end{enumerate}
	The aim of being self-contained and relatively concise unavoidably clashes with completeness. Hence, many important results of the framework are not presented here. However, arguably thermo-majorisation and the theory of symmetry in open quantum systems are the two key tools upon which most further developments can be constructed. For this reason, I will focus mainly on these, while pointing the reader to other directions throughout the text. Unless otherwise stated, all systems will be assumed to be finite dimensional. We also set $\hbar =1$.
	
	\newpage \newpage 
	
	\section*{Introduction}
	
Despite its name, textbook thermodynamics \cite{callen1985thermodynamics} is, for the most part, not concerned with \emph{dynamics}. In fact, it can be rigorously formulated starting from the notion of ordering among equilibrium states \cite{giles2013mathematical, lieb1999physics}.

Within this general mindset, consider for concreteness  the following textbook thermodynamic question. We have a square box of volume $V$, containing an ideal gas at pressure $P$ and temperature $T$. One side of the box can be turned into a movable piston (perhaps removing a locking mechanism). The piston can be attached to a weight of known mass in a gravitational potential, initially at a height $L$. So, the initial state can be described as $X=(T,P,V,L)$. The weight may be used to do work on the gas by compressing it, or we may aim to raise the weight by allowing the gas to expand. Once things have settled down, the system will be described by new variables $X'=(T',P',V',L')$. A central question in thermodynamics is to work out what transitions are possible, i.e. what states are accessible from~$X$:
\be
\label{eq:intropossibletransitions}
X \stackrel{?}{\longrightarrow} X'.
\ee
If $X'$ is accessible from $X$, we will write $X \succ X'$. The relation $\succ$ of thermodynamic accessibility from one state to another is naturally endowed with the properties of a \emph{partial ordering}, once we group into equivalence classes $[X]$ all states $X$ that can be reversibly converted into each another (in the specific sense that both $X \succ X'$ and $X' \succ X$). In fact, $\succ$ is reflexive ($[X] \succ [X]$) and antisymmetric ($[X] \succ [X']$, $[X'] \succ [X]$ $\Rightarrow$ $[X'] = [X]$) by definition of the equivalence classes, if at least the trivial transformation $X \longrightarrow X$ is allowed; and it is transitive ($[X] \succ [X']$, $[X'] \succ [X'']$ $\Rightarrow$ $[X] \succ [X'']$) under the natural assumption that thermodynamic processes can be composed. The thermodynamic entropy (understood as a functional over the set of equilibrium states) is a tool to determine such ordering. In textbook thermodynamics an often implicit assumption is made: that one can construct a unique functional $S$ that characterises the allowed transitions:
\be
\label{eq:thermoentropyassumption}
X \succ X' \Leftrightarrow S(X)\leq S(X').
\ee
In general, $S$ that may be identified with the entropy or the free energy or a generalised thermodynamic potential, depending on the context.
 
What this takes for granted is that $\succ$ is a \emph{total ordering} among all equilibrium states. A partial ordering $\succ$ is called total when any two states are \emph{comparable}, i.e., given any two states  $X$ and $X'$, it is either $X \succ X'$ or $X' \succ X$. This is typically the case for transformations between equilibrium states and is referred to as Comparison Hypothesis~\cite{giles2013mathematical, lieb1999physics}. Characterising accessibility relations in non-equilibrium thermodynamics is much more complicated; we do not necessarily expect a total order and hence a unique extension of the concept of entropy to non-equilibrium states~\cite{lieb2013entropy}. 

Hopefully this discussion has hinted at what concepts need to be introduced and formalised:
\begin{enumerate}
	\item We need a precise notion of thermodynamic process. This will be captured within the framework of resource theories.
	\item We then need to characterise the (partial) ordering induced on the set of non-equilibrium states by the chosen notion of thermodynamic processes. In particular, we will need refined versions of the standard constraint ``entropy increases''.
\end{enumerate}
The second point naturally leads to the branch of mathematics that studies important notions of partial ordering. We will start with a quick introduction to resource theories. Before that, however, some clarifications concerning the scope of this construction are required.

\subsection*{Scope of the resource theory}

Having argued above for the usefulness of a resource theory approach to thermodynamics, it is crucial to stress that the overall aim is not to provide a supposedly final axiomatisation of the theory. Firstly, there are alternative approaches that are more convenient in many practical scenarios. For example, currently the resource theory does not deal easily with time-dependent Hamiltonians and does not incorporate quantum effects of macroscopic systems such as phase traditions; secondly, there is no general agreement concerning what the scope of quantum thermodynamics \emph{should be}, with many approaches trying to move beyond traditional boundaries.

On the one hand, the resource theory approach aims to go beyond the thermodynamic limit and the assumption of equilibrium; in fact, it is often presented as an extension of statistical mechanics to scenarios with potentially large fluctuations (so-called single-shot statistical mechanics \cite{dahlsten2011inadequacy, halpern2015introducing}). On the other hand, to the extent to which system and bath are small enough that their dynamics can be completely solved, we do not need anything more than quantum mechanics. Hence, one expects the resource theory approach to be most useful when a small number of quantum systems interacts with a large or perhaps finite-size but still intractable environment. This is the domain of the theory of open quantum system dynamics, so  it is not surprising that the two approaches are related. While the resource theory approach includes certain standard master equation treatments such as Davies maps, it does not rely on a master equation description and can in some sense be considered a non-Markovian generalization of these traditional treatments \cite{lostaglio2017markovian}. 

However, the main difference between the open system and the resource-theoretical approaches lie in the toolkits used and the questions asked. The resource theory allows us to describe and quantify certain thermodynamic properties from a quantum information perspective and, hopefully, gather new intuitions about the role out of equilibrium states play as resources to perform certain tasks. This goes beyond the question of solving a specific dynamics, and goes into the direction of describing the structure of non-equilibrium states, how their quantum properties may be harnessed, what is the optimal strategy for a given thermodynamic task and what are genuinely quantum advantages, if any.  It also aims to provide a suitable language to study the intersection between thermodynamics and information theory, and their quantum counterparts. We may draw a parallel with the resource-theoretical classification and manipulation of entanglement, which aids but by no means substitutes the study of entanglement in many body systems or investigations in the experimental generation and manipulation of highly entangled states. That is not to say that the resource theory does not have the potential to be relevant in applications, but the question is still far some settled (for some initial attempts, see Ref.~\cite{alhambra2018heat, halpern2018fundamental}, for some obstacles see Ref.~\cite{shu2018violation}).

Sitting in a grey area between various approaches, the resource theory is not an all-purpose framework to study generic `thermodynamic problems'. Rather, some of its central tools, most notably those used to characterize the partial order among states undergoing thermalisation, have the potential of being useful in complementary frameworks, e.g. by revealing common formal structures. A more conjectural possibility is that as quantum information tools are more widely applied in 'dramatic' scenarios, such as black hole physics and the AdS/CFT conjecture \cite{harlow2016jerusalem}, the quantum information approach to thermodynamics might provide valuable insights \cite{Bernamonti2018holographic}.

Not surprisingly, given their generality, some of the tools used in the resource theory approach were developed many years ago within the chemical physics community to study classical statistical systems (see the concept of `mixing distance' \cite{ruch1975diagram, ruch1976principle, ruch1978mixing}). Together with the study of symmetries of open quantum dynamics (whose study also has a long history, see e.g. \cite{holevo1993note, keyl1999optimal}), this provides the mathematical bedrock on which the resource theory relies. Due to their generality, these tools may have cascade applications into unconnected fields of the quantum sciences. This is another reason why the technical machinery of the resource theory is not hidden in an appendix, but on the contrary constitutes an integral part of these notes.

	\section{The resource theory of Thermal Operations}
	
	\subsection{The resource theory approach}
	
	The resource theory approach can be understood as the study of the limitations that arise due to some set of physical constraints. An infinitely powerful agent will experience no constraints, and consequently for her nothing will be a resource. For example, if she could teleport instantaneously anywhere in the Universe, neither time, nor a mean of transportation and the fuel necessary to operate it, would be a resource for the task of travelling. In the real world, however, we face both fundamental constraints (e.g., the speed of light is finite) and practical constraints (e.g., limited carbon budget). Given a set of constraints, we can then identify a corresponding set of resources (bus tickets, fuel, etc.) which may be consumed to perform a given task (go from point $A$ to point $B$).
	
	In the context of quantum theory, agents are limited in their ability to manipulate a quantum state by the laws of quantum mechanics. More than that, extra constraints may arise from the specification of the physical setting, as this well-known example shows \cite{horodecki2003entanglement}:
	
	\begin{ex}[LOCC transformations]
		\label{ex:locc}
		Imagine Alice and Bob are at two different locations $A$ and $B$. Within each laboratory, each of them is allowed to perform any quantum operation, represented by an arbitrary Completely Positive and Trace Preserving (CPTP) linear map, also known as channel. Furthermore, Alice and Bob can communicate, but only classically: for example, Bob can condition the operation performed on his side on the outcome of a measurement on Alice's side. The set of operations identified in this way is known as Local Operations and Classical Communication (LOCC). 
	\end{ex}
	More generally, the choice of a subset of all quantum channels defines a set of \emph{allowed transformations}; we assume an agent can perform such operations in arbitrary number (weakening this assumption may be an interesting line of research). We will not delve into details about general resource theories, which is an interesting research topic since it seeks to identify common features of generically vastly different operational frameworks (the interested reader can consult, e.g., \cite{horodecki2013quantumness, brandao2015reversible, gour2017quantum, chitambar2018quantum} and references therein). From an operational point of view, it often makes sense that the set of allowed operations is
	\begin{enumerate}
		\item \emph{Convex}: if one is allowed to build a box that performs operation $1$ and a box that performs operations $2$, it is natural to assume that we can build a third box that implements $1$ or $2$ conditioned on the outcome of a coin toss.
		\item \emph{Closed under composition}: if one can perform operations $1$ and $2$, one can also implement them in a sequence.
		\item \emph{Contains the identity channel}: doing nothing is allowed. 
	\end{enumerate}
	
	The choice of a set of allowed transformations goes hand in hand with the identification of a set of states that can be generated by allowed operations only. For example, Alice and Bob can produce any separable state by means of LOCC transformations. Such quantum states are called  \emph{free states} of the theory. Any state that is \emph{not} a free state will be called a \emph{resource state}. For the set of LOCC transformations, resource states are entangled states. A set of allowed operations with its correspondent set of free states defines a \emph{resource theory}. The previous example defines the resource theory of entanglement. In general:
\begin{defn}
	A \emph{resource theory} is defined by a subset of all quantum channels (\emph{allowed operations} $\mathcal{A}$), closed under composition and including the identity, and by the set of all states that can be generated by allowed operations only (\emph{free states} $\mathcal{F}$). Any state that is not free is a \emph{resource state}, denoted by $\mathcal{R}$. 
\end{defn}

If we can only access free states, things are boring since we can explore the full set $\mathcal{F}$ and never get out of it. However, if we are given some $\rho \in \mathcal{R}$, we can ask how this resource can be used. For example, we might want to know if it can be manipulated to another form $\sigma$ by operations in $\mathcal{A}$. We will write
\begin{equation}
\rho \succ_{\mathcal{A}} \sigma \quad \Leftrightarrow \quad \exists A \in \mathcal{A}: A(\rho) = \sigma.
\end{equation}
It should be clear that if $\rho \succ_{\mathcal{A}} \sigma$, then $\rho$ is a better resource than $\sigma$ (strictly better if  $\sigma \nsucc_{\mathcal{A}}  \rho$). An important aspect of resource theories is that they define a \emph{partial ordering} on the set of quantum states, defined by the above notion of convertibility through allowed operations:
\begin{rmk}[Partial order of resources]
	\label{ex:partialorder}
	Denote by $[\rho]$ the equivalence class of states that can be \emph{reversibly interconverted} into each other, $[\rho] = \{\sigma|\; \rho \succ_{\mathcal{A}} \sigma \textrm{ and } \sigma \succ_{\mathcal{A}} \rho  \}$. The relation $\succ_{\mathcal{A}}$ defines a partial order among these classes of states. 
\end{rmk}
	 While many physical questions can be ultimately understood mathematically as the description of the partial ordering among states in $\mathcal{R}$, there is a different and useful point of view on resources. If we are given $\rho \in \mathcal{R}$, we may `consume' it to simulate a non allowed operation, matching the intuition of resources as `fuel' to realise otherwise impossible operations.
	 \begin{ex}[Quantum teleportation]
	 	Let $\mathcal{A} =$ LOCC and $\mathcal{F}=$ separable states. Suppose we are given a copy of the Bell state \mbox{$\ket{\phi^+} = (\ket{00}+\ket{11})/\sqrt{2}$}. Then by allowed operations we can realize one application of the identity channel from $A$ to $B$, a channel that is not $\mathcal{A}$. This is the well-known quantum teleportation protocol \cite{nielsen2010quantum}. In performing the protocol, we consume the resource.
	 \end{ex}

	  	  \subsection{The intuition behind Thermal Operations}
	  	  
	  	  A resource theory of quantum states out of equilibrium aims to answer the question: given some initial state and a set of thermodynamically allowed transformations, what is the set of reachable final states? This approach to thermodynamics is, in many ways, inspired by the theory of entanglement. Entanglement theory allowed us to understand that entanglement comes in different `flavours' and that certain resources of particular interest (like $\ket{\phi^+}$) can be distilled from many copies of `worse' resource states by means of allowed operations \cite{horodecki2003entanglement}. The above aspects will have thermodynamic analogues in the `distillation' of pure energy states and in a complicated partial order among non-equilibrium states, with more familiar relations emerging only in specific limits. But how does one define a set of thermodynamically free operations?
	  	 	  
	  	  Various definitions have been proposed. These try to formalise the notion of operations that can be carried out at no cost.  It is important to stress that ultimately the definition of this set is a postulate of the theory -- and in any given circumstance the best choice will arguably depend on the experimental limitations we want to describe. Here we will now show how to justify heuristically the most common choice -- Thermal Operations -- from so-called passivity considerations, but in the next subsection we will also mention some of the alternatives.
	  	  
	  	  \subsubsection*{Why only thermal states can be free states?}
	  
	  A system in isolation is given, with Hamiltonian $H_S$. We ask: what states do not constitute a resource? A \emph{necessary} condition should be that such states can be prepared at no work cost, but this is necessarily relative to a background reference temperature. That is, the same system may be or not a resource depending on the environment in which it is embedded (e.g. a thermal state at $300K$ at the North Pole will give rise to a heat flow that can be used to extract work, whereas in a room at $300K$ it will not). As we now argue, if the system $X$ has Hamiltonian $H_X$, only the state $\gamma_X = e^{-H_X/kT}/\tr{}{e^{-H_X/kT}}$ can potentially be a free state in the theory, where $T$ is the temperature of the background environment, and $k$ is Boltzmann's constant. In the following we will use the notion of inverse temperature $\beta = (kT)^{-1}$ and of partition function $Z_S = \tr{}{e^{-H_S/kT}}$. 
	  
	  Why can we argue that any state different from $\gamma_X$ cannot be prepared for free? Suppose that some other state $\rho_X \neq \gamma_X$ can be prepared for free. Then also $\rho^{\otimes n}_X$ can be prepared, for an arbitrary $n$.  
	  But one can show that for $n$ large enough there exists a unitary $U$ such that
	  \begin{equation}
	  \label{eq:workextractionpassive}
	  \tr{}{U \rho^{\otimes n}_X U^\dag H_n} < \tr{}{ \rho^{\otimes n}_X H_n}, 
	  \end{equation}
	  where $H_n = \sum_i H_{X,i}$, $H_{X,i} = \iden_1 \otimes \dots \otimes \iden_{i-1} \otimes H_X \otimes \iden_{i+1} \otimes \dots \otimes \I_n$ \cite{lenard1978thermodynamical}. $U$ can be understood as being the unitary induced by a time dependent Hamiltonian that starts in $H_n$ at the beginning of the protocol and returns to $H_n$ at the end. States whose energy can be lowered by unitaries are called \emph{active} and the above result says that if $\rho_X \neq \gamma_X$, then $\rho_X^{\otimes n}$ is active for some $n$. It is generally accepted that active states are those from which one can extract work, since from the first law of thermodynamics the change of energy equals work when entropy is constant. 
	  Hence, the heuristic goes, the preparation of $\rho_X \neq \gamma_X$ requires some work and should not be deemed a free operation.  	  
	  
	  If $\gamma_X$ is free, then one can create states such as $\gamma_S \otimes \gamma_B$, where $\gamma_S \propto e^{-H_S/kT}$,  $\gamma_B \propto e^{-H_B/kT'}$. However one must have $T'=T$, otherwise $(\gamma_S \otimes \gamma_B)^{\otimes n}$ would be active on $SB$ for some~$n$ (put it more physically, we could operate a Carnot engine between the two temperatures and extract work). $T$ can be identified with a fixed background temperature.
	  
	    \begin{rmk}[Refinements of the passivity argument]
	  	Some readers may have noticed that in the above passivity argument we allowed to perform general unitaries to extract work from $\rho_X$; in fact, we could have restricted to unitaries that commute with the sum of the local Hamiltonians (`energy-preserving unitaries'), and required to model the work extraction from $X$ by introducing an explicit battery system. Such discussion can be found in Appendix D of Ref.~\cite{brandao2013second}, where it is shown that given enough copies of $\rho_X \neq \gamma_X$ one can increase the average energy of a harmonic oscillator, initially prepared in an eigenstate. We then reach the same conclusion.
	  	 	 However, note that the theory is not necessarily fully trivialised (i.e., all transitions become possible) if one allows non-thermal states for free: e.g. one may still be unable to create quantum coherence in the energy basis if arbitrary diagonal states and energy-preserving unitaries are given (we will come back to this in Sec.~\ref{sec:decomposition}). Hence, the requirement that the theory is non-trivial does not uniquely select the thermal state as the only free state.

	  		Finally note that one might allow the use of active ancillary states, as long as these are given back unchanged at the end of the process (so that no work is extracted from them). We will discuss this possibility in Sec.~\ref{sec:thermodynamicschur}, but to avoid making the current heuristic argument excessively complicated we will not delve into that here.
	  		  	
	  	\end{rmk}
	  	
	   \begin{rmk}[Multiple temperatures] In thermodynamics, we are often interested in scenarios involving baths at different temperatures. One way to deal with this is to work within a resource theory where the background temperature is $T$, and simply treat a second bath at temperature $T'$ as a resource; an alternative is to consider two resource theories and alternate between them.  Arguably, however, finding natural ways to bookkeep resources in the presence of multiple temperatures is an interesting direction in which to extend the framework. For some results in this direction, see Ref.~\cite{sparaciari2017resource}.
	  \end{rmk}
	  
	    \subsubsection*{Why only energy-preserving unitaries are free?}
	    
	  Let us now discuss evolutions. From quantum theory we know that isolated systems evolve unitarily, and we can assume that the composite system $SB$ (system+environment) is isolated or nearly so, once all relevant systems have been included. Consider the action of such unitary $U$ on $\gamma_S \otimes \gamma_B$. One can note that if $U \gamma_S \otimes \gamma_B U^\dag \neq \gamma_S \otimes \gamma_B$, then the unitary can generate many copies of a non thermal state by acting on $(\gamma_S \otimes \gamma_B)^{\otimes n}$, hence preparing an active state when $n$ is large enough. To prevent this, we require that $U \gamma_S \otimes \gamma_B U^\dag = \gamma_S \otimes \gamma_B$, which can be equivalently written as $ e^{-\beta U(H_S + H_B)U^\dag} = e^{-\beta (H_S + H_B)}$, which implies  $[U,H_S + H_B] = 0$ (here we assumed $\gamma_S \otimes \gamma_B$ to be full rank; also, the eigenvalues of $H_S+H_B$ can be taken wlog all non-negative, so there exists a principal logarithm of $\gamma_S \otimes \gamma_B$). This condition can also be understood as microscopic energy conservation on system plus environment, hence $U$ is called an \emph{energy-preserving unitary}.
	  
	  This relation is sometimes cause of confusion, since in thermodynamics we are used to drive systems through time-dependent Hamiltonians and the above commutation relation typically does not hold. In some cases, this is simply because driving does cost net work, which comes from an implicit classical system, and so should not be a free operation. Furthermore, in some cases one only takes into account the \emph{average} energy cost of the driving, so that in some setups it is natural to assume average energy conservation rather than the stricter $[U,H_S + H_B] = 0$ \cite{skrzypczyk2014work}. Going even further one could consider what happens in the presence of generic violations of the energy condition, but it is then unclear what kind of statements -- if any -- can be inferred from general considerations alone, i.e. without solving the dynamics \cite{shu2018violation}. However, here we will be working within a fully quantised picture, in which all energy fluctuations will be explicitly accounted for by the introduction of quantum systems that play the role of the work repositories. This becomes relevant for small systems, or in machines in which we do not average over a large number of cycles. 
	  
	  Another issue is that the relation $[U,H_S+H_B] = 0$ seems incompatible with strong coupling scenarios. However, the situation is a bit more subtle here. First, the above relation is compatible with some strong coupling scenarios (think of scattering events, for example, identifying $U$ with the scattering matrix  and singling out incoming and outgoing subspaces \cite{janzing2005decomposition}). Second, the relation may be deemed incompatible with situations in which the system remains strongly coupled to the environment at the end of the transformation, or $S$ and $B$ are taken to be strongly coupled at all times. However, 
	   in this situation one can argue that our notion of thermodynamic system $S$ should be adapted accordingly, to include the strongly interacting part of the environment. What is certain is that more work is needed to formalise this procedure in the resource theory context. We will come back to this in the next subsection.

	  \subsection{Definition of the set of Thermal Operations, extensions and restrictions}
	  
	  The previous heuristics lead us to identify the free states with thermal states at the environment temperature, since from any other state one could extract work with energy preserving unitaries (given enough copies). Furthermore, any unitary outside the class of energy preserving unitaries allows to extract work from thermal states. This suggests the following set of allowed operations \cite{janzing2000thermodynamic, brandao2011resource}:
	  \begin{enumerate}
	  	\item Preparing thermal states at a fixed temperature $T$ and with arbitrary Hamiltonian $H_B$,
	  	\item Performing energy-preserving unitaries,
	  	\item Tracing out subsystems.
	  \end{enumerate}
  These can be combined to generate a channel $\mathcal{T}$ on the system $S$ with Hamiltonian $H_S$:
  \begin{equation}
  \label{eq:thermal}
  \mathcal{T}(\rho_S) = \tr{B}{U(\rho_S \otimes \gamma_B)U^\dag},
  \end{equation}
  where $\gamma_B = e^{-\beta H_B} / Z_B$, $Z_B = \tr{}{e^{-\beta H_B}}$, $H_B$ is arbitrary and $U$ is any unitary satisfying $[U, H_S + H_B] =0$. This set of channels are known as Thermal Operations; they are a convex set, as one can verify by introducing appropriate ancillas (Appendix~C of \cite{lostaglio2015quantum}). One can check directly from the definition that the thermal state $\gamma_S =  e^{-\beta H_S} / Z_S$ is a fixed point of every Thermal Operation, i.e. $\mathcal{T}(\gamma_S) = \gamma_S$. 
  A central question of the theory is the following: 
  \begin{centralq}
  Given $\rho_S$, $\sigma_S$, is there a Thermal Operation $\mathcal{T}$ such that $\mathcal{T}(\rho_S) = \sigma_S$?
  \end{centralq}

  While we will focus on Thermal Operations, some remarks are in order. First, one can generate channels with different input and output spaces, by tracing over different degrees of freedom (rather than $B$); second, transformations involving changes in the system Hamiltonian can be introduced by means of an auxiliary system with trivial Hamiltonian that acts as a switch; for example, set $S \equiv S' \otimes T$ ($S'$ being the new system, and $T$ a switch), $H_S = H_{S'}(0) \otimes \ketbra{0}{0}_{ T} + H_{ S'}(1) \otimes \ketbra{1}{1}_{T}$ and restrict to the study of transitions of the form $\rho_{S'} \otimes  \ketbra{0}{0}_{T} \rightarrow \sigma_{S'} \otimes  \ketbra{1}{1}_{T}$. This implicitly encodes the fact that the system Hamiltonian changes from $H_{S'}(0)$ to $H_{S'}(1)$ during the transformation (Appendix H of Ref.~\cite{brandao2011resource}). Another way to encode this, successfully employed to model fluctuation theorems, is to assume, on top of $[U,H_S + H_B]=0$, that $U (\mathbb{I}_{S'} \otimes \ketbra{0}{0}_T \otimes \mathbb{I}_B) = (\mathbb{I}_{S'} \otimes \ketbra{1}{1}_T \otimes \mathbb{I}_B) U$ (Appendix A of \cite{aberg2016fully}). When we put these considerations together, we see that in principle one can deal with situations in which some interaction Hamiltonian $H_{SB}$ is switched on during the process and remains on at the end; then, since we cannot simply trace away a strongly interacting part of the environment, one needs to choose the output space of the channel in a way that includes part of the environment (e.g., by reaction coordinates techniques \cite{nazir2018reaction}). Yet another way to deal with this situation is to generalise the commutation relation $[U,H_S+H_B] = 0$ to $U (H_S + H_{B}) = (H_{S'} + H_{B'}) U$, where $SB$ and $S'B'$ are two different factorizations of the total Hilbert space. This takes into account the fact that our notion of the splitting between system and environment might change during the process.
  
  These are subtleties that is worth mentioning, but we will not discuss them further in this introduction (really, these questions are common to all approaches to small-scale thermodynamics). A complete clarification of these points would be desirable, including the relation between Thermal Operations and the standard approximations adopted in the open quantum systems approach to quantum thermodynamics \cite{davies1974markovian, kosloff2013quantum}. For some partial results, see Section~E of Ref.~\cite{lostaglio2017markovian} and Remark~\ref{rmk:masterequations}.
  
   It is worth making some concluding foundational remarks. Performing operations on a system as mandated by the resource theory framework implicitly assumes the existence of external, classical degrees of freedom out of equilibrium that act as a clock to measure durations. This entails potential extra costs to run the clock~\cite{erker2017autonomous} or the construction of a machine that avoid its use \cite{clivaz2019unifying}. However also note that one might expect on general grounds that such clock systems are dynamically generated in subsystems of the Universe in which thermodynamic laws emerge~\cite{barbour2014identification}.    
  
  \begin{rmk} [An alternative intuition for the thermodynamic resource theory]
  	\label{rmk:alternative}
  	In the previous section we introduced Thermal Operations through a passivity-type argument. An alternative route to a thermodynamic resource theory is the following. As discussed in more detail later (Sec.~\ref{sec:timetranslationsymmetryTO}), Thermal Operations are channels that satisfy two central properties:
  	\begin{enumerate}
  		\item $\mathcal{T}(\gamma_S) = \gamma_S$ (thermal fixed point).
  		\item $\mathcal{T}(e^{-i H_S t} \rho_S e^{i H_S t}) = e^{-i H_S t} \mathcal{T}(\rho_S) e^{i H_S t}$ for all $\rho_S$ (symmetry under time translations).
  	\end{enumerate}
  	The first property is the stability of the thermal state under operations that use no work. Its physical meaning is relatively straightforward. The second property can be shown to be equivalent to the principle that energetic coherence cannot be freely created. That is, for every $\rho_{SA}$,
  	\begin{equation}
  	[\rho_{SA}, H_S + H_A] = 0 \Rightarrow  [\mathcal{T} \otimes \mathcal{I}_A (\rho_{SA}), H_S + H_A]=0,
  	\end{equation}
  	where $\mathcal{I}_A$ is the identity operation on the ancilla (this follows from the structure of the Choi-Jamio\l kowski state of covariant channels, see Appendix A3 of Ref.~\cite{lostaglio2017markovian}). If one accepts the intuition behind these two principles (thermal fixed point and no free creation of energetic coherence), a resource theory can be defined from the set of all channels compatible with them. These channels are known as Thermal Processes or Enhanced Thermal Operations~\cite{cwiklinski2015limitations, gour2017quantum2}. While it is not clear if the theory of Thermal Processes and that of Thermal Operations are exactly equivalent (see conjecture~\ref{conjecture}), all the considerations of these notes apply equally well to both.
  	
  \end{rmk}
  \begin{rmk}
  	\label{rmk:extensionslimitations}
  	[Extensions and restrictions of Thermal Operations].  As with the theory of entanglement, there are larger and smaller sets one may consider, for physical or technical reasons. 
  	
  	Larger sets may be considered for their simpler structure (we have already mentioned Thermal Processes), or as alternative frameworks in which some extra resources are allowed (e.g. Gibbs-preserving channels, i.e. the set of all channels with a thermal fixed point \cite{faist2015gibbs}, or Catalytic Thermal Operations, which allow the use of auxiliary systems that are given back unchanged at the end, see Sec.~\ref{sec:thermodynamicschur}). Another interesting extension is the abovementioned weakening of energy conservation to average energy conservation. This choice leads to a framework where one can perform any transformation that decreases the non-equilibrium quantum free energy $F(\rho_S) = \tr{}{\rho_S H_S} - kT S(\rho_S)$, where $S$ is the von Neumann entropy, $T$ the temperature of the environment and $k$ Boltzmann's constant (this can be inferred from Ref.~\cite{skrzypczyk2014work}, but also derived without their use of a weight system). The main drawback is that fluctuations in the energy source are, by definition, ignored. Depending on the task considered, such fluctuations may be important in the functioning of small devices.  More general violations of the condition of energy conservation should, in general, be properly accounted for. In the resource theory this is often done by the introduction of explicit battery systems, studying the action of free operations on the composite system+battery (see Sec.~\ref{sec:workextractionincoherent}). Note, however, that one can depart from this, if appropriate and relevant for the given setup. For example, in cooling it is control, rather than work expenditure, the main limiting factor and cooling rates/cooling per round are relevant metrics to evaluate the performance of a protocol. In this setting it is then natural to allow non energy-preserving unitaries on the system~\cite{alhambra2018heat, clivaz2019unifying}. Creative uses of the theory presented in these notes, and their combination with other frameworks, will be important in applications.
  	
  	
  	Smaller sets are also often considered, in order to understand what extra limitations arise from further, and hopefully more realistic, physical constraints (e.g. Elementary Thermal Operations \cite{lostaglio2018elementary, mazurek2018decomposability}, or Thermal Operations with a single bosonic mode \cite{hu2019thermal}), or to study if simpler subsets of operations still allow to achieve the same transformations as Thermal Operations (e.g. Coarse Thermal Operations \cite{perry2015sufficient}). One can also notice that the standard open quantum system description of weak coupling interactions with a large thermal environment, developed by Davies \cite{davies1974markovian} and formally defined in Ref.~\cite{roga2010davies}, is included in the set of Thermal Processes; that is, we can understand Thermal Processes as a non-Markovian extension of Davies maps. Hence, another relevant restriction to consider is the Markovian subset of Thermal Operations/Thermal Processes \cite{lostaglio2017markovian}.
  	  
  	We will mostly focus here on Thermal Operations, but many core techniques can be applied to alternative setups as well. Furthermore, extension to multiple conserved quantities are generally straightforward \cite{halpern2016beyond}, even though subtleties arise in the presence of mutually non-commuting conserved quantities \cite{balian1987equiprobability, lostaglio2017thermodynamic, halpern2016microcanonical,guryanova2016thermodynamics}. Finally, it would be interesting to develop a complete resource theoretical approach to moving between different descriptions of a thermodynamic system -- as given by agents with different levels of control over it \cite{jaynes1965gibbs} (see Section~3D of Ref.~\cite{faist2018fundamental} for some results in this direction).
  \end{rmk}

\section{Thermodynamic laws for population}
  
  While the second law is often expressed as a principle informing us that certain processes are \emph{impossible}, following the foundational works of Giles, Lieb and Yngvason we more ambitiously ask: what transformations are \emph{possible}? The thermodynamic laws define constraints encoding the partial order on the set of quantum states under the set of thermodynamically allowed operations.
  Hence, the characterisation of such partial order is the main technical problem we face. We will begin by considering what transformations on the occupations of different energy levels are possible. 
  
  \begin{rmk}[What about heat?] Note that we will focus on the existence of an environment that transforms the system from some given initial state to some target final state. Hence, less focus is put on the heat flows to and from the bath than standard treatments. For a resource theory approach with no free states in which, so to speak, heat is not free, see Ref.~\cite{sparaciari2017resource}.
  	 \end{rmk}

\subsection{Characterisation of thermal stochastic processes} 
 Unless otherwise mentioned, for simplicity we will assume throughout this introduction that $H_S$ is non degenerate, so that \mbox{$H_S = \sum_i E^S_i \ketbra{i}{i}$} with $E^S_1 < E^S_2 < \dots < E^S_{n}$. However, most statements can be trivially extended to Hamiltonians with degeneracies. Note that we will use the term \emph{population} or \emph{occupations} to indicate, given a general quantum state $\rho_S$, the vector $\v{p}$ whose elements are $p_i = \bra{i}\rho_S\ket{i}$.

As a necessary step to characterise the thermal partial order, we need a better way to describe the set of allowed transformations: as it is, these are very implicitly defined by invoking arbitrary energy-preserving unitaries and Hamiltonians of the environment, see Eq.~\eqref{eq:thermal}. The next result gives a much more economic description. We will need to define a set of stochastic matrices called $\emph{Gibbs-stochastic}$. First, recall that a \emph{stochastic matrix} is simply a matrix $G$ of transition probabilities $G_{i|j}$, i.e. $G_{i|j} \in [0,1]$ and $\sum_i G_{i|j} =1$ for all $j$. Now, $G$ is called Gibbs-stochastic if it preserves the Gibbs distribution $\v{g}$, $g_i = e^{-\beta E^S_i}/Z_S$, with $\beta = (kT)^{-1}$. In other words, $G \v{g} = \v{g}$. The central role of Gibbs-stochastic matrices follows from the following theorem \cite{horodecki2013fundamental, korzekwa2016coherence}. Informally, it states that the population dynamics generated by Thermal Operations are all and only the Gibbs-stochastic matrices. Formally,

\begin{thm}[Action of Thermal Operations on population]
	\label{th:thermalgibbs}
	Let $\rho_S$ and $\rho'_S$ denote two quantum states, with corresponding population vectors $\v{p}$ and $\v{p}'$. If there is a Thermal Operation $\mathcal{T}$ such that $\rho'_S = \mathcal{T}(\rho_S)$, then the population vectors are related as follows:
	\begin{equation}
\v{p}'=	G^{\mathcal{T}} \v{p},
	\end{equation}
	where $G^\mathcal{T}$ is the Gibbs-stochastic matrix $G^{\mathcal{T}}_{i|j} = \bra{i}\mathcal{T}(\ketbra{j}{j})\ket{i}$. Conversely, for every $\epsilon > 0$ and Gibbs-stochastic matrix $G$, there exists a Thermal Operation $\mathcal{T}$ that acts on the population as $G^{\mathcal{T}}$ and satisfies \mbox{$\max _{i,j} |G_{i|j} - G^{\mathcal{T}}_{i|j}|\leq \epsilon$}.
\end{thm}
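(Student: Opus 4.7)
The plan is to treat the two directions separately: the forward direction is essentially a direct computation exploiting the energy conservation built into the definition of Thermal Operations, while the converse requires an explicit construction that introduces the $\epsilon$ of the statement.

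For the forward direction, I would start from $\mathcal{T}(\rho_S) = \tr{B}{U(\rho_S \otimes \gamma_B)U^\dag}$ and expand $\rho_S = \sum_{jk} \rho_{jk}\ketbra{j}{k}$ as well as $\gamma_B = \sum_\alpha q_\alpha \ketbra{\alpha}{\alpha}$ in the respective energy eigenbases. The quantity $\bra{i}\mathcal{T}(\rho_S)\ket{i}$ then becomes a sum whose generic summand contains factors $\bra{i,\beta}U\ket{j,\alpha}$ and $\bra{k,\alpha}U^\dag\ket{i,\beta}$. The relation $[U, H_S + H_B]=0$ enforces the selection rules $E^S_i + E^B_\beta = E^S_j + E^B_\alpha$ on the first factor and $E^S_i + E^B_\beta = E^S_k + E^B_\alpha$ on the second, which together force $E^S_j = E^S_k$, hence $j = k$ by non-degeneracy of $H_S$. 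All off-diagonal contributions drop out and what remains is exactly $p'_i = \sum_j G^{\mathcal{T}}_{i|j} p_j$ with $G^{\mathcal{T}}_{i|j} = \bra{i}\mathcal{T}(\ketbra{j}{j})\ket{i}$. Non-negativity and column-normalisation of $G^{\mathcal{T}}$ then follow immediately from $\mathcal{T}$ being CPTP, while Gibbs-preservation $G^{\mathcal{T}} \v{g} = \v{g}$ follows from the already-noted fixed-point property $\mathcal{T}(\gamma_S) = \gamma_S$.

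For the converse, my strategy is to reduce the realisation of an arbitrary Gibbs-stochastic $G$ to the realisation of permutations on an enlarged, effectively maximally mixed space. First, approximate the Gibbs weights by rationals: pick integers $d_i$ with $d_i/D \approx g_i$ to within $\epsilon$, where $D = \sum_i d_i$; this is the sole origin of the $\epsilon$-error in the statement. Next, embed $\ket{i}_S$ as a uniform mixture on a $d_i$-dimensional degenerate subspace of an enlarged $D$-level system, on which the action lifted from $G$ is doubly-stochastic, hence by Birkhoff a convex combination of permutations. Each permutation can then be generated by an energy-conserving swap between the system and a carefully chosen bath whose energy spectrum supplies precisely the level degeneracies needed to make the embedding compatible with $H_S + H_B$ conservation. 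Convexity of the set of Thermal Operations, noted in the text, finally assembles these elementary bricks into a single $\mathcal{T}$ whose population action is within $\epsilon$ of $G$.

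The main obstacle lies in this last construction, namely the explicit design of the bath Hamiltonian and the energy-preserving unitary that realises a single embedded permutation. The bath must contain levels at every energy difference $E^S_i - E^S_j$ called for by $G$, with enough degeneracy that the Boltzmann factors, once multiplied against the embedding multiplicities $d_i$, reproduce the target permutation once one traces $B$ out. This is where the $[U, H_S + H_B] = 0$ constraint bites hardest, and it is also the reason the theorem is only approximate: the generically irrational Gibbs weights $g_i$ can only be matched by rational multiplicities up to some tolerance, and the bound $\max_{i,j}|G_{i|j} - G^{\mathcal{T}}_{i|j}|\leq \epsilon$ is what remains after propagating $|d_i/D - g_i|$ through the Birkhoff decomposition and the bath construction.
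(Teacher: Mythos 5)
Your forward direction is correct and is essentially the paper's argument unpacked at the level of matrix elements: writing out $\bra{i}\mathcal{T}(\rho_S)\ket{i}$ and invoking the selection rules imposed by $[U,H_S+H_B]=0$ to kill the off-diagonal contributions is exactly the content of the commutation relation $\mathcal{D}\circ\mathcal{T}=\mathcal{T}\circ\mathcal{D}$ with the dephasing map that the paper uses; the conclusions about stochasticity and Gibbs-preservation are obtained identically.

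The converse, however, has a genuine gap precisely where you locate it. The step you defer --- designing the bath Hamiltonian and the energy-preserving unitary that realises a single embedded permutation --- is not a technicality to be propagated through error bounds; it is the entire content of the converse, and your outline does not show it can be done. The paper fills this hole with a concrete construction: choose $H_B$ with \emph{exponential degeneracy} $g(E-E^S_i)=g(E)e^{-\beta E^S_i}$, so that within each subspace of fixed total energy $E$ the state $\rho_S\otimes\gamma_B$ is a uniform prefactor times a density matrix containing $d_i(E)=g(E)e^{-\beta E^S_i}$ copies of the eigenvalue $p_i/d_i(E)$; an energy-preserving unitary $U=\oplus_E U_E$ may then permute these eigenvalues arbitrarily within each block, and moving $n_{i|j}(E)$ of them from block $j$ to block $i$ realises $G^{\mathcal{T}}_{i|j}=n_{i|j}(E)/d_j(E)$, which is automatically Gibbs-stochastic. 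Two further points. First, your Birkhoff detour buys nothing: the population map induced on $S$ by a single permutation of the $D$ embedded levels is a generic rational Gibbs-stochastic matrix, not a permutation of the $n$ physical levels, so realising it as a Thermal Operation requires exactly the same bath construction as realising $G$ directly --- the paper therefore dispenses with Birkhoff and builds one permutation $U_E$ per energy block. Second, your claim that the rational approximation of $\v{g}$ is the \emph{sole} source of $\epsilon$ is too optimistic: even for rational $\v{g}$ the achievable entries $n_{i|j}(E)/d_j(E)$ are rationals with bounded denominator, so approximating an arbitrary real Gibbs-stochastic $G$ still requires taking the degeneracy $g(E)$ large, and one must additionally arrange the same matrix to act in every total-energy block; the $\epsilon$ in the statement absorbs all of this.
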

\begin{proof}
	Let $\mathcal{D}$ be the quantum map that removes every off-diagonal element of a quantum state in the energy eigenbasis, called \emph{dephasing} map. Note that
	\begin{equation}
	\label{eq:dephasing}
	\mathcal{D}(\rho_S) = \frac{1}{s} \int_{0}^{s} e^{-i H_S t} \rho_S e^{iH_S t}dt,
	\end{equation}
	for an appropriately large $s$ (potentially one needs to take the limit $s \rightarrow \infty$). Then, from the general expression in Eq.~\eqref{eq:thermal}, using $e^{i H_B t} \gamma_B e^{-i H_B t} = \gamma_B$ and $[U, H_S + H_B] = 0$, one can verify $\mathcal{D} \circ \mathcal{T} = \mathcal{T} \circ \mathcal{D}$. This relation provides $\mathcal{D}(\rho'_S) =  \mathcal{T}[\mathcal{D}(\rho_S)]$. $\mathcal{D}(\rho_S)$ ($\mathcal{D}(\rho'_S)$) is a diagonal matrix whose nonzero elements are $\v{p}$ ($\v{p}'$). Note that here we are using that $H_S$ is non-degenerate, but if there was any degeneracy we could transform the dephased state into a diagonal matrix by unitaries commuting with $H_S$, which are Thermal Operations. Then, using $p_i = \bra{i} \mathcal{D}(\rho_S) \ket{i}$, $p'_j = \bra{j} \mathcal{D}(\rho'_S) \ket{j}$,
	\begin{equation}
	p'_j = \bra{j}\mathcal{T} \left(\sum_i p_i \ketbra{i}{i}\right)\ket{j} = \sum_i p_i \bra{j}\mathcal{T} (\ketbra{i}{i})\ket{j} = \sum_i  G^{\mathcal{T}}_{j|i} p_i.
	\end{equation}
	  Since $\mathcal{T}$ is trace-preserving and positive, $G^\mathcal{T}$ is stochastic. Furthermore, we can verify from Eq.~\eqref{eq:thermal} that $\mathcal{T}(\gamma_S) = \gamma_S$. Since $\gamma_S$ is a diagonal matrix with elements $\v{g}$, we have $G^{\mathcal{T}} \v{g} = \v{g}$, i.e. $G^\mathcal{T}$ is Gibbs-stochastic. This concludes the first part of the proof.
	
	The converse is based on an explicit construction of a Thermal Operation. Let $G$ be a target Gibbs-preserving matrix. Since for any Thermal Operation $\mathcal{T}$ the induced Gibbs preserving matrix $G^\mathcal{T}$ only depends on the action of $\mathcal{T}$ on the diagonal elements of $\rho_S$,
	without loss of generality take $\rho_S$ of the form
	\begin{equation}
	\rho_S = \sum_{i} p_i \ketbra{i}{i}.
	\end{equation} 
	Taking $H_B = \sum_j \sum_{g=1}^{g(E^B_j)} E^B_j\ketbra{E^B_j}{E^B_j}$, where $g(E_j^B)$ is the degeneracy of energy $E_j^B$, the environment state will be
	\begin{equation}
	\gamma_B = \frac{1}{Z_B} \sum_j e^{-\beta E^B_j} \sum_{g=1}^{g(E^B_j)} \ketbra{E^B_j,g}{E^B_j,g}.
	\end{equation}
Then, setting $E = E^S_i + E^B_j$ and summing over $E$ and $E_i^S$ rather than $E_i^S$ and $E_j^B$
	\begin{equation}
	\rho_S \otimes \gamma_B = \sum_E \sum_i \sum_{g=1}^{g(E-E^S_i)} p_i \frac{e^{-\beta E}}{Z_B}e^{\beta E^S_i} \ketbra{E,i,g}{E,i,g}.
	\end{equation}
	Now \emph{assume} exponential degeneracy $g(E-E^S_i) = g(E) e^{-\beta E^S_i}$. Then, setting $d_i(E) := g(E)e^{-\beta E^S_i}$,
		\begin{equation}
	\rho_S \otimes \gamma_B = \sum_E \frac{e^{- \beta E}}{Z_B}g(E) \ketbra{E}{E} \otimes \sum_i \sum_{g=1}^{d_i(E)} \frac{p_i}{d_i(E)} \ketbra{i,g}{i,g}.
	\end{equation}
 Note that Thermal Operations allow any energy-preserving unitary on every subspace of constant energy $E$, i.e. \mbox{$U = \oplus U_E$} with $U_E$ arbitrary unitary on the subspace of energy $E$. In particular,  $\sum_i \sum_{g=1}^{d_i(E)} \frac{p_i}{d_i(E)} \ketbra{i,g}{i,g}$ is a quantum state and has blocks of $d_i(E)$ copies of each eigenvalue $p_i/d_i(E)$. We can choose $U_E$ to be any permutation of these. If $n_{i|j}(E)$ is the number of eigenvalues transferred from block $j$ to block $i$, we have the conditions $\sum_i n_{i|j}(E) = d_j(E)$ and $\sum_j n_{i|j}(E) = d_i(E)$. The state after the unitary is then
	\begin{equation}
	 \sum_E \frac{e^{- \beta E}}{Z_B}g(E) \ketbra{E}{E} \otimes \sum_i \sum_j \frac{n_{i|j}(E)}{d_j(E)} p_j \ketbra{i,g}{i,g}.
	\end{equation}
	Let $G^{\mathcal{T}}_{i|j}(E) = \frac{n_{i|j}(E)}{d_j(E)}$ be the matrix acting on the population $\v{p}$. From the conditions on $n_{i|j}(E)$ one can verify that $G^{\mathcal{T}}(E)$ is Gibbs-stochastic: $\sum_i G^{\mathcal{T}}_{i|j}(E)  =1$, $\sum_j G^{\mathcal{T}}_{i|j}(E)e^{-\beta E_j^S} = \sum_j n_{i|j}(E)/g(E) = e^{-\beta E^S_i}$. Furthermore, by taking $g(E)$ large enough we can achieve any rational approximation of any set of transition probabilities giving rise to a Gibbs-stochastic matrix. We conclude that $G^{\mathcal{T}}(E)$ can be an arbitrary Gibbs-stochastic matrix, up to an arbitrarily small error $\epsilon$. By appropriate choices of $d_i(E)$ and $n_{i|j}(E)$ for each $E$, we can make sure the same Gibbs-stochastic matrix is applied within every block of total energy $E$. Hence, we conclude that an arbitrary Gibbs-stochastic matrix $G$ can be arbitrarily well approximated by $G^{\mathcal{T}}$.  
	
\end{proof}

Thanks to the previous theorem, we can answer the question of the existence of a Thermal Operation mapping the population $\v{p}$ of $\rho_S$ into the population $\v{p}'$ of $\rho'_S$ by tackling the question: when is there a Gibbs-stochastic matrix $G$ mapping $\v{p}$ into $\v{p}'$? Note that Gibbs-stochastic maps are only approximated arbitrarily well by Thermal Operations, but this is not of concern; it simply means that if there is $G$ Gibbs-stochastic such that $G \v{p} = \v{p}'$, then for every $\epsilon>0$ there is a Thermal Operation $\mathcal{T}_\epsilon$ such that, if $\v{p}'_\epsilon$ is the population of $\mathcal{T}_\epsilon (\rho_S)$, one has $\| \v{p}'_\epsilon - \v{p}'\| \leq \epsilon$. $\epsilon$-closeness is related to the indistinguishability of the distributions through the notion of total variation distance, that is extended to quantum states by the Holevo-Helstrom theorem \cite{watrous2018theory}. One could also allow for a \emph{fixed} (as opposed to arbitrarily small) $\epsilon >0$, which corresponds to a coarse-graining that extends the set of possible transformations \cite{horodecki2013fundamental, aberg2013truly}.   The bottom line of the above theorem is, in physical terms: \emph{Thermal Operations are equivalent to Gibbs-stochastic matrices if one only looks at the diagonal}.

 \begin{rmk}[What baths are allowed by Thermal Operations?]

The proof of the previous theorem clarifies that we are allowing complete freedom in the choice of the bath. In particular, we have chosen an exponential degeneracy in the states. Physically, this assumption comes from an assumption of infinite heat capacity, which itself can be understood as assuming that the heat bath can have infinite volume (effectively this may require an interaction that lasts infinitely long). This can be understood from the following classical reasoning, that we keep heuristic. Note that, by the standard definition of entropy and setting $k=1$, 
\begin{equation}
g(E + \epsilon) = e^{S(E+\epsilon)} = e^{S(E) + \epsilon \frac{\partial S}{\partial E} + \frac{\epsilon^2}{2} \frac{\partial^2 S}{\partial E^2} + O(\epsilon^3)},
\end{equation}
Now, one defines $\beta = \frac{\partial S}{\partial E}$ and hence $\frac{\partial^2 S}{\partial E^2} = -\frac{1}{T^2} \frac{\partial T}{\partial E}:= -\frac{1}{C T^2}$, where $C$ is the heat capacity, $C = \frac{\partial E}{\partial T}$. Then, $g(E + \epsilon) = g(E)e^{\beta \epsilon} e^{-\epsilon^2/(2C T^2) + o(\epsilon^2)}$. So we see that in the proof we are effectively allowing a bath with infinite heat capacity or, since $C \propto V$ ($V$ volume), with infinite volume. The purpose of this hand-wavy argument is simply to point to the fact that more constrained classes of operations may include further physical restrictions on the properties of the bath that can be accessed, and that within Thermal Operations we make no restriction on the environment beyond its thermality. For example, Gibbs-stochastic matrices $G$ acting only on two energy levels at a time can be realised as $G^\mathcal{T}$ even if the bath is restricted to be a single-mode bosonic thermal state (Appendix B of Ref.~\cite{lostaglio2018elementary}). Studies of finite baths from a resource theory perspective have also been conducted \cite{reeb2014improved, richens2017finite, sparaciari2017resource}.
\end{rmk}
We can move on to use the previous theorem to derive the laws of thermodynamics governing the changes of populations. We will see that, rather elegantly, the result has a strict connection with Nielsen's theorem in the theory of entanglement. 

\subsection{Ordering states: from entropy to majorisation}

\subsubsection{The central theorem on majorisation}

Recall the technical problem we need to solve: given probability distributions $\v{p}$ and $\v{p'}$ (same dimension) give necessary and sufficient conditions for the existence of a Gibbs-stochastic matrix $G$ such that $G \v{p} = \v{p}'$. In order to solve this problem we will take a detour and first solve the infinite temperature limit $\beta \rightarrow 0$, or equivalently the trivial Hamiltonian case $H_S \propto \I$, in which  the fixed point is $\v{g} = \v{1}/n = (1/n,...,1/n)$ \cite{gour2015resource}. Interestingly, we will see that in this limit the partial order that emerges is the same (precisely, the  opposite) compared to that defined on the set of bipartite pure entangled states by Local Operations and Classical Communication (LOCC, see Example~\ref{ex:locc}).

In this section we will then look at the problem of the existence of a stochastic matrix $B$ such that $B \v{p} = \v{p}'$, \mbox{$B \v{1}/n = \v{1}/n$}, for any two probability vectors $\v{p}$, $\v{p}'$. Stochastic matrices satisfying $B \v{1} = \v{1}$, or equivalently non-negative matrices $B$ satisfying $\sum_{i} B_{j|i} = \sum_j B_{j|i}=1$, are known as \emph{doubly-stochastic} or \emph{bistochastic} matrices.  

In thermodynamics we are used to define a `state function', the entropy, to measure the amount of `disorder' in the system. In Jaynes' terms, once we define what variables are under control, we introduce a function measuring our lack of knowledge within the given description \cite{jaynes1957information}. While in equilibrium considerations this function is taken to be the Gibbs/Shannon entropy \cite{jaynes1965gibbs}, we will need a more refined version of that concept. In the setting under consideration, this is captured by the notion \emph{majorisation}. Given a vector $\v{x} \in \R^n$,  denote by $\v{x}^{\downarrow}$ the vector $\v{x}$ sorted in non-increasing order. Then

\begin{defn}[Majorisation]
	\label{def:majorisation}
	$\v{x}$ majorises $\v{y}$, denoted $\v{x} \succ \v{y}$, if and only if
	\be
	\sum_{i=1}^k x^\downarrow_i \geq \sum_{i=1}^k y^{\downarrow}_i, \quad k=1,...,n-1, \quad 	\sum_{i=1}^n x_i = \sum_{i=1}^n y_i.
	\ee
\end{defn}
The relation $\succ$ defines a partial ordering among vectors in $\R^n$.
\begin{ex}
	\label{ex:majorisationfirstexample}
	Let $\v{1}/3 = (1/3,1/3,1/3)$ and $\v{e}_1 = (1,0,0)$. Then for all  probability distributions $\v{x} \in \R^3$, we have $\v{e}_1 \succ \v{x} \succ \v{1}/3$. However, $\succ$ is not a total ordering. Take $\v{y} = (2/3, 1/6, 1/6)$, $\v{z} = (1/2,1/2,0)$. Then, neither $\v{y} \succ \v{z}$ nor $\v{z} \succ \v{y}$. 
\end{ex}
An alternative, more geometrical, definition is easily seen to be equivalent to the previous one:
\begin{defn}[Lorenz curves]
	\label{defn:majorisationcurve}
	Let $L(\v{x})$ be the piecewise linear curve in $\R^2$ obtained by joining the points $\left(k, \sum_{i=1}^k x^{\downarrow}_i\right)$, for $k=1,...,n$. We say that $L(\v{x}) \succ L(\v{y})$ if and only if the curve $L(\v{x})$ all lies not below $L(\v{y})$ and the two curves end at the same height.
\end{defn}
The last requirement becomes trivial if $\v{x}$ and $\v{y}$ are both probability distributions. The curve $L(\v{x})$ is called \emph{Lorenz curve} of $\v{x}$ and it is not difficult to check that $L(\v{x}) \succ L(\v{y})$ if and only if $\v{x} \succ \v{y}$ (see Fig.~\ref{fig:majorisation_ex}).   

	\begin{figure}[h!]
	\centering
	\includegraphics[width=0.4\columnwidth]{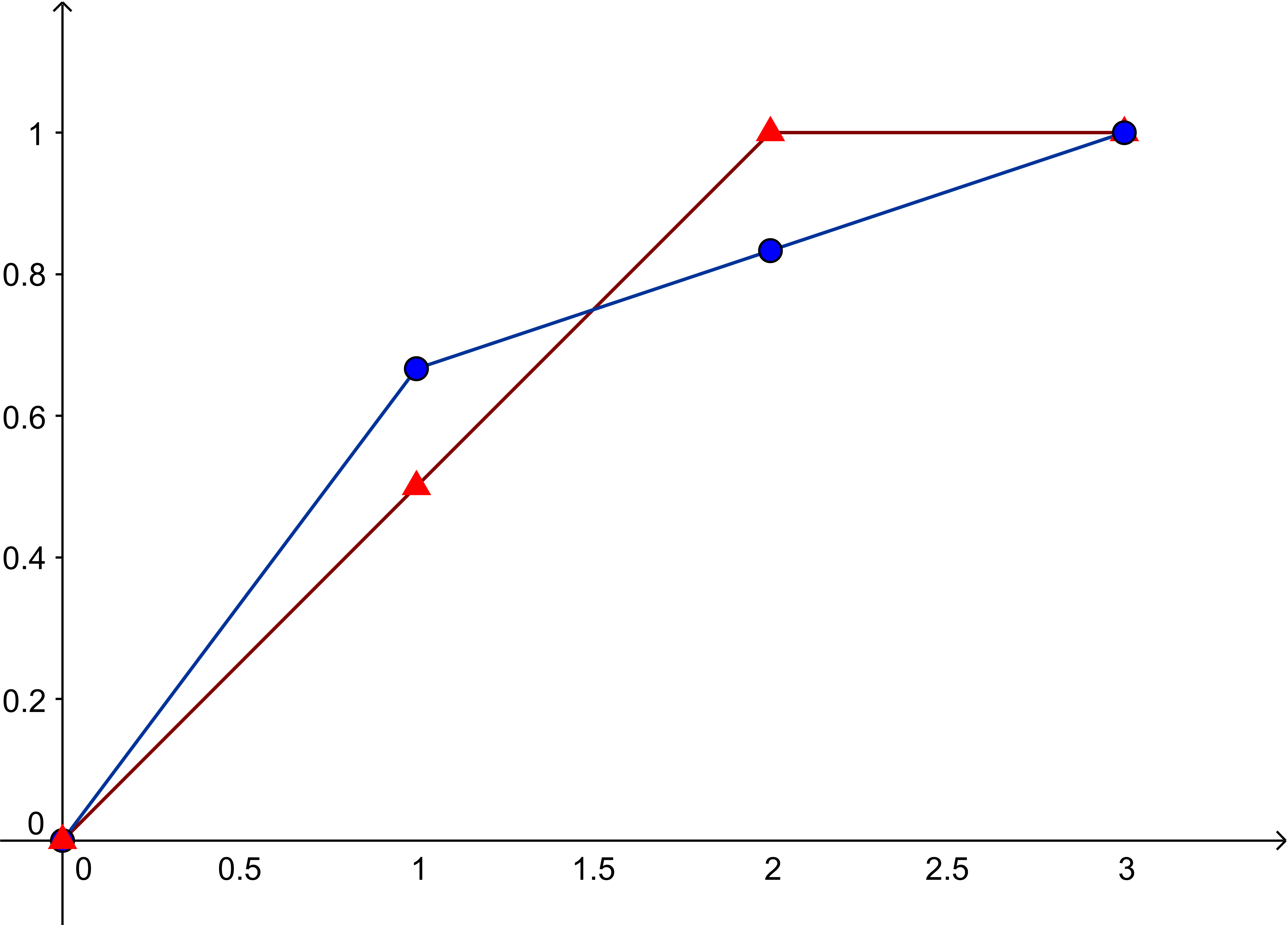}
	\caption[Majorisation curves.]{\label{fig:majorisation_ex} {\bf Majorisation (Lorenz) curves.} Given any probability distribution $\v{x}$, we can define a majorisation curve as in Def.~\ref{defn:majorisationcurve}. A distribution $\v{x}$ majorises $\v{y}$ if and only if the Lorenz curve of $\v{x}$ is all above that of $\v{y}$. Neither of the two distributions of the picture majorises the other, since their corresponding Lorenz curves intersect.} 
\end{figure}
Majorisation is the partial ordering defined by LOCC on the spectra of the reduced states of pure bipartite states:
\begin{thm}[Nielsen \cite{nielsen2010quantum}]
	Let $\ket{\psi}_{AB}$, $\ket{\phi}_{AB}$ be pure, bipartite quantum states. Then there exists a LOCC transformation mapping  $\ket{\psi}_{AB}$ into $\ket{\phi}_{AB}$ if and only if $\psi \prec \phi$, where $\psi$ ($\phi$) is the spectrum of the reduced state, over either $A$ or $B$, of $\ket{\psi}_{AB}$ ($\ket{\phi}_{AB}$).
\end{thm}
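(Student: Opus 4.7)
The plan is to prove both directions of the equivalence using the Schmidt decomposition as the central tool. I would fix $\ket{\psi}_{AB} = \sum_i \sqrt{\psi_i}\ket{i}_A\ket{i}_B$ and $\ket{\phi}_{AB} = \sum_i \sqrt{\phi_i}\ket{i}_A\ket{i}_B$; since local unitaries are LOCC, the Schmidt bases may be aligned. The spectra of the reduced states on $A$ (equivalently $B$) are then the vectors $\psi = (\psi_i)$ and $\phi = (\phi_i)$.

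For necessity, the first step is a structural reduction due to Lo--Popescu: any deterministic LOCC protocol converting a pure state to a pure state can be replaced by a one-round, one-way protocol in which Alice performs a POVM $\{A_k\}$, sends the outcome $k$ to Bob, and Bob applies a correcting unitary $V_k$, so that $(A_k \otimes V_k)\ket{\psi}_{AB} = \sqrt{p_k}\ket{\phi}_{AB}$ for every $k$, with $p_k = \|(A_k \otimes \I)\ket{\psi}_{AB}\|^2$. Tracing out $A$ on both sides gives $V_k \rho_B^{(k)} V_k^\dagger = \rho_B^\phi$, so every conditional reduced state on Bob's side is unitarily equivalent to $\rho_B^\phi$ (and in particular has spectrum $\phi$). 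Since Alice's local measurement leaves Bob's unconditional reduced state unchanged, $\rho_B^\psi = \sum_k p_k \rho_B^{(k)} = \sum_k p_k V_k^\dagger \rho_B^\phi V_k$. Invoking the standard lemma that a convex combination of unitary conjugates of $\sigma$ has spectrum majorized by $\mathrm{spec}(\sigma)$ (equivalently, unital channels decrease majorization), I conclude $\psi \prec \phi$.

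For sufficiency, I would combine Hardy--Littlewood--P\'olya with Birkhoff's theorem: $\psi \prec \phi$ gives a doubly-stochastic matrix $D$ with $\psi = D\phi$, and $D = \sum_k \alpha_k P_k$ is a convex combination of permutation matrices $P_k$ implementing permutations $\pi_k$. In the common Schmidt basis, define Alice's Kraus operators and Bob's correcting unitaries by
\begin{equation}
A_k = \sum_i \sqrt{\frac{\alpha_k\,\phi_{\pi_k^{-1}(i)}}{\psi_i}}\;\ket{\pi_k^{-1}(i)}\!\bra{i}, \qquad V_k\ket{\pi_k(j)} = \ket{j},
\end{equation}
dropping terms where $\psi_i = 0$. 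A direct check gives $\sum_k A_k^\dagger A_k = \I$ on the support of $\rho_A^\psi$ (using $\sum_k \alpha_k \phi_{\pi_k^{-1}(i)} = (D\phi)_i = \psi_i$), and $(A_k \otimes V_k)\ket{\psi}_{AB} = \sqrt{\alpha_k}\ket{\phi}_{AB}$ for every $k$, so every branch of the protocol deterministically produces $\ket{\phi}_{AB}$.

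The main obstacle I expect is the Lo--Popescu reduction opening the necessity direction: compressing a generic multi-round, two-way LOCC into a one-round Alice-measurement-plus-Bob-correction is the only non-algebraic input, and relies on the special structure of deterministic pure-to-pure conversion (intuitively, no information need remain hidden in intermediate mixed branches, so Alice can defer her measurements while Bob needs only a single unitary correction at the end). Once this reduction is granted, the majorization lemma for necessity and the explicit Kraus construction for sufficiency are essentially calculational.
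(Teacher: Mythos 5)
The paper states Nielsen's theorem without proof---it is imported from the entanglement literature purely as motivation for the Hardy--Littlewood--P\'olya theorem that follows---so there is no internal proof to compare against. Your argument is the standard one (Nielsen's original proof combined with the Lo--Popescu reduction) and it is correct. The necessity direction is sound: from $(A_k\otimes V_k)\ket{\psi}_{AB}=\sqrt{p_k}\ket{\phi}_{AB}$ you correctly get $\rho_B^{\psi}=\sum_k p_k V_k^\dagger\rho_B^{\phi}V_k$, and the Uhlmann/Ky Fan fact that mixing unitary conjugates can only push the spectrum down in the majorisation order gives $\psi\prec\phi$. The sufficiency construction also checks out, including the edge case: if $\psi_i=0$ while $\alpha_k>0$, then $(D\phi)_i=\psi_i=0$ forces $\phi_{\pi_k^{-1}(i)}=0$, so the dropped terms carry no amplitude and each branch really lands on $\sqrt{\alpha_k}\ket{\phi}_{AB}$. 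It is worth noting that the two combinatorial ingredients you invoke for sufficiency are precisely what this paper \emph{does} prove: the appendix establishes Hardy--Littlewood--P\'olya and, in the closing remark, that $\v{y}\prec\v{x}$ implies $\v{y}$ lies in the convex hull of permutations of $\v{x}$, so you could cite the appendix rather than Birkhoff's theorem in full generality. The one genuinely external input is the Lo--Popescu compression of a multi-round LOCC protocol into a single Alice-measurement-plus-Bob-unitary round; you are right to flag it as the non-algebraic step, but as written it is an unproved black box---a complete write-up should either prove it (e.g.\ by induction on rounds, using the polar decomposition to turn Bob's measurement operators into Alice-side operators followed by local unitaries) or cite it explicitly.
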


It turns out that majorisation also gives the solution to the infinite temperature problem introduced above~\cite{marshall2010inequalities, bhatia2013matrix}:

\begin{thm}[Hardy, Littlewood, Polya]
	\label{thm:majorisationbistochastic}
	$\v{x} \succ \v{y}$ if and only if there exists a stochastic matrix $B$ satisfying 
	\be
	\label{eq:Mbistochastic}
	B\v{x} = \v{y}, \quad B\v{1} = \v{1}.
	\ee 
\end{thm}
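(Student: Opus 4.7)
The two directions call for very different techniques, so I would treat them separately.

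For the easy ($\Leftarrow$) direction, I would invoke Birkhoff's theorem: every doubly-stochastic matrix $B$ is a convex combination of permutation matrices, $B = \sum_k \lambda_k P_k$ with $\lambda_k \geq 0$ and $\sum_k \lambda_k = 1$. Then $\v{y} = \sum_k \lambda_k (P_k \v{x})$. To check the majorisation inequalities, fix $k$ and let $S \subseteq \{1,\ldots,n\}$ be the set of indices of the $k$ largest components of $\v{y}$. Since each $P_k\v{x}$ is a permutation of $\v{x}$, any sum of $k$ of its entries is at most $\sum_{i=1}^k x_i^\downarrow$, so
\begin{equation}
\sum_{i=1}^k y_i^\downarrow = \sum_{i\in S} y_i = \sum_k \lambda_k \sum_{i\in S}(P_k\v{x})_i \leq \sum_k \lambda_k \sum_{i=1}^k x_i^\downarrow = \sum_{i=1}^k x_i^\downarrow.
\end{equation}
The equality of total sums follows from $B$ being column-stochastic. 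If I wanted to avoid Birkhoff, I could instead observe that $\v{x}\mapsto\sum_{i=1}^k(B\v{x})_i^\downarrow$ is convex and symmetric in the components of $\v{x}$ whenever $B$ is doubly stochastic, and apply a direct Schur-convexity argument; but Birkhoff gives the cleanest presentation.

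For the harder ($\Rightarrow$) direction, I would build $B$ explicitly as a product of \emph{T-transforms} (Robin-Hood transfers), i.e.\ doubly stochastic matrices of the form $T = (1-t)\I + t\, P_{ij}$ with $t \in [0,1]$ and $P_{ij}$ the transposition of coordinates $i,j$. Without loss of generality assume both $\v{x}$ and $\v{y}$ are already sorted in non-increasing order (otherwise pre- and post-compose with permutations, which are themselves doubly stochastic). If $\v{x}=\v{y}$ take $B=\I$. Otherwise let $j$ be the smallest index with $x_j > y_j$ and $k$ the smallest index $k>j$ with $x_k < y_k$; these exist because $\v{x}\succ\v{y}$, $\v{x}\neq\v{y}$, and the partial sums agree at $n$. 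Choose
\begin{equation}
\delta = \min\{x_j - y_j,\; y_k - x_k\}, \qquad t = \frac{\delta}{x_j - x_k},
\end{equation}
and define $\v{x}' = T\v{x}$ with $T$ the T-transform on coordinates $(j,k)$ with parameter $t$. A direct calculation gives $x'_j = x_j - \delta$ and $x'_k = x_k + \delta$, with other entries unchanged.

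The two key facts I would then verify are: (i) $\v{x}\succ \v{x}'\succ \v{y}$, and (ii) $\v{x}'$ agrees with $\v{y}$ in at least one more coordinate than $\v{x}$ does (namely coordinate $j$ or $k$, whichever attained the minimum in the definition of $\delta$). Claim (i) reduces to checking the partial-sum inequalities at the affected indices $j,\ldots,k-1$, where one uses $x'_i = x_i$ and the choice of $\delta$ to preserve $\v{x}'\succ\v{y}$; the inequality $\v{x}\succ\v{x}'$ is immediate since averaging two coordinates can only decrease partial sums of the sorted vector. Claim (ii) guarantees termination in at most $n$ steps, after which the iterated product $B = T_N T_{N-1}\cdots T_1$ satisfies $B\v{x} = \v{y}$ and is doubly stochastic as a product of doubly stochastic matrices.

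The main obstacle is the bookkeeping in step (i): after applying $T$, the vector $\v{x}'$ may no longer be sorted in non-increasing order, so one has to either re-sort (absorbing an extra permutation into $B$) or argue more carefully that the partial sums $\sum_{i=1}^m {x'}_i^\downarrow$ still dominate $\sum_{i=1}^m y_i^\downarrow$ for every $m$. The cleanest way I know is to note that $(x'_j, x'_k)$ is obtained from $(x_j, x_k)$ by moving mass inwards, so their contribution to any initial segment of the sorted partial sums is weakly decreased, while all other entries are fixed; this preserves every partial-sum inequality against the sorted target $\v{y}$.
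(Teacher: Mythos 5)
Your proof is correct, and for the hard direction it is the same basic idea as the paper's: build $B$ as a product of T-transforms. The bookkeeping differs, though. You fix one coordinate of agreement per step (inducting on the number of indices where $\v{x}$ and $\v{y}$ differ, the classical Muirhead/Marshall--Olkin argument), whereas the paper matches $y_1$ exactly with a single T-transform on coordinates $1$ and $k$, truncates the first entry, and inducts on the dimension $n$; the truncation neatly sidesteps the re-sorting headache you flag, since the reduced vectors $\tilde{\v{z}},\tilde{\v{y}}$ are handed to the induction hypothesis as fresh inputs. For the easy direction you invoke Birkhoff's theorem, which is clean but imports a nontrivial result; the paper instead gives a short self-contained computation (setting $t_i=\sum_{j=1}^k B_{j|i}\in[0,1]$ with $\sum_i t_i=k$ and bounding $\sum_i t_i x_i-\sum_{i=1}^k x_i$ directly), which is preferable in a self-contained set of notes --- and in fact the paper's forward direction yields the decomposition of the constructed $B$ into a convex combination of permutations as a by-product, so leaning on Birkhoff for the converse would be somewhat backwards. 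One small caution on your final paragraph: the statement that moving mass inwards ``weakly decreases the sorted partial sums, which preserves the inequality against $\v{y}$'' is not by itself a reason --- a decrease threatens the lower bound rather than protecting it. What actually saves the step is what you say earlier: for $j\le m<k$ one has $\sum_{i=1}^m x_i-\sum_{i=1}^m y_i\ge x_j-y_j\ge\delta$ (using $x_i=y_i$ for $i<j$ and $x_i\ge y_i$ for $j<i<k$), so the unsorted partial sum of $\v{x}'$ still dominates $\sum_{i=1}^m y_i$, and the sorted partial sum can only be larger. With that substitution your argument is complete.
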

  	\begin{proof} Since we will need to use this result, to maintain these notes self-contained we give the proof in the Appendix (see also \cite{marshall2010inequalities}, Chapter~2).
 	 	\end{proof}

Note that the partial ordering defined by the notion of accessibility through doubly-stochastic matrices is exactly the same (but inverted) with respect to pure bipartite entangled states under LOCC.

\subsubsection{Entropies vs majorisation}

 The fact that the uniform distribution $\v{1}/n$ is a fixed point can be intuitively understood noticing that this is the state of maximum Shannon entropy, and a process $M$ with $M \v{1} \neq \v{1}$ would decrease entropy. However, it is fruitful to make the connection between majorisation and the concept of entropy more precise. To do so, we define a natural class of functions -- those that preserve the partial order structure of majorisation:
 \begin{defn}
 	A function $f: \R^n \rightarrow \R$ is called Schur-convex if and only if
 	\be
 	\v{x} \succ \v{y} \Rightarrow f(\v{x}) \geq f(\v{y}),
 	\ee
 	and Schur-concave if and only if $	\v{x} \succ \v{y} \Rightarrow f(\v{x}) \leq f(\v{y}))$.
 \end{defn}
 $f$ is a homomorphism  from the partially ordered set $(\R^n,\succ)$ to the totally ordered set of real numbers. In fact, we can think of each Schur-concave $f$ as a possible entropy functional, since any physical process with the uniform distribution as a fixed point will not decrease it. Unsurprisingly, each $f$ can only capture some aspects of the partial order. In the context of resource theories, $f$ is called a \emph{monotone} under the set of allowed operations (in this case, doubly-stochastic maps). Note that, in terms of capturing the partial order induced by majorisation, two Schur-concave (or convex) functions $f$ and $\tilde{f}$ such that $\tilde{f} = g \circ f$, with $g$ strictly non increasing function,  ought to be regarded as equivalent, since $f(\v{x}) \geq f(\v{y}) \Leftrightarrow \tilde{f}(\v{x}) \geq \tilde{f}(\v{y})$ for all $\v{x}$ and $\v{y}$. 
 
  \begin{rmk} [Schur-concave functions toolbox] 	\label{rmk:schurconcavetoolbox} Schur-concave functions on $(\R^n, \succ)$ can be constructed from concave functions on $\R$. Let $h: \R \rightarrow \R$ be concave (convex). Then using Theorem~\ref{thm:majorisationbistochastic} and the concavity of the function $f$, one can show that $f(\v{x}) = \sum_{i=1}^n h(x_i)$ is Schur-concave (Schur-convex). Examples of such functions include the Shannon entropy and all the R\'enyi entropies \cite{renyi1961measures}.
\end{rmk}

 The reason why majorisation is a more refined concept than entropy may be understood through an example. Consider the Shannon entropy, \mbox{$H(\v{x}) = - \sum_{i=1}^n x_i \log x_i$}. Since $-x\log x$ is concave, $H(\v{x})$ is Schur-concave (see Remark~\ref{rmk:schurconcavetoolbox}). So, $H(\v{x}) \leq H(\v{y})$ is a necessary condition for $\v{x} \succ \v{y}$. However, it is \emph{not} sufficient, as the following example shows:
 
 \begin{ex}
 	Consider $\v{y}$ and $\v{z}$ from Example~\ref{ex:majorisationfirstexample}. From Theorem~\ref{thm:majorisationbistochastic}, no stochastic process with the uniform distribution as a fixed point can map $\v{z}$ into $\v{y}$, even though \mbox{$H(\v{z})<H(\v{y})$}. 
 \end{ex}
 
 When coupled to Theorem~\ref{thm:majorisationbistochastic}, this tells us something important. The decrease of $H(\v{x})$, while necessary, does \emph{not} guarantee the existence of a `mixing process' represented by a doubly-stochastic map. Doubly stochastic maps can be regarded as mixing processes because, due to Birkhoff's theorem, [\cite{marshall2010inequalities}, Chapter 2], they coincide with the convex mixtures of permutations. $\v{x} \succ \v{y}$ gives stronger constraints than $H(\v{x}) \leq H(\v{y})$; in fact, it gives exactly the constrains that guarantee the existence of a mixing process mapping $\v{x}$ into $\v{y}$. There are hence many `entropies': a prominent example are the abovementioned R\'enyi entropies \cite{renyi1961measures}
\begin{equation}
\label{eq:renyientropies}
H_\alpha(\v{x}) = \frac{\rm sgn(\alpha)}{1-\alpha} \log \sum_i x^\alpha_i, 
\end{equation}
that can be proven to be Schur-concave using Remark~\ref{rmk:schurconcavetoolbox} and elementary properties of the functions $x \mapsto x^\alpha$. However, not all entropies are on the same foot. In fact, the Shannon entropy $H(\v{x})$ is distinguished as the unique monotone when we discuss the interconversion between a large number of copies of the initial state to a large number of copies of the final state:
\begin{equation}
\label{eq:asymptoticrate}
\v{x}^{\otimes N} \rightarrow \v{y}^{\otimes M}.
\end{equation}
When an $\epsilon$-error in $\ell_1$ norm is allowed, from typicality arguments the maximum achievable ratio $M/N$ equals $\frac{\log n - H(\v{x})}{\log n - H(\v{y})}$, where $n$ is the dimension of $\v{x}$ and $\v{y}$. In a precise sense, all entropies `converge' to the Shannon entropy, see Lemmas 65-67 of Ref.~\cite{gour2015resource}.
\begin{rmk}[Intuition on asymptotic conversion rates]
	\label{rmk:asymptoticinterconvesionTinfinity}
	To get some intuition on asymptotic interconversion rates, take $\v{x} = (x,1-x)$ (call the two states $0$ and $1$) and write down the distribution $\v{x}^{\otimes N}$. When $N\rightarrow +\infty$ the distribution over the number of zeros in the string becomes a Gaussian arbitrarily sharply peaked around $x N$ zeros. Consider only the strings with a number of zero equal to $x N$. These are $\binom{N}{xN}$; using Stirling's approximation one can show these are $ \approx e^{H(\v{x})N}$, so that $\v{x}^{\otimes N}$ is well approximated by a uniform distribution over this number of typical strings. Similarly one can reason on $\v{y}$. Adding zeros to make the two distributions of equal dimension and applying the majorisation condition one finds the largest ratio $M/N$ such that a transition between these two approximate versions of $\v{x}$ and $\v{y}$ is possible.
	\end{rmk}

\subsection{Ordering non-equilibrium: from free energy to thermo-majorisation} 

\subsubsection{A Nielsen's theorem for thermal stochastic processes}

We have seen that majorisaton is equivalent to the existence of stochastic processes having the uniform distribution as a fixed point of the dynamics (see Theorem~\ref{thm:majorisationbistochastic}). As discussed, these results can be understood, from a thermodynamic perspective, as being valid when the temperature $T$ of the environment is $T = +\infty$ or the Hamiltonian $H_S$ is trivial. Luckily, these technical results can be extended to any finite temperature and non-trivial Hamiltonian. Conceptually this will lead to a generalised notion of free energy, in the same way in which the previous considerations led us to a generalised notion of entropy.

The basic tool will be the embedding map introduced in Ref.~\cite{brandao2013second}, which loosely speaking one can understand as connecting the microcanonical and macrocanonical ensembles (see Ref.~\cite{egloff2015measure}, Appendix A). Assume the thermal state $\v{g}$ is a vector of rational numbers, i.e. there exists $d_1,...,d_n \in \mathbb{N}$:
\be
\label{eq:rationalgibbs}
\v{g} = \left(\frac{d_1}{D}, ...,\frac{d_n}{D}\right),
\ee
where $D:= \sum_{i=1}^n d_i$. Of course, any irrational $\v{g}$ can be approximated to an arbitrary precision as in Eq.~\eqref{eq:rationalgibbs} (we will ignore here some technicalities and assume that $\v{g}$ is rational). Then, if $\v{d} := (d_1,...,d_n)$, we define
\begin{defn}[Embedding map]
	\label{defn:embeddingmap}
	$\Gamma_{\v{d}}:\R^n \rightarrow \R^D$ is the function
	\be
	\label{eq:embeddingmap}
	\Gamma_{\v{d}}(\v{x}) := \left(\underbrace{\frac{x_1}{d_1},...,\frac{x_1}{d_1}}_{d_1-\textrm{times}},...,\underbrace{\frac{x_n}{d_n},...,\frac{x_n}{d_n}}_{d_n-\textrm{times}}\right):=\oplus_i x_i \v{1}_i/d_i,
	\ee
	with $\v{1}_i/d_i$ a $d_i$-dimensional uniform distribution.
\end{defn}
By definition, from Eq.~\eqref{eq:rationalgibbs} it follows $\Gamma_{\v{d}}(\v{g}) = \v{1}/D$, where $\v{1}/D$ is the $D-$dimensional uniform distribution. The basic idea is to map dynamics with a thermal fixed point to dynamics with a uniform fixed point in a larger space. The (left) inverse of $\Gamma_{\v{d}}$ is the map  $\Gamma^{-1}_{\v{d}}:\R^D \rightarrow \R^n$ defined by
\be
\Gamma^{-1}_{\v{d}}(\v{p})= \v{x},
\ee
where $x_i = \sum^{j_{i}}_{j=j_{i-1}+1} p_j$, $j_i = \sum_{k=0}^{i} d_k$ and $d_0 := 0$, for $i=1,...,n$. This simply amounts to taking the various blocks on the right-hand side of Eq.~\eqref{eq:embeddingmap} and summing over the elements within each block. Then, $ \Gamma^{-1}_{\v{d}}(\Gamma_{\v{d}}(\v{x})) = \v{x}$ for all $\v{x} \in \R^n$ (but, conversely, $\Gamma_{\v{d}}(\Gamma^{-1}_{\v{d}}(\v{x}))$ is not the identity on $\R^D$). 

The embedding is a bridge to majorisation, as the following lemma shows (see Fig.~\ref{fig:bridgelemma}):
\begin{lem}
	\label{lem:embeddingmajorisation}
	There exists a Gibbs-stochastic map $G$ such that $G\v{x} = \v{y}$ if and only if $\Gamma_{\v{d}}(\v{x}) \succ \Gamma_{\v{d}}(\v{y})$. 
\end{lem}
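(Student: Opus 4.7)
The plan is to use Theorem~\ref{thm:majorisationbistochastic} as the bridge: we know majorisation is equivalent to the existence of a bistochastic map, so the whole task reduces to building a bistochastic matrix on $\R^D$ from a Gibbs-stochastic matrix on $\R^n$, and vice versa, in a way that is compatible with the embedding $\Gamma_{\v{d}}$. Throughout, I will use indices $(i,a)$ with $i \in \{1,\dots,n\}$ and $a \in \{1,\dots,d_i\}$ to label coordinates in the embedded space $\R^D$, so that $[\Gamma_{\v{d}}(\v{x})]_{(i,a)} = x_i/d_i$.

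For the direction ``$\Rightarrow$'', I would start from a Gibbs-stochastic $G$ with $G\v{x} = \v{y}$ and define the ``copied and rescaled'' matrix
\[
B_{(i,a)|(j,b)} := \frac{G_{i|j}}{d_i}.
\]
A direct computation then shows $B \Gamma_{\v{d}}(\v{x}) = \Gamma_{\v{d}}(\v{y})$, since the $(i,a)$ component equals $\sum_j G_{i|j} x_j/d_i = y_i/d_i$. Non-negativity is immediate. Column sums give $\sum_{i,a} B_{(i,a)|(j,b)} = \sum_i G_{i|j} = 1$. For the row sums, the key input is the Gibbs-stochasticity: $\sum_j G_{i|j} g_j = g_i$ with $g_j = d_j/D$ translates into $\sum_j G_{i|j} d_j = d_i$, so $\sum_{j,b} B_{(i,a)|(j,b)} = \sum_j d_j G_{i|j}/d_i = 1$. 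Hence $B$ is doubly stochastic and Theorem~\ref{thm:majorisationbistochastic} delivers $\Gamma_{\v{d}}(\v{x}) \succ \Gamma_{\v{d}}(\v{y})$.

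For the direction ``$\Leftarrow$'', from majorisation and Theorem~\ref{thm:majorisationbistochastic} I obtain some bistochastic $B$ on $\R^D$ with $B \Gamma_{\v{d}}(\v{x}) = \Gamma_{\v{d}}(\v{y})$. The generic $B$ need not have the block structure used above, so I would coarse-grain it by defining
\[
G_{i|j} := \frac{1}{d_j} \sum_{a=1}^{d_i} \sum_{b=1}^{d_j} B_{(i,a)|(j,b)}.
\]
Non-negativity is clear, and column-stochasticity follows by swapping the order of summation and using that columns of $B$ sum to $1$. To check Gibbs-preservation it is convenient to check $\sum_j G_{i|j} d_j = d_i$: this sum equals $\sum_{j,a,b} B_{(i,a)|(j,b)}$, and the bistochasticity of $B$ (row sums equal $1$) then gives exactly $d_i$. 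Finally, $G\v{x}=\v{y}$ follows from the identity $(G\v{x})_i = \sum_a [B\Gamma_{\v{d}}(\v{x})]_{(i,a)} = \sum_a y_i/d_i = y_i$.

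The only place a conceptual obstacle could hide is the backward construction: one must resist the temptation to invert $\Gamma_{\v{d}}$ coordinatewise, since the image of $\Gamma_{\v{d}}$ is not preserved by an arbitrary bistochastic $B$. The coarse-graining above is the right replacement because it averages over the source block (dividing by $d_j$) and sums over the target block, which is precisely what makes both the normalisation and the Gibbs-preservation fall out of the two marginal conditions defining bistochasticity. Once this construction is in place, both directions are short algebraic verifications.
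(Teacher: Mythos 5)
Your proposal is correct and follows essentially the same route as the paper: the paper sets $B = \Gamma_{\v{d}} \circ G \circ \Gamma^{-1}_{\v{d}}$ and $G = \Gamma^{-1}_{\v{d}} \circ B \circ \Gamma_{\v{d}}$ and invokes Theorem~\ref{thm:majorisationbistochastic}, and your explicit matrices $B_{(i,a)|(j,b)} = G_{i|j}/d_i$ and $G_{i|j} = \tfrac{1}{d_j}\sum_{a,b} B_{(i,a)|(j,b)}$ are exactly these compositions written in coordinates. Your write-up simply supplies the stochasticity and Gibbs-preservation checks that the paper leaves as ``one can verify''.
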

\begin{proof}
	One can verify that a Gibbs-stochastic mapping $G$ from $\v{x}$ to $\v{y}$ exists if and only if there exists a doubly-stochastic map $B$ transforming $\Gamma_{\v{d}}(\v{x})$ into $\Gamma_{\v{d}}(\v{y})$. To see this, simply  define $B = \Gamma_{\v{d}} \circ G \circ \Gamma^{-1}_{\v{d}}$ if $G$ is given or $G =  \Gamma^{-1}_{\v{d}} \circ B \circ \Gamma_{\v{d}}$ if $B$ is given (note that the composition of stochastic matrices is stochastic). Using Theorem~\ref{thm:majorisationbistochastic}, the result follows.
\end{proof}

\begin{figure}[h!]
	\centering
	\includegraphics[width=0.8\columnwidth]{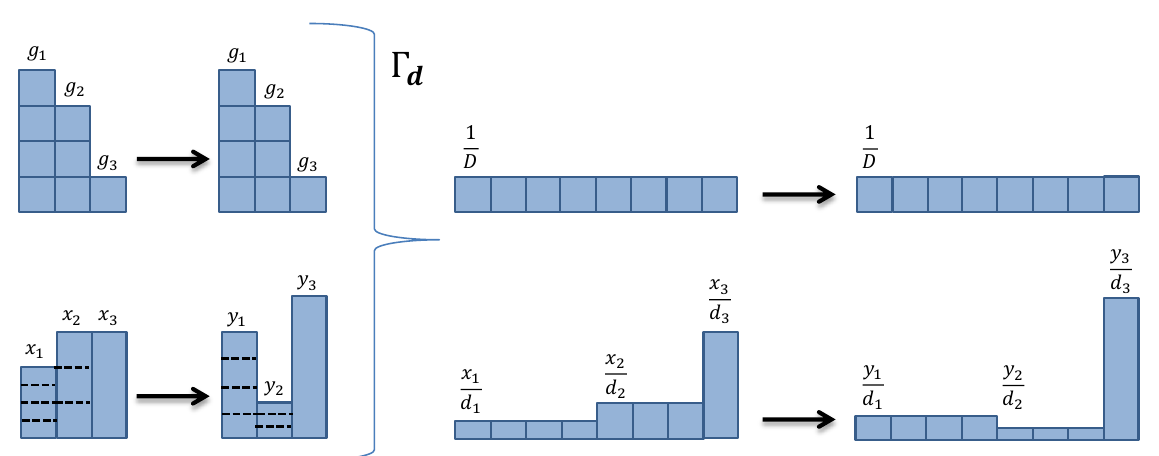}
	\caption{\label{fig:bridgelemma} {\bf The bridge lemma.} The question of the existence of a stochastic map transforming the vector $\v{x}=(x_1,x_2,x_3)$ into the vector $\v{y}=(y_1,y_2,y_3)$ while leaving a vector $\v{g}=(g_1,g_2,g_3)$ unchanged is equivalent to the question of the existence of a stochastic map transforming the vector $\Gamma_{\v{d}}(\v{x}) =(x_1/d_1,...,x_1/d_1,x_2/d_2,...,x_2/d_2,x_3/d_3,...,x_3/d_3)$ into $\Gamma_{\v{d}}(\v{y}) = (y_1/d_1,...,y_1/d_1,y_2/d_2,...,y_2/d_2,y_3/d_3,...,y_3/d_3)$ while leaving the uniform vector $(1/D,...,1/D)$ unchanged. Here $\v{d} = (d_1,d_2,d_3) = (4,3,1)$, $D = 8$, since $g_1 = 1/2$, $g_2 = 3/8$, $g_3 = 1/8$. The picture shows with dashed line the division of each $x_i$ into $d_i$ parts, and the construction of the embedded distribution $x_i/d_i$ from them.} 
\end{figure}

From this lemma we can define a relation that generalises the notion of majorisation. In particular, we will see how the condition that the embedded distribution $\Gamma_{\v{d}}(\v{x})$ majorises $\Gamma_{\v{d}}(\v{y})$ can be rephrased as a \emph{thermo-majorisation} condition involving only $\v{x}$ and $\v{y}$.

Let $x^{\downarrow \beta}_i$ be the so-called $\beta-$ordering of $\v{x}$, defined as the rearrangement of the indices $i$ such that the vector $x_i/g_i$ is sorted in non-increasing order. In other words, $x^{\downarrow \beta}_i = x_{\pi(i)}$, where $\pi$ is the permutation ensuring $x_{\pi(1)}/g_{\pi(1)} \geq x_{\pi(2)}/g_{\pi(2)} \geq \dots \geq x_{\pi(n)}/g_{\pi(n)}$. 
\begin{ex}
	\label{ex:betaorder}
	Consider the Hamiltonian with spectrum $E^S_1 = 0$, $E^S_2 = 1$, $E^S_3 =2$ and let $\beta = 1.2$. Hence, \mbox{$\v{g} = (0.718436, 0.216389, 0.0651751)$}. Let \mbox{$\v{x} = (1/3,1/3,1/3)$} and \mbox{$\v{y} = (2/3,1/3,0)$}. Then $\v{x}^{\downarrow \beta} = (x_3,x_2,x_1)$ and $\v{y}^{\downarrow \beta} = (y_2,y_1,y_3)$.
\end{ex}
Then define \cite{horodecki2013fundamental}
\begin{defn}[Thermo-majorisation curves]
	\label{defn:thermomajorisation}
	Let $T(\v{x})$ be the piecewise linear curve in $\R^2$ obtained by joining the origin and the points $\left(Z_S \sum_{i=1}^k g_{\pi(i)}, \sum_{i=1}^k x_{\pi(i)}\right)$, for $k=1,...,n$ (where $\pi$ is the permutation that $\beta$-orders $\v{x}$). $T(\v{x})$ is called thermo-majorisation curve of $\v{x}$. We say that $\v{x}$ thermo-majorises $\v{y}$, denoted $\v{x} \succ_g \v{y}$, if and only if the curve $T(\v{x})$ all lies not below $T(\v{y})$ and the two curves end at the same height.
\end{defn}

As with Lorenz curves, the last requirement is trivial if $\v{x}$ and $\v{y}$ are probability distributions. Note that thermo-majorisation curves are the same as Lorenz curves when $\v{g}$ is uniform (in fact, in the mathematics literature thermo-majorisation is known as majorisation relative to $\v{g}$, or $d$-majorisation \cite{marshall2010inequalities}). Are there a `top' and `bottom' states in the thermo-majorisation ordering? One can verify that the state $(0,\dots,0,1)$ (sharp state with largest energy) is the top, and the thermal state $\v{g}$ is the bottom, in the sense that $(0,\dots,0,1) \succ_g \v{x} \succ_g \v{g}$ for all $n$-dimensional probability distributions $\v{x}$.

If $\v{d}$ is the vector related to $\v{g}$ by Eq.~\eqref{eq:rationalgibbs}, the following lemma holds:
\begin{lem}
	\label{lem:bridge}
	$\Gamma_{\v{d}}(\v{x}) \succ \Gamma_{\v{d}}(\v{y})$ if and only if $\v{x} \succ_g \v{y}$.
\end{lem}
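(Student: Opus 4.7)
The plan is to show that the Lorenz curve of the embedded distribution $\Gamma_{\v{d}}(\v{x})$ and the thermo-majorisation curve $T(\v{x})$ coincide up to a single horizontal rescaling that is independent of $\v{x}$; since such a rescaling preserves the ``lies above'' relation, the equivalence will follow immediately by invoking Definition~\ref{defn:majorisationcurve} on the left and Definition~\ref{defn:thermomajorisation} on the right.

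Concretely, I would first analyse how to sort $\Gamma_{\v{d}}(\v{x})$ in non-increasing order. Since each block in $\Gamma_{\v{d}}(\v{x})$ has constant value $x_i/d_i$, and $d_i = D g_i$, sorting by value is the same as sorting by $x_i/g_i$, which is precisely the $\beta$-ordering given by the permutation $\pi$ of Definition~\ref{defn:thermomajorisation}. Within a single block the values are equal, so the Lorenz curve $L(\Gamma_{\v{d}}(\v{x}))$ is a straight segment across that block; consequently, $L(\Gamma_{\v{d}}(\v{x}))$ is piecewise linear with vertices exactly at
\begin{equation*}
\Bigl(\sum_{i=1}^k d_{\pi(i)},\, \sum_{i=1}^k x_{\pi(i)}\Bigr) = \Bigl(D \sum_{i=1}^k g_{\pi(i)},\, \sum_{i=1}^k x_{\pi(i)}\Bigr), \qquad k=0,1,\dots,n.
\end{equation*}
Comparing with Definition~\ref{defn:thermomajorisation}, the thermo-majorisation curve $T(\v{x})$ has exactly the same vertices after applying the horizontal rescaling $u \mapsto (D/Z_S)\, u$. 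Both curves are piecewise linear between the vertices, so they are related globally by this single affine horizontal stretch.

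The key observation then is that the rescaling factor $D/Z_S$ depends only on $\v{d}$ (equivalently on the Hamiltonian and temperature), and not on $\v{x}$ or $\v{y}$. Thus applying the same rescaling to both $L(\Gamma_{\v{d}}(\v{x}))$ and $L(\Gamma_{\v{d}}(\v{y}))$ produces $T(\v{x})$ and $T(\v{y})$, and since both rescalings are identical, pointwise inequality between the two Lorenz curves is equivalent to pointwise inequality between the two thermo-majorisation curves. (Both curves start at the origin, both end at the common point $(D,1)$ before rescaling and $(Z_S,1)$ after, so the endpoint condition in Definition~\ref{defn:majorisationcurve} is automatically matched.)

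The main thing to be careful about, and which I expect to be the only mildly non-trivial step, is justifying that between block boundaries the Lorenz curve of $\Gamma_{\v{d}}(\v{x})$ truly is linear so that the comparison reduces to the finitely many block-boundary vertices; this is where the uniform-within-block structure of the embedding is essential, and it is what rules out any ``hidden'' kinks that could disrupt the coincidence with the thermo-majorisation curve. Once that is observed, the rest is a direct chain: $\Gamma_{\v{d}}(\v{x}) \succ \Gamma_{\v{d}}(\v{y})$ $\Leftrightarrow$ $L(\Gamma_{\v{d}}(\v{x}))$ lies above $L(\Gamma_{\v{d}}(\v{y}))$ $\Leftrightarrow$ $T(\v{x})$ lies above $T(\v{y})$ $\Leftrightarrow$ $\v{x} \succ_g \v{y}$.
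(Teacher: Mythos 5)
Your proposal is correct and follows essentially the same route as the paper's proof: identify the non-increasing sorting of the embedded vector with the $\beta$-ordering of $\v{x}$, observe that the Lorenz curve is linear within each block so only the block-boundary (``elbow'') vertices matter, and note that these vertices coincide with those of $T(\v{x})$ up to a horizontal rescaling by a factor independent of the state, which therefore cannot affect the comparison. Nothing further is needed.
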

\begin{proof}
	Sorting in decreasing order the $D$-dimensional probability distributions $\Gamma_{\v{d}}(\v{x})$ and $\Gamma_{\v{d}}(\v{y})$ corresponds to $\beta-$ordering the $n$-dimensional probability distributions $\v{x}$ and $\v{y}$. Then we can use that $\Gamma_{\v{d}}(\v{x}) \succ \Gamma_{\v{d}}(\v{y})$ if and only if $L(\Gamma_{\v{d}}(\v{x})) \succ L(\Gamma_{\v{d}}(\v{y}))$. If we remove from the points used to construct $L(\Gamma_{\v{d}}(\v{x}))$ all the non-extremal points that lie on a segment of given slope (the ``non-elbow'' points), we obtain the same Lorenz curve, see Fig.~\ref{fig:elbows}. In particular, instead of joining all points $\left(k, \sum_{i=1}^k (x_i/d_i)^{\downarrow}\right)$, \mbox{$k=1,..D$}, we can just join the points at the ``elbows''; i.e., if $\pi$ is the permutation that $\beta$-orders $\v{x}$, define $k_s = \sum_{i=1}^s d_{\pi(i)}$ and join $\left(k_s, \sum_{i=1}^{k_s} \Gamma^{\downarrow}_{\v{d}}(\v{x})_i\right)$, $s=1,...,n$ (as well as the origin). But $\left(k_s, \sum_{i=1}^{k_s} \Gamma^{\downarrow}_{\v{d}}(\v{x})_i\right) = \left(D\sum_{i=1}^s g_{\pi(i)}, \sum_{i=1}^s x_{\pi(i)} \right)$, which is the same as $T(\v{x})$ apart for a rescaling of the $x$-axis. Repeat the same reasoning for $L(\Gamma_{\v{d}}(\v{y}))$ and notice that the rescaling of the $x$-axis is the same for all curves (and hence it does not affect comparisons). Hence we conclude $L(\Gamma_{\v{d}}(\v{x})) \succ L(\Gamma_{\v{d}}(\v{y}))$ if and only if $T(\v{x})) \succ T(\v{y})$, i.e.  $\v{x} \succ_g \v{y}$, which concludes the proof.
\end{proof}

\begin{figure}[h!]
	\centering
	\includegraphics[width=0.5\columnwidth]{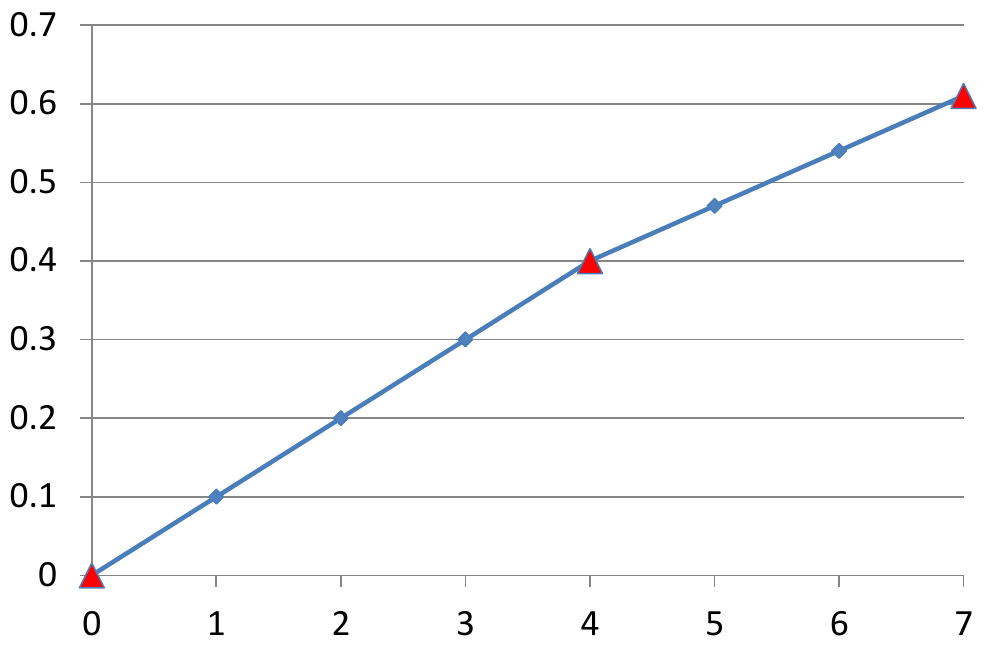}
	\caption{\label{fig:elbows} {\bf Removing non-elbow points.} Given the majorisation curve of $\Gamma_{\v{d}}(\v{x}) $, we can distinguish between elbows points (red triangles) and non-elbow points (blue dots). In this example we took $x_{\pi(1)}/d_{\pi(1)} =0.1$, $d_{\pi(1)} = 4$, $x_{\pi(2)}/d_{\pi(2)} =0.06$  and $d_{\pi(2)} = 3$ (only part of the majorisation curve is presented). One obtains exactly the same curve by connecting only the elbow points.} 
\end{figure}

Putting together all these results
\begin{thm}
	$\v{x} \succ_g \v{y}$ if and only if there exists a stochastic matrix $G$ satisfying
	\begin{equation}
	G \v{x} = \v{y}, \quad G\v{g} = \v{g}.
	\end{equation}
\end{thm}
An equivalent form is given in Theorem~1 of Ref.~\cite{ruch1980generalisation} (the equivalence follows, e.g., from Lemma 2.15 of Ref.~\cite{lostagliothesis}).  As a corollary of the above and Theorem~\ref{th:thermalgibbs}, together with the fact that dephasing in the energy basis, Eq.~\eqref{eq:dephasing}, is a Thermal Operation, we get the following thermal Nielsen's theorem:
\begin{thm}[Thermal Nielsen's theorem]
\label{thm:thermalnielsen} 
If $\v{x}$ is the population of $\rho_S$, there exist Thermal Operations $\mathcal{T}_\epsilon$ such that $\mathcal{T}_\epsilon(\rho_S)$ has population arbitrarily close to $\v{y}$ if and only if $\v{x} \succ_g \v{y}$. 
\end{thm}

\begin{rmk}
	\label{rmk:simplextension}
	This result fully solves the interconversion problem whenever $[\rho_S, H_S] = 0$. In fact, if $[\sigma_S, H_S] = 0$, then to verify if for every $\epsilon> 0$ there exists $\mathcal{T}_\epsilon$ with $\mathcal{T}_\epsilon(\rho_S) \approx_\epsilon \sigma_S$ we simply check if the population of $\rho_S$ thermo-majorises the population of $\sigma_S$; and if $[\sigma_S, H_S] \neq 0$ we know that no Thermal Operation exists, because the commutation of Thermal Operations with the dephasing map gives $[\mathcal{T}(\rho_S), H_S] = 0$ for every Thermal Operation $\mathcal{T}$. We will come back to the question of adding superpositions into the picture in Section~\ref{sec:coherence}.
\end{rmk}


\subsubsection{Free energy vs thermo-majorisation, `second laws' and catalysis}
\label{sec:thermodynamicschur}

In the same way in which we defined Schur-concave and convex functions as those preserving the majorisation ordering, we can define functions that preserve the thermo-majorisation ordering. In the absence of a generally agreed name for such functions, we call them thermodynamic Schur-concave functions (or $\v{g}$-Schur-concave functions for short):
\begin{defn}
	A function $f: \R^n \rightarrow \R$ is called $\v{g}$-Schur-convex (respectively, $\v{g}$-Schur-concave) if and only if
	\be
	\v{x} \succ_g \v{y} \Rightarrow f(\v{x}) \geq f(\v{y}) \quad (\textrm{respectively,} \; \; f(\v{x}) \leq f(\v{y})).
	\ee
\end{defn}
If Schur-concave functions are akin to entropies, thermodynamic Schur-convex functions are akin to free energies, each capturing some aspect of the ordering.

\begin{rmk}[$\v{g}$-Schur-convex functions toolbox]
	\label{rmk:gschurtoolbox}
	As before, we can give a tool to construct $\v{g}$-Schur-concave functions on $(\R^n, \succ)$ from concave functions on $\R$: Let $h: \R \rightarrow \R$ be concave (convex). Then the function $f$
	\be
	f(\v{x}) = \sum_{i=1}^n g_i h\left(\frac{x_i}{g_i}\right),
	\ee
	also known as $f$-divergence, is \v{g}-Schur-concave (\v{g}- Schur-convex). 
		We prove the statement for $h$ convex (the other case is the same). As we have seen, $\v{x} \succ_g \v{y}$ if and only if $y_i = \sum_{j=1}^n G_{i|j} x_j$, with \mbox{$\sum_{j=1}^n G_{i|j} \frac{g_j}{g_i} = 1$} and \mbox{$\sum_{i=1}^n G_{i|j} =1$}. Then,
		{\small
			\be
			\nonumber
			f(\v{y}) = \sum_{i=1}^n g_i h\left(\sum_{j=1}^n G_{i|j} \frac{x_j}{g_i}\right) =  \sum_{i=1}^n g_i h\left(\sum_{j=1}^n \left[G_{i|j}\frac{g_j}{g_i}\right] \frac{x_j}{g_j}\right) \leq \sum_{j=1}^n g_j h\left(\frac{x_j}{g_j}\right) = f(\v{x}).
			\ee
		}
\end{rmk}

Following the same discussion given for majorisation, one can argue that thermo-majorisation is a more refined concept than the standard constraint of decreasing the (non-equilibrium) free energy. This can be seen as follows. Define \mbox{$F(\v{x}) = U(\v{x}) - k T H(\v{x})$}, where \mbox{$U(\v{x}) = \sum_i x_i E^S_i$} is the average energy. Because $x \log x$ is convex, $F(\v{x})$ is $\v{g}$-Schur-convex. Hence, if there exists a Gibbs-stochastic map transforming $\v{x}$ into $\v{y}$, i.e., if $\v{x} \succ_g \v{y}$, we must have $F(\v{x}) \geq F(\v{y})$. However, the decrease of the free energy $F$ does \emph{not} guarantee the existence of such physical process, as the following example shows:

\begin{ex}
	\label{ex:freenergydownbut}
	Consider the Hamiltonian and states of Example~\ref{ex:betaorder}. Then $F(\v{x}) \approx 0.084 >F(\v{y}) \approx -0.197$. Nevertheless, $T(\v{x})$ crosses $T(\v{y})$, so there is no map transforming $\v{x}$ into $\v{y}$ while leaving $\v{g}$ fixed. To see this, recall that $\v{x}^{\downarrow \beta} = (x_3,x_2,x_1)$ and $\v{y}^{\downarrow \beta} = (y_2,y_1,y_3)$. So, to obtain $T(\v{x})$ we need to join $\{(0,0),(e^{-2.4},1/3), (e^{-2.4 } + e^{-1.2},2/3), \\ (e^{-2.4} + e^{-1.2} +1,1)\}$; and to get $T(\v{y})$ we need to join $\{(0,0),(e^{-1.2},1/3),(e^{-1.2}+1,1),(e^{-2.4}+ e^{-1.2}+1,1)\}$ (see Fig.~\ref{fig:thermomajorisation_ex}).
	
\end{ex}

\begin{figure}[h!]
	\centering
	\includegraphics[width=0.5\columnwidth]{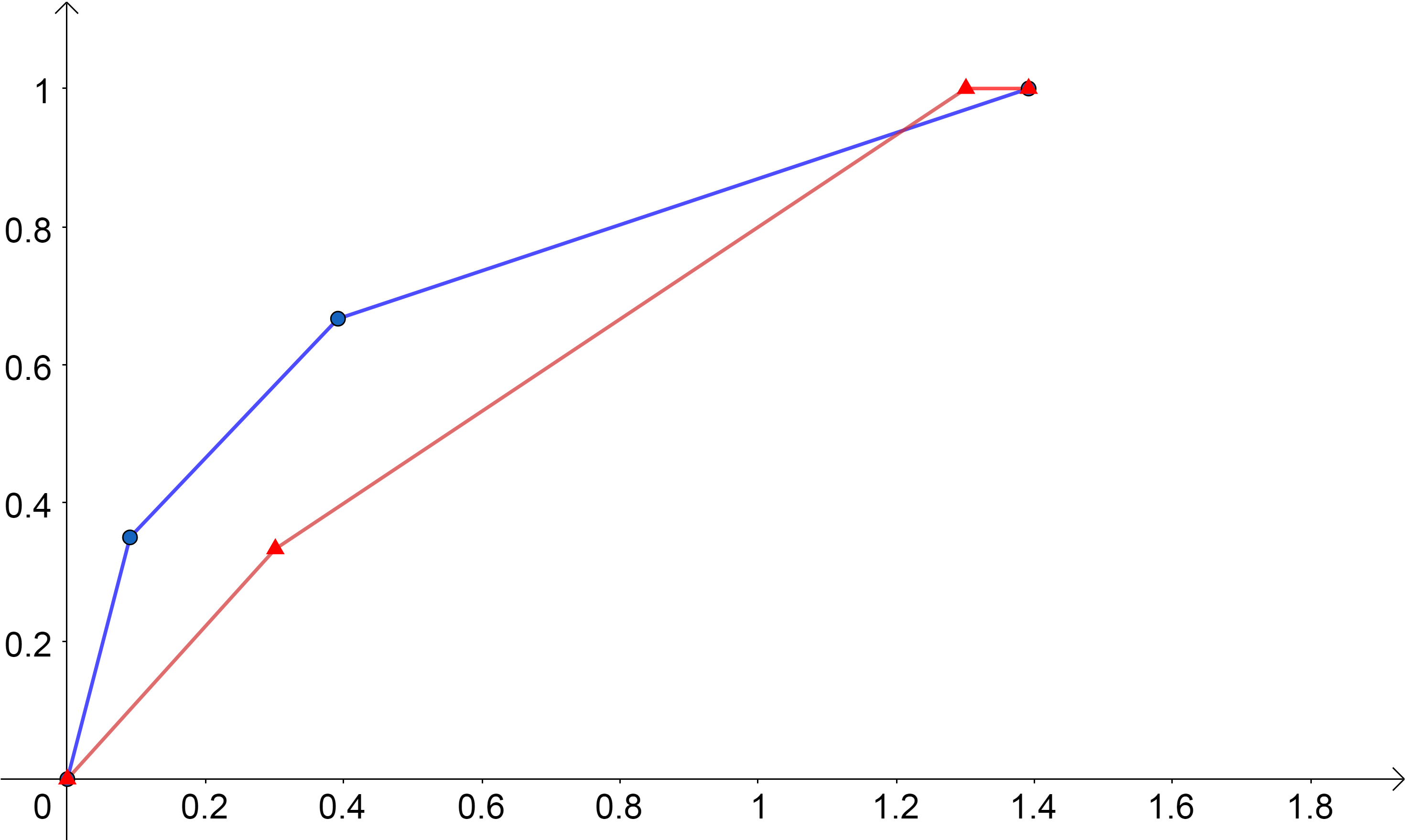}
	\caption[Thermo-majorisation curves vs free energy.]{\label{fig:thermomajorisation_ex} {\bf Thermo-majorisation curves vs free energy.} The thermo-majorisation curves of $\v{x}$ and $\v{y}$ from Example~\ref{ex:freenergydownbut}, denoted by $T(\v{x})$ and $T(\v{y})$, respectively. $T(\v{x})$ is the curve connecting the blue dots, whereas $T(\v{y})$ is obtained connecting the red triangles. Despite $F(\v{x})> F(\v{y})$, $T(\v{x})$ does not lie all above $T(\v{y})$. Hence, there is no stochastic process $G$ with $G\v{g} = \v{g}$ and $G\v{x} = \v{y}$ ($\v{g}$ here is the Gibbs distribution).} 
\end{figure}

\begin{rmk}[$\alpha$-free energies]
		The \mbox{$\alpha$-free} energy of $\v{x}$ is defined as
		\be
		\label{eq:alphafreeenergy}
		F_{\alpha}(\v{x}) = - kT \log Z_S + k T S_{\alpha}(\v{x}\|\v{g}), \quad S_{\alpha}(\v{x}\|\v{y}) =  \frac{{\rm sgn}(\alpha)}{\alpha-1}\log \sum_i x^{\alpha}_i y^{1-\alpha}_i,
		\ee
		where $S_{\alpha}$ are the so-called $\alpha$-R{\'e}nyi divergences \cite{renyi1961measures}.
		The cases $\alpha\in\{-\infty,0,1,+\infty\}$ are defined via suitable limits (see e.g.~\cite{brandao2013second}):
		\be
		\nonumber
		S_{\infty}(\v{x}\|\v{g}) = \log \max_i x_i/g_i, \quad S_1(\v{x}\|\v{g}) = \sum_i x_i \log (x_i/g_i), 
		\ee
		\be
		\label{eq:zerorenyi}
		S_0(\v{x}\|\v{g}) = - \log \sum_{i | x_i \neq 0} g_i, \quad S_{-\infty}(\v{x}\|\v{g}) = S_\infty(\v{g}\|\v{x}).
		\ee
	In particular notice that $F_1(\v{x}) = F(\v{x})$, as defined above, and $F_\alpha(\v{g}) = -kT \log Z_S$ for all $\alpha$. All $F_{\alpha}$ must monotonically decrease under Gibbs-stochastic maps (as one can derive using Remark~\ref{rmk:gschurtoolbox}). Also note that a direct calculation shows that $F_\alpha$ are related to the R\'enyi entropies $H_\alpha$ by the embedding map introduced above \cite{brandao2013second}:
	\begin{equation}
	\label{eq:relation}
F_{\alpha}(\v{x}) + kT \log Z_S = kT(\log D - H_\alpha(\Gamma_{\v{d}}(\v{x}))).
	\end{equation} 
\end{rmk}
From the thermal Nielsen's theorem we get that if there is a Thermal Operation mapping $\rho_S$ into $\sigma_S$, with populations $\v{x}$ and $\v{y}$, then we necessarily have
\begin{equation}
\label{eq:secondlaws}
F_{\alpha}(\v{x}) \geq F_{\alpha}(\v{y}), \quad \forall \alpha \in \R.
\end{equation}
Note that these are not sufficient (thermo-majorisation imposes stricter conditions). The decrease of all $\{F_\alpha\}$, together with $F_{\rm Burg}(\v{x}) = kT S(\v{g}\| \v{x}) - kT \log Z_S$, becomes sufficient to the existence of a physical map between diagonal states only when \emph{catalysts} are allowed, i.e. when one can introduce states that aid the transformation without being degraded in the process. This was the main result of Ref.~\cite{brandao2013second}: 
\begin{thm}[ ``Second laws'' are sufficient in the presence of catalysts]
	Suppose $\v{x}$ and $\v{y}$ have full support and $\v{x} \neq \v{y}$. Then there exists a state $\v{c}$ with 
	\begin{equation}
	\v{x} \otimes \v{c} \succ_g \v{y} \otimes \v{c} 
	\end{equation} 
	if and only if $F_{\alpha}(\v{x}) > F_{\alpha}(\v{y})$, $\forall \alpha \in \R \backslash\{0\}$ and $F_{\rm Burg}(\v{x}) > F_{\rm Burg}(\v{y})$. $H_C$ can be chosen to be trivial.
\end{thm}

\begin{proof}
	
	

	$F_{\rm \alpha}(\v{x})> F_{\rm \alpha}(\v{y})$ for all $\alpha \in \R \backslash \{0\}$ if and only if $H_{\alpha}(\Gamma_{\v{d}}(\v{x}))< H_{\alpha}(\Gamma_{\v{d}}(\v{y}))$ for all $\alpha \in \R \backslash \{0\}$ follows immediately from Eq.~\eqref{eq:relation}. Furthermore, define the Burg entropy $H_{\rm Burg}(\v{p}):= -1/d \sum_i \log p_i$, where $d$ is the dimension of $\v{p}$. Then,
		\begin{equation}
		H_{\rm Burg}( \Gamma_{\v{d}}(\v{x})) = -S(\v{g}\| \v{x}) + \log D
		\end{equation}
		immediately implies that  $F_{\rm Burg}(\v{x}) > F_{\rm Burg}(\v{y})$ if and only if $H_{\rm Burg}(\Gamma_{\v{d}}(\v{x})) < H_{\rm Burg}( \Gamma_{\v{d}}(\v{y}))$. 
	We also have that, since $\v{x}$, $\v{y}$ have full support and $\v{x} \neq \v{y}$, then  $\Gamma_{\v{d}}(\v{x})$, $\Gamma_{\v{d}}(\v{y})$ have full support and $ \Gamma_{\v{d}}(\v{x}) \neq \Gamma_{\v{d}}(\v{y})$. 
	
	A highly non-trivial result proved by Klimesh and Turgut \cite{klimesh2007inequalities, turgut2007catalytic}
	says that these conditions on the embedded populations are equivalent to the existence of $\v{c}$ such that $ \Gamma_{\v{d}}(\v{x}) \otimes \v{c} \succ  \Gamma_{\v{d}}(\v{y}) \otimes \v{c}$. One can always take $H_C = \mathbb{I}$, so the above is equivalent to $\Gamma_{\v{d}}(\v{x}) \otimes \Gamma_{\v{1}}(\v{c}) \succ  \Gamma_{\v{d}}(\v{y}) \otimes \Gamma_{\v{1}}(\v{c})$, with $\v{1}$ the all ones vector. One can directly verify that $\Gamma_{\v{d}}(\v{x}) \otimes \Gamma_{\v{d}'}(\v{y}) = \Gamma_{\v{d} \otimes \v{d}'}(\v{x} \otimes \v{y})$ for general $\v{x}$,$\v{y}$, $\v{d}$, $\v{d}'$, so the above is equivalent to $\Gamma_{\v{d} \otimes \v{1}}(\v{x}\otimes \v{c}) \succ  \Gamma_{\v{d}\otimes \v{1}}(\v{y}\otimes \v{c})$. From the bridge lemma~\ref{lem:bridge}, this is equivalent to $\v{x} \otimes \v{c} \succ_{g} \v{y} \otimes \v{c}$.  
\end{proof}

From the thermal Nielsen's theorem \ref{thm:thermalnielsen}, these transformations can be approximated arbitrarily well by Thermal Operations. Also note  that $F_{\rm Burg}$ grows unboundedly when a not full rank distribution is approached; in fact, $F_{\rm Burg}$ has been linked to the unattainability of perfect cooling \cite{janzing2000thermodynamic, wilming2017third}. Also, by looking at transformations in which the output is only required to be $\epsilon$-close to the target (with $\epsilon >0$ \emph{arbitrarily} small), one can eliminate any finite number of conditions and make the inequalities non strict \cite{brandao2013second}. 

\begin{rmk}[Asymptotic rates and extensions]
	As in Eq.~\eqref{eq:asymptoticrate}, we can consider the asymptotic limit of a large number $N$ of uncorrelated or weakly correlated particles, which is intuitively analogue to a thermodynamic limit. Then the optimal rate $R$ at which the transformation $\v{x}^{\otimes N}$ to $\v{y}^{RN}$ is possible under Thermal Operations (with negligible error in the $N\rightarrow \infty$ limit) is $R = (kT \log Z + F(\v{x}))/(kT \log Z + F(\v{y}))$ \cite{brandao2011resource}, which extends Remark~\ref{rmk:asymptoticinterconvesionTinfinity} to finite temperatures. The result holds also for general quantum states as input and outputs, under some extra assumptions (the availability of an ancilla with a coherent superposition over a number of energy levels sublinear in $N$ \cite{brandao2011resource}). An extension of this result drops the assumption that thermal states are freely available, and shows that the average energy and the von Neumann entropy are naturally distinguished measures to determine what asymptotic transitions are allowed \cite{sparaciari2017resource} (once again, a `sublinear' number of ancillas are used). Furthermore, corrections to the $N\rightarrow \infty$ rate were worked out in Ref.~\cite{chubb2017beyond}.  
\end{rmk}

\subsubsection{Application: work extraction and work of formation for incoherent states}
\label{sec:workextractionincoherent}

A \emph{deterministic} work extraction process is one in which we are able to charge up a battery system with certainty. The battery can be conveniently modelled as a two-level system with Hamiltonian $H_W = W \ketbra{1}{1}$, initialised in state $ \ketbra{0}{0}_W$ (even though this is not the only choice, see e.g. Appendix I2 of Ref.~\cite{brandao2013second} and Ref.~\cite{gemmer2015from}). Given $\rho_S$ with Hamiltonian $H_S = \sum_{i=1}^n E^S_i \ketbra{E^S_i}{E^S_i}$, the aim is to maximise $W$ such that the transition $\rho_S \otimes \ketbra{0}{0}_W \rightarrow \gamma_S \otimes \ketbra{1}{1}_W$ is allowed by Thermal Operations. Note that we took without loss of generality the final state of the system to be thermal, since one can always thermalise $S$ to such state at the end of the work extraction protocol.

If the initial state $\rho_S$ is diagonal in the energy eigenbasis, with population $\v{x}$, this problem is mapped to a classical one: finding the largest $W$ such that
\begin{equation}
\v{x} \otimes (1,0)_W \succ_g \v{g} \otimes (0,1)_W,
\end{equation}
where $W$ is the energy of the upper state of the battery. Such optimal $W$ is also called \emph{work of distillation} and denoted by $W_{\rm det}$. To compute $W_{\rm det}$ we will make use of the following lemma (see Fig.~\ref{fig:rescaling})):
\begin{lem}
	\label{lem:rescaling}
	If $H_S = \sum_{i=1}^n E^S_i \ketbra{E^S_i}{E^S_i}$ and $H_W = W \ketbra{1}{1}$, for any $\v{y}$ state of $S$ the thermo-majorisation curve of \mbox{$\v{y} \otimes (0,1)_W$} (battery excited) is a compression along the $x$-axis by a factor $e^{-\beta W}$ of the thermo-majorisation curve of $\v{y} \otimes (1,0)_W$ (battery in ground state). 
\end{lem}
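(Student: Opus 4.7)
The plan is to compute the $\beta$-ordering and the elbow points of the two thermo-majorisation curves directly and observe that the non-trivial parts differ only by a uniform horizontal rescaling.

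First I would fix notation. The joint system $SW$ has Hamiltonian $H_{SW} = H_S \otimes \mathbb{I}_W + \mathbb{I}_S \otimes H_W$, whose eigenstates $\ket{E_i, s}$ ($s \in \{0,1\}$) have energies $E_i + s W$ and unnormalised Boltzmann weights $e^{-\beta(E_i + sW)}$. The partition function is $Z_{SW} = Z_S(1+e^{-\beta W})$. For the state $\v{y} \otimes (1,0)_W$ (battery in the ground state) the joint population is $p^{(0)}_{i,s} = y_i \,\delta_{s,0}$, while for $\v{y}\otimes(0,1)_W$ (battery excited) it is $p^{(1)}_{i,s} = y_i \, \delta_{s,1}$.

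Next I would work out the $\beta$-orderings. In the ground-state case the ratio $p^{(0)}_{i,s}/g_{i,s}$ vanishes on the $s=1$ block and equals $y_i/(e^{-\beta E_i}/Z_{SW})$ on $s=0$; hence the non-zero indices are sorted exactly in the $\beta$-order $\pi$ of $\v{y}$ (with the $s=1$ block trailing in arbitrary order, since these contribute nothing to the curve's height). In the excited case the ratio is $y_i/(e^{-\beta(E_i+W)}/Z_{SW}) = e^{\beta W}\cdot y_i/(e^{-\beta E_i}/Z_{SW})$ on $s=1$ and zero on $s=0$; the overall factor $e^{\beta W}$ is common, so again the non-zero indices appear in the order $\pi$, with the $s=0$ block trailing.

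I would then list the elbow points, using the formulation in Definition~\ref{defn:thermomajorisation}. For $\v{y}\otimes(1,0)_W$ the elbows of the rising part are
\be
\left(\sum_{i=1}^{k} e^{-\beta E_{\pi(i)}},\ \sum_{i=1}^{k} y_{\pi(i)}\right), \qquad k=1,\dots,n,
\ee
after which the curve is flat out to total $x$-coordinate $Z_{SW}$. For $\v{y}\otimes(0,1)_W$ each elbow in the rising part acquires an extra factor $e^{-\beta W}$ from the battery energy, giving
\be
\left(e^{-\beta W}\sum_{i=1}^{k} e^{-\beta E_{\pi(i)}},\ \sum_{i=1}^{k} y_{\pi(i)}\right), \qquad k=1,\dots,n,
\ee
again followed by a flat tail up to $Z_{SW}$. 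The heights are identical and the $x$-coordinates are scaled by the common factor $e^{-\beta W}$, which is precisely the claimed horizontal compression of the rising part of the curve.

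I do not anticipate a genuine obstacle here: the only subtlety is handling the zero-probability block correctly in the $\beta$-ordering, which I would address by noting that any ordering of the zeros leaves the plotted curve unchanged, because those indices add $x$-extent without changing the running sum of probabilities (they just constitute the flat tail that completes the curve to $x = Z_{SW}$).
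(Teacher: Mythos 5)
Your proof is correct and follows essentially the same route as the paper's: both arguments observe that the same permutation $\beta$-orders the nonzero block in the two cases (since the battery contributes only an overall factor to the ratios $p_{i,s}/g_{i,s}$), and then read off that the elbow heights coincide while the $x$-coordinates pick up the uniform factor $e^{-\beta W}$. Your explicit treatment of the zero-probability block and the flat tail is a welcome extra bit of care, but the substance is identical.
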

\begin{proof}
	 Denote by $\v{g}_W = (g^W_0, g^W_1)$ the thermal state associated to the battery Hamiltonian, where $g^W_0 = (1+e^{-\beta W})^{-1}$, $g^W_1=1-g^W_0$. Hence, the thermal state associated to the Hamiltonian of system+battery is
	\begin{equation}
	\v{g} \otimes \v{g}_W = (g_0^W g_1, \dots ,g_0^W g_n, g_1^W g_1, \dots , g_1^W g_n).
	\end{equation} We have $\v{y} \otimes (1,0)_W = (y_1,...,y_n,0,...,0)$, $\v{y} \otimes (0,1)_W = (0,...,0,y_1,...,y_n)$. If we denote  the energy levels of system and battery by \mbox{$\{(1,0),...,(n,0),(1,1),...,(n,1)\}$}, the permutation $\pi_1$ that $\beta$-orders $\v{y} \otimes (1,0)_W$ has the form $(i,j) \mapsto (\pi(i),j)$, for some permutation $\pi$ (i.e., the $\beta$-order is ($(\pi(1),0), (\pi(2),0), ...$). Furthermore, the permutation $\pi_2$ that $\beta$-orders $\v{y} \otimes (0,1)_W$ has the form $(i,j) \mapsto (\pi(i),\textrm{NOT}(j))$, for the same permutation $\pi$ (i.e., the $\beta$-order is ($(\pi(1), 1), (\pi(2),1),...$). 
	According to Def.~\ref{defn:thermomajorisation}, the $x$-axis points of the thermo-majorisation curve of $\v{y} \otimes (0,1)_W$ are $Z_{SW} \sum_{i=1}^k g^W_1 g_{{\pi}(i)}$ ($k=1,...,n$), where $Z_{SW}$ is the partition function of the Hamiltonian of system+battery, whereas the $x$-axis points of the thermo-majorisation curve of $\v{y} \otimes (1,0)_W$ are $Z_{SW} \sum_{i=1}^k g^W_0 g_{{\pi}(i)}$ $(k=1,...,n)$. The corresponding $y$-axis coordinates instead coincide, in both cases being equal to $\sum_{i=1}^k y_{{\pi}(i)}$.  In other words, 
	the two curves are the same apart from a overall rescaling of the $x$-axis by a factor $e^{-\beta W} = g^W_1/g^W_0$.
\end{proof}

\begin{figure}[h!]
	\centering
	\includegraphics[width=0.5\columnwidth]{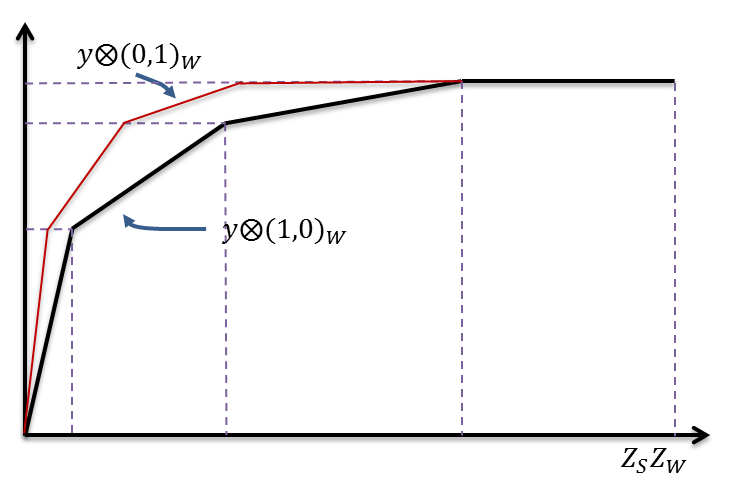}
	\caption{\label{fig:rescaling} {\bf Work rescales the thermo-majorisation curve:} the thermo-majorisation curve of $\v{y} \otimes (0,1)_W$ (battery excited) is a compression along the $x$-axis of the thermo-majorisation curve of $\v{y} \otimes (1,0)_W$ (battery in ground state) by a factor $g^W_1/g^W_0 =e^{-\beta W}$ (in this example we took this ratio to be $1/2$).  }
\end{figure}

One has, recalling the definition of Eq.~\eqref{eq:zerorenyi},
\begin{corol}[Deterministic work extraction \cite{horodecki2013fundamental, aberg2013truly}]
\begin{equation}
\label{eq:singleshotwork}
W_{\rm det} = - k T \log \sum_{i|x_i \neq 0} g_{i}  = k T S_0(\v{x}\| \v{g}) = F_0(\v{x}) - F_0(\v{g}).
\end{equation}
\end{corol}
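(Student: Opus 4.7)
The plan is to reduce the work extraction question to a pure thermo-majorisation condition via Theorem~\ref{thm:thermalnielsen}, then exploit Lemma~\ref{lem:rescaling} to compare two very simple Lorenz-type curves. Concretely, since $\v{x}\otimes(1,0)_W$ and $\v{g}_S \otimes (0,1)_W$ are both diagonal in the energy eigenbasis of $H_S + H_W$, Theorem~\ref{thm:thermalnielsen} (together with Remark~\ref{rmk:simplextension}) tells us that the required transition is possible if and only if $\v{x}\otimes(1,0)_W \succ_g \v{g}_S \otimes (0,1)_W$. So the task is to compute the largest $W$ for which this thermo-majorisation relation holds.

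Next, I would compute both curves explicitly. For $\v{g}_S \otimes (1,0)_W$, all nonzero entries have the \emph{same} ratio $1/g_0^W$ to the corresponding thermal weights, so its thermo-majorisation curve is a straight line from $(0,0)$ up to $(Z_{SW} g_0^W,1) = (Z_S,1)$ and then flat out to $(Z_{SW},1)$. By Lemma~\ref{lem:rescaling}, the curve of $\v{g}_S \otimes (0,1)_W$ is therefore the straight line from $(0,0)$ to $(Z_S e^{-\beta W},1)$, then flat. For $\v{x}\otimes(1,0)_W$, the entries with $x_i = 0$ have $\beta$-ratio $0$ and sit at the end of the $\beta$-ordering; therefore the curve first attains height $1$ precisely at $x$-coordinate $Z_S \sum_{i : x_i\neq 0} g_i$, and is flat beyond that point.

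The core geometric step is then to observe: a concave curve $C$ with $C(0)=0$ that eventually reaches height $1$ lies above the straight segment from $(0,0)$ to $(a,1)$ if and only if $C(a)\geq 1$, i.e.\ if and only if $C$ reaches $1$ at some $x\leq a$ (the ``only if'' is obvious; the ``if'' follows from the chord inequality $C(\lambda a)\geq \lambda C(a)$ for concave $C$ vanishing at $0$). Applying this with $a = Z_S e^{-\beta W}$, the thermo-majorisation condition becomes $Z_S \sum_{i:x_i\neq 0} g_i \leq Z_S e^{-\beta W}$, i.e.\ $e^{-\beta W} \geq \sum_{i : x_i\neq 0} g_i$. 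Maximising $W$ gives $W_{\rm det} = -kT \log \sum_{i:x_i\neq 0} g_i$.

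Finally, I would just match notation: by Eq.~\eqref{eq:zerorenyi}, $S_0(\v{x}\|\v{g}) = -\log \sum_{i:x_i\neq 0} g_i$, so $W_{\rm det} = kT\, S_0(\v{x}\|\v{g})$; and since $F_0(\v{y}) = -kT\log Z_S + kT S_0(\v{y}\|\v{g})$ with $S_0(\v{g}\|\v{g})=0$, this also equals $F_0(\v{x}) - F_0(\v{g})$. The only genuinely non-routine step is the geometric argument reducing ``curve above curve'' to a simple inequality on where the upper curve first hits $1$; everything else is bookkeeping with $\beta$-ordering and the rescaling lemma.
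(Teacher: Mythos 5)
Your proposal is correct and follows essentially the same route as the paper: reduce to the thermo-majorisation condition, use Lemma~\ref{lem:rescaling} to identify the final curve as the compressed straight line hitting height $1$ at $Z_S e^{-\beta W}$, and use concavity of the initial state's curve to reduce the comparison to where that curve first reaches $1$, namely $Z_S \sum_{i:x_i\neq 0} g_i$. The only cosmetic difference is that you state the concavity step as an explicit if-and-only-if chord lemma, whereas the paper phrases it by constructing the auxiliary chord $L$ and noting it lies below the initial curve and coincides with $L_W$ at the optimal $W$.
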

\begin{proof}

Following Fig.~\ref{fig:detwork}, we will construct the thermo-majorisation curves $T(\v{g}\otimes(0,1)_W)$ for varying $W$ and look for the largest $W$ such that this curve is all below $T(\v{x}\otimes(1,0)_W)$.

To construct $T(\v{g}\otimes(0,1)_W)$ let us start with $T(\v{g}\otimes(1,0)_W)$. This has constant slope $1/Z_S$ in $x \in [0,Z_S]$, where $Z_S =\sum_i e^{-\beta E_i}$, and is flat in $x\in [Z_S,Z_{SW}]$, where $Z_{SW} = Z_S Z_W$, with $Z_W = 1+ e^{-\beta W}$ (orange dashed line in Fig.~\ref{fig:detwork}). By Lemma~\ref{lem:rescaling}, the thermo-majorisation curve $T(\v{g} \otimes (0,1)_W)$ has slope $e^{\beta W}/ Z_S$ in $x \in [0, e^{-\beta W} Z_S]$, and is flat in $x \in [e^{-\beta W} Z_S, Z_{SW}]$ (so the part that is not flat connects the origin and $(e^{-\beta W} Z_S,1)$).

To compare  $T(\v{g}\otimes(0,1)_W)$ to $T(\v{x}\otimes(1,0)_W)$, we look for the point at which the latter curve reaches height~$1$. Let $\pi$ be the permutation that $\beta$-orders $\v{x}$ and let $k$ be the smallest number such that $\sum_{i=1}^k x_{\pi(i)} =1$. The $x$-axis point at which $T(\v{x}\otimes(1,0)_W)$ reaches height $1$ is then $Z_S \sum_{i=1}^k g_{\pi(i)}$. Comparing with the family of curves $T(\v{g} \otimes (0,1)_W)$ constructed before, we see that the largest $W$ such that $T(\v{g} \otimes (0,1)_W)$ is not above $T(\v{x}\otimes(1,0)_W)$ is the $W$ satisfying $Z_S e^{-\beta W} = Z_S \sum_{i=1}^k g_{\pi(i)}$, giving  $W= W_{\rm det} = - k T \log \sum_{i=1}^k g_{\pi(i)}$ (blue dotted line in Fig.~\ref{fig:detwork}). In fact, we chose $W=W_{\rm det}$ so that the elbow of $T(\v{g} \otimes (0,1)_W)$ lies on  $T(\v{x}\otimes(1,0)_W)$; furthermore, $T(\v{g} \otimes (0,1)_W)$ is a straight line from the origin to the elbow; hence, from the concavity of thermo-majorisation curves, we conclude $T(\v{g} \otimes (0,1)_W)$ is all below $T(\v{x}\otimes(1,0)_W)$). The result in its final form of Eq.~\eqref{eq:singleshotwork}, follows from the definition of $k$.
\begin{figure}[h!]
	\centering
	\includegraphics[width=0.5\columnwidth]{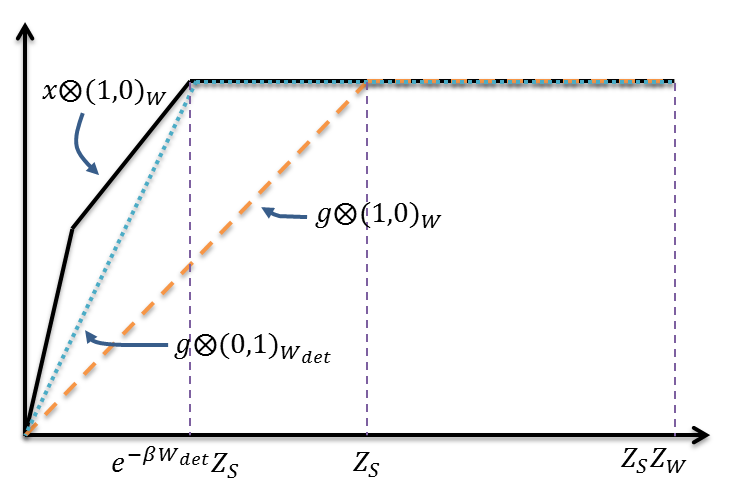}
	\caption[Deterministic work extraction]{\label{fig:detwork} {\bf Deterministic work extraction:} the thermo-majorisation curve of the system state $\v{x}$ with de-excited battery $(1,0)_W$ is in black; the thermal state $\v{g}$ with de-excited battery $(1,0)_W$ is represented by the dashed orange curve; exciting the battery corresponds to compressing the corresponding curve by a factor $e^{-\beta W }$ along the $x$-axis (see Lemma~\ref{lem:rescaling}). The dotted blue curve corresponds to the thermal state with the most excited battery state possible, under the condition that the curve lies all below the black curve of the initial state.}  
\end{figure}
\end{proof}

 Note the role the single-shot quantity $S_0(\v{x}\| \v{g})$ plays in characterising the work extractable deterministically. On average the largest of amount of work extractable from $\v{x}$ using a bath at temperature $T$ is $W_{\rm ave} = kT S_1(\v{x}\| \v{g}) > W_{\rm det}$ (for a formal treatment, see Ref.~\cite{aberg2013truly}). Also note that no deterministic work can be extracted from states with full support. Extensions allowing for some $\epsilon$ probability of failure have been formulated \cite{aberg2013truly,horodecki2013fundamental}.

A question related to the above is what is the minimum amount of work necessary to create a state, something called the \emph{work of formation} \cite{horodecki2013fundamental}. This is defined as the minimum amount of work $W_{\rm for}$ necessary to create a quantum state $\rho_S$ from the thermal state $\gamma_S$ under Thermal Operations: $
\gamma_S \otimes \ketbra{1}{1}_W \rightarrow \rho_S \otimes \ketbra{0}{0}$.
 For diagonal target states, this problem reduces to finding the smallest $W$ such that 
 \begin{equation}
 \v{g} \otimes (0,1)_W \rightarrow \v{x} \otimes (1,0)_W.
 \end{equation}
 Using the same reasoning as above, based around Lemma~\ref{lem:rescaling} (see Fig.~\ref{fig:workformation}), one can see that it is necessary and sufficient to add an amount of work $W$ that makes the slope of $T(\v{g} \otimes (0,1)_W)$ larger than the biggest slope in $T(\v{x} \otimes (1,0)_W)$. Since, as we described before, the slope of $T(\v{g} \otimes (0,1)_W)$ is $e^{\beta W} \frac{1}{Z_S}$, and the slopes of the segments in $T(\v{x} \otimes (1,0)_W)$ are $\frac{x_i}{e^{-\beta E^S_i}}$, that means
 \begin{equation}
 e^{\beta W_{\rm for}} \frac{1}{Z_S} = \max_i \frac{x_i}{e^{-\beta E^S_i}} \Rightarrow W_{\rm for} = kT S_\infty(\v{x}\|\v{g}) = F_\infty(\v{x}) - F_\infty(\v{g}).  
 \end{equation}
 Note that since $S_0(\v{x}) < S_\infty(\v{x})$ for every non-thermal distribution, once $\v{x}$ is created expending $W_{\rm for}$ only a smaller amount $W_{\rm det} $ can be extracted from it, i.e. the cycle $\v{g} \rightarrow \v{x} \rightarrow \v{g}$ is irreversible.
 
 \begin{figure}[h!]
 	\centering
 	\includegraphics[width=0.5\columnwidth]{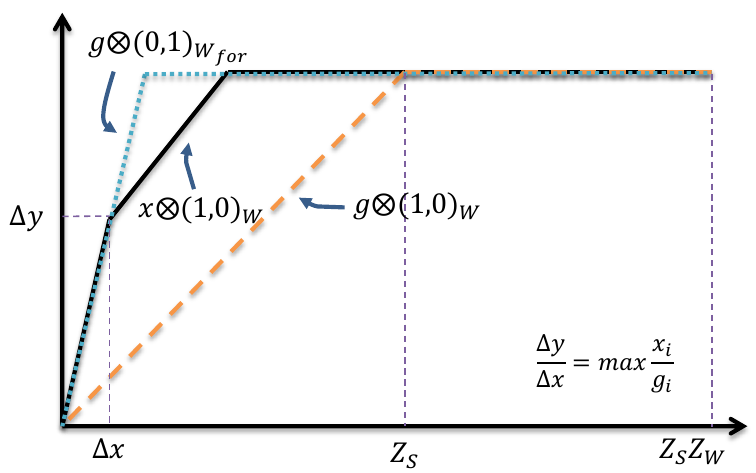}
 	\caption{ {\bf Work of formation:} the state $\v{x}$ with de-excited battery $(1,0)_W$ is presented in black. The initial state $\v{g}$ with de-excited battery $(1,0)_W$ is given by the dashed orange curve. The state $\v{g}$ with charged battery is represented by a blue dotted curve, which is a compression by $e^{-\beta W_{ \rm for}}$ of the orange curve (see Lemma~\ref{lem:rescaling}). $W_{ \rm for}$ is the value of work that takes the orange dashed curve all above the black curve, i.e. the minimum amount of work that needs to be consumed if we wish to create $\v{x}$ by discharging the battery. }	\label{fig:workformation} 
 \end{figure}

We will discuss these questions for arbitrary states once we introduce thermodynamic constraints on the evolution of quantum coherence. This is the next topic we will consider.
  
  \section{Thermodynamic laws for coherence}
  \label{sec:coherence}
  
    While in the classical scenario the second law, in its generalized thermo-majorisation form, only constrains the allowed population dynamics, we are now interested in understanding the thermodynamic processing of quantum coherence (unless otherwise stated, we always refer to coherence in the energy basis). A generic non-equilibrium initial state can be found in some superposition of energy states, such as \mbox{$\ket{\psi} = (\ket{0} + \ket{1})/\sqrt{2}$}. The occupations of ground and excited states are here $\v{x} = (1/2,1/2)$ and, as we know from the thermal Nielsen's theorem, under Thermal Operations $\v{x}$ will ``approach'' $\v{g}$, in the sense that $\v{y}$ is an achievable final population if and only if \mbox{$\v{x} \succ_g \v{y}$}. At the same time, however, $\ket{\psi}$ carries a superposition of energy eigenstates with amplitude $|c|= |\! \braket{0}{\psi}\! \braket{\psi}{1}\!| = 1/2$ that, intuitively, will get degraded due to decoherence. So, what are the achievable amplitudes $|c'|$ in the final state, given a transition $\v{x} \rightarrow \v{y}$ in the diagonal? We want to formalise this into explicit constraints on the decay of quantum coherence. For example,
    \begin{equation}
    \rho_S = \begin{pmatrix}
    x_1 & \rho_{01} & \rho_{02} \\
    \rho_{10}  & x_2 & \rho_{12} \\
    \rho_{20} & \rho_{21} & x_3
    \end{pmatrix} \quad \longrightarrow \quad  \sigma_S = \begin{pmatrix}
    y_1 & ? & ? \\
    ?  & y_2 & ? \\
    ? & ? & y_3
    \end{pmatrix}
    \end{equation}
    
 In other words, we need to go beyond thermo-majorisation. To see why thermo-majorisation together with positivity of the quantum state is insufficient, consider the transformation
 \begin{equation}
 \gamma_S :=e^{-\beta H_S}/Z_S \rightarrow \ket{\gamma} := \sum_i \sqrt{g_i} \ket{E^S_i}, \quad H_S = \sum_i E^S_i \ketbra{E^S_i}{E^S_i}. 
 \end{equation} 
 It should be obvious that both $\gamma_S$ and $\ket{\gamma}$ are associated to the population vector $\v{g}$, so that the thermo-majorisation condition is trivially satisfied. On the other hand, one can verify from Eq.~\eqref{eq:thermal} that $\mathcal{T}(\gamma_S)= \gamma_S$, hence there is no Thermal Operation mapping $\gamma_S$ into $\ket{\gamma}$. In fact, as we will see, in a precise sense thermo-majorisation is a `zero mode' constraint of an entire hierarchy of thermodynamic relations.

\subsection{Time-translation symmetry and thermodynamics}   
While it is intuitive from the previous analysis that it makes sense to consider the population and the coherent components of quantum states separately, this distinction is not refined enough. The following considerations are based on a symmetry analysis of Thermal Operations. The tools used allow to deal with symmetries in open quantum systems, i.e. quantum systems interacting with an external environment; as such, they can be understood as an extension of Noether's theorem to open evolutions \cite{marvian2014extending}. In particular, one exploits a harmonic analysis of  quantum states \cite{marvian2014modes} that brings to the fore structures implicit in standard treatments \cite{breuer2002open}. We introduce the necessary considerations, in an elementary fashion, in the next section. We recommend Ref.~\cite{marvianthesis} for  details of the symmetry analysis.  

\subsubsection{Extending Noether's theorem to open systems}

  A symmetry group $G$ acts on the set of density matrices $\rho$ through the following representation:
  \begin{equation}
  \label{eq:representation}
  	g \in G \mapsto \mathcal{U}_g(\cdot) = U_g(\cdot)U^\dag_g
  \end{equation}
where $U_g$ is a unitary. 
For example, $G=U(1)$ (or $G= \mathbb{R}$) is the group generated by the Hamiltonian $H_S$, $U_t= e^{-i H_S t}$, or rotations about an axis. Another common example is $G=SU(2)$. A closed system dynamics $V$ is said to be symmetric when it commutes with the action of the group, $[V, U_g] = 0$ for all $g \in G$. If $G$ is a Lie group, Noether's theorem implies that, if a closed system  dynamics exists mapping $\rho_S$ into $\sigma_S$, the generators of $G$ and all their powers (the Lie algebra of $G$)  are conserved quantities: $\tr{}{\rho_S H_S^k} = \tr{}{\sigma_S H_S^k} \, \forall k \in \mathbb{N}$, for the case of $G= U(1)$ generated by $H_S$. 

It should be clear that open systems present further difficulties; in particular, an open system dynamics in general has no conserved quantity, even if a conservation law holds at the level of system+environment. However, a notion of symmetry can be naturally defined for open dynamics, and turns out to be directly related to such global conservation laws. Define
\begin{defn}[Symmetry of open dynamics]
	Given a group $G$ with a representation as in Eq.~\eqref{eq:representation}, a channel $\mathcal{E}$ is \emph{symmetric with respect to $G$}, or $G$-\emph{covariant}, if $[\mathcal{E}, \mathcal{U}_g] = 0$ for all $g \in G$, i.e. $\mathcal{E}[\mathcal{U}_g(\rho_S)] = \mathcal{U}_g[\mathcal{E}(\rho_S)]$ for every $\rho_S$ and every $g \in G$. 
\end{defn}
In fact, one can construct a resource theory in which the set of free operations are those symmetric with respect to $G$ \cite{marvian2013asymmetry}. This is a theory of quantum coherence between eigenspaces of the observables generating $G$. If the generator is the Hamiltonian $H_S$, $\mathcal{E}$ is said to be \emph{time-translation symmetric}, also known as phase covariant or phase insensitive channels.

  While symmetries of open quantum systems do not in general imply conservation laws, they imply that certain quantities are monotonically decreasing under symmetric operations. These are called \emph{asymmetry monotones} and they are functions that capture aspects of the partial ordering induced on quantum states by the set of symmetric operations:
  \begin{defn}[Asymmetry monotone]
  	A functional $a$ is called asymmetry monotone for $G$ if \begin{equation*}
  	 a(\mathcal{E}(\rho_S)) \leq a(\rho_S)
  	\end{equation*}
  	for every $\rho_S$ and every $G$-covariant channel $\mathcal{E}$.
  \end{defn}
We define a state $\rho_S$ to be \emph{symmetric} if $\mathcal{U}_g(\rho_S) = \rho_S$ for all $g \in G$. These states can only contain incoherent mixtures of distinct eigenstates of generators of $G$.

 \begin{rmk}[Different notions of quantum coherence]
	The above mentioned notion of quantum coherence is one in which the particular encoding is relevant: if $H_S = \ketbra{1}{1} + 2\ketbra{2}{2}$, the states $(\ket{0}+\ket{1})/\sqrt{2}$ and $(\ket{0} + \ket{2})/\sqrt{2}$ behave differently under time translations, despite being both equal superpositions of two eigenstates in a preferred basis. Such notion of quantum coherence has been dubbed \emph{unspeakable} \cite{bartlett2007reference, marvian2016how} (since the labels `$0$', `$1$', `$2$' have a physical meaning, e.g. as eigenstates of the Hamiltonian) and is the relevant one for thermodynamics, metrology and quantum speed limits among other things. It is to be contrasted with a more computational notion of quantum coherence \cite{streltsov2017colloquium} in which the two states above are to be considered equivalent (since the labels `$0$', `$1$', `$2$', are irrelevant). The latter notion is termed \emph{speakable} quantum coherence and does not appear to capture quantum thermodynamic constraints. For a more detailed discussion see Ref.~\cite{marvian2016how}. 
\end{rmk}

Qualitatively, if $\rho_S$ is not symmetric, the application of a symmetric evolution $\mathcal{E}$ will make it `more symmetric', i.e. bring it closer to the set of states $\sigma_S$ satisfying $\mathcal{U}_g(\sigma_S) = \sigma_S$ for all $g \in G$. Asymmetry monotones make this statement quantitative. Let us present an example of such a quantity, whose thermodynamic relevance will be clarified later. Define
  \be
  \label{eq:Gtwirling}
  \mathcal{G}(\rho_S)= \int_G \mathcal{U}_g(\rho_S) dg,
  \ee
 as the average over all group elements ($dg$ being the Haar measure associated to $G$, assuming it exists). The operation $\mathcal{G}$ is known as $G$-\emph{twirling}. For $G=U(1)$ generated by $H_S$, $\mathcal{G}$ corresponds to the dephasing operation $\mathcal{D}$ in Eq.~\eqref{eq:dephasing}.
 Then define
  
  \begin{defn}[Asymmetry]
  	\emph{Asymmetry} is the asymmetry monotone defined as
  	\begin{equation}
  	\label{eq:asymmetry}
  	A(\rho_S) = S(\mathcal{G}(\rho_S)) - S(\rho_S) = S(\rho_S\|\mathcal{G}(\rho_S)),
  	\end{equation}
  	where the \emph{relative entropy} is $S(\rho_S\|\sigma_S) = \tr{}{\rho_S (\log \rho_S - \log \sigma_S}$ and $S(X) = - \tr{}{X\log X}$ is the von Neumann entropy. 
   \end{defn}
The two expressions in Eq.~\eqref{eq:asymmetry} coincide because $\tr{}{\rho_S \log \mathcal{G}(\rho_S)} =  \tr{}{\mathcal{G}(\rho_S) \log \mathcal{G}(\rho_S)}$ (use $\mathcal{G} \circ \mathcal{G} = \mathcal{G}$). Using the contractivity of the relative entropy ($S(\mathcal{E}(\rho_S)\| \mathcal{E}(\sigma_S)) \leq S(\rho_S\| \sigma_S)$ for all channels $\mathcal{E}$) and $\mathcal{E}\circ \mathcal{G} = \mathcal{G}\circ \mathcal{E}$ if $\mathcal{E}$ is $G$-covariant, one can immediately derive $A(\mathcal{E}(\rho_S)) \leq A(\rho_S)$. Asymmetry monotones replace conservation laws for open systems \cite{marvian2014extending}.
  
  Even in closed system dynamics these considerations are relevant. In fact, conservations laws on the generators of $G$ are insufficient to characterise what mixed state transformations are possible under closed symmetric evolutions, as the following example shows:
  	\begin{ex}[Asymmetry monotones are necessary even in closed systems \cite{marvian2014extending}]
  	 Consider a system described by $\mathcal{H}_S \otimes \mathcal{H}_A$ where $\mathcal{H}_S$ is a qubit system and $\mathcal{H}_A$ is an ancilla. Then define the two states
  	\be
  	\nonumber
  	\rho_{SA} = \frac{1}{2} \ketbra{0}{0} \otimes \ketbra{s_1}{s_1} + \frac{1}{2} \ketbra{1}{1} \otimes \ketbra{s_2}{s_2}, \quad \xi_{SA}= \frac{1}{2} \ketbra{+}{+} \otimes \ketbra{s_1}{s_1} + \frac{1}{2} \ketbra{-}{-} \otimes \ketbra{s_2}{s_2},
  	\ee
  	where $\ket{0}$, $\ket{1}$ are eigenstates of the Pauli $Z$ operator and $\ket{\pm}$ are eigenstates of the Pauli $X$ operator. A unitary exists mapping $\rho_{SA}$ into $\xi_{SA}$ (Hadarmard on the first system). However, assume we can only perform rotationally symmetric dynamics ($SU(2)$-covariant unitaries), and that $\ket{s_1}$, $\ket{s_2}$ are two orthogonal states of a set of degrees of freedom invariant under rotations, so that rotations act trivially on $\mathcal{H}_A := {\rm span} \{\ket{s_1},\ket{s_2}\}$. Is it possible to find a symmetric unitary dynamics transforming $\rho_{SA}$ into $\xi_{SA}$? The generators of the symmetry are $\sigma_i \otimes \mathbb{I}_A$, where $\sigma_1= X$, $\sigma_2 = Y$, $\sigma_3 = Z$. Since the reduced state on the first system is maximally mixed for both $\rho_{SA}$ and $\xi_{SA}$, one finds $
  	\tr{}{\rho_{SA} \sigma_i \otimes \mathbb{I}_A} = \tr{}{ \xi_{SA} \sigma_i \otimes \mathbb{I}_A}$ for $i=1,2,3$. 
  	So all generators of the symmetry group acting on $\mathcal{H}_S \otimes \mathcal{H}_A$ are conserved quantities. Nevertheless, there is no symmetric transformation (unitary or otherwise) mapping $\rho_{SA}$ into $\xi_{SA}$. This is easily captured by asymmetry monotones. To see this, one needs to generalize $A$ to the Holevo asymmetry monotone $A_p := S(\mathcal{G}_p(\rho_S)) - S(\rho_S)$, where $\mathcal{G}_p(\rho_S)= \int_G p(g) \mathcal{U}_g(\rho_S) dg$ for any probability density $p(g)$ over $G$. Taking $p(g)$ to be uniform on the $U(1)$ subgroup of $SU(2)$ generated by $Z$ and zero otherwise, $\mathcal{G}_p$ becomes an average over all rotations about the $z$ axis on the $S$ system. Hence
  	\be
  	\mathcal{G}_p(\xi_{SA}) = \frac{\I}{4} \otimes \ketbra{s_1}{s_1} + \frac{\I}{4} \otimes \ketbra{s_2}{s_2} = \frac{\I}{4} , \quad 	\mathcal{G}_p(\rho_{SA}) = \rho_{SA},
  	\ee
  	from which we obtain $A_p(\xi_{SA}) = S\left([1/2,0,0,1/2]\|[1/4,1/4,1/4,1/4]\right) = \log 2$, while $A_p(\rho_{SA}) = 0$. No symmetric dynamics exists mapping $\rho_{SA}$ into $\xi_{SA}$, since \mbox{$ A_p(\xi_{SA})>A_p(\rho_{SA})=0$}. 
  \end{ex}
 
 The following theorem (that we give without proof) provides a dilation of channels with a $U(1)$ symmetry. As suggested above, when we gave the definition of symmetric channel, the dilation shows that covariance can be understood as arising from conservation laws on an enlarged system. Also symmetries can `go to the church of the larger Hilbert space':
 \begin{thm}[Stinespring dilation for time-translation symmetric maps]
 	\label{thm:stinespringcovariant}
 	Suppose $S$ has Hamiltonian $H_S$ and $\mathcal{E}$ is a time-translation symmetric channel on $S$. Then there exists an ancillary system $\sigma_A$ with Hamiltonian $H_A$ and a unitary $U$ on $SA$ such that $[\sigma_A, H_A] =0$, $[U, H_S + H_A] = 0$ and
 	\begin{equation}
 	\mathcal{E}(\rho_S) = \tr{A}{U(\rho_S \otimes \sigma_A)U^\dag}.
 	\end{equation}
 \end{thm}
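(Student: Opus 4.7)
The plan is to produce the dilation in three moves: extract a Kraus decomposition adapted to the time-translation symmetry, bundle those Kraus operators into an isometry into a carefully chosen ancilla with a Hamiltonian that ``refunds'' the energy they inject into $S$, then extend to an energy-conserving unitary whose input ancilla state is a single energy eigenstate. The first move rests on a standard covariant Kraus decomposition: since $\mathcal{E}$ is time-translation symmetric, one can choose Kraus operators $\{K_\alpha\}$ each with a definite Bohr frequency $\omega_\alpha$ of $H_S$, meaning $[H_S,K_\alpha]=\omega_\alpha K_\alpha$. This follows because two Kraus representations of the same channel are related by a unitary freedom, covariance forces conjugation by $e^{-iH_S t}$ to be such a freedom $W(t)$, and diagonalising $W(t)$ in the Kraus index yields operators of definite frequency. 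I would then take $\mathcal{H}_A$ spanned by orthonormal vectors $\{\ket{\alpha}_A\}$ (one per Kraus operator) plus a reference vector $\ket{0}_A$, with $H_A\ket{\alpha}_A = -\omega_\alpha\ket{\alpha}_A$ and $H_A\ket{0}_A = 0$, and define
\be
V\ket{\psi}_S := \sum_\alpha K_\alpha\ket{\psi}_S \otimes \ket{\alpha}_A.
\ee
By trace preservation of $\mathcal{E}$, $V^\dag V=\sum_\alpha K_\alpha^\dag K_\alpha=\I$, so $V$ is an isometry, and it satisfies the key intertwining $(H_S\otimes\I+\I\otimes H_A)V = VH_S$ because the $\omega_\alpha$ that $K_\alpha$ adds to $S$ is exactly cancelled by the $-\omega_\alpha$ in $H_A$ on $\ket{\alpha}_A$.

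Second, I would identify $\mathcal{H}_S$ with the subspace $\mathcal{H}_S\otimes\ket{0}_A\subset\mathcal{H}_S\otimes\mathcal{H}_A$---consistent with the grading by $H_S+H_A$ because $H_A\ket{0}_A=0$---and extend $V$ to a unitary $U$ on $\mathcal{H}_S\otimes\mathcal{H}_A$. Within each eigenspace of $H_S+H_A$ the source and the image of $V$ have the same dimension (both equal to that of the corresponding eigenspace of $H_S$), so their orthogonal complements also match and a block-by-block unitary extension exists, automatically satisfying $[U,H_S+H_A]=0$. Setting $\sigma_A:=\ketbra{0}{0}_A$ gives $[\sigma_A,H_A]=0$ trivially, and
\be
\tr{A}{U(\rho_S\otimes\sigma_A)U^\dag} = \tr{A}{V\rho_S V^\dag} = \sum_\alpha K_\alpha\rho_S K_\alpha^\dag = \mathcal{E}(\rho_S).
\ee

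The step I expect to require the most care is the covariant Kraus decomposition: it is really the statement that covariant CP maps admit a harmonic decomposition into the irreducible representations of $U(1)$ acting by conjugation, with each Kraus operator sitting inside a single isotypic component labelled by a Bohr frequency of $H_S$. Once that is in hand the rest is essentially bookkeeping: the choice $H_A\ket{\alpha}_A=-\omega_\alpha\ket{\alpha}_A$ converts the symmetry of $\mathcal{E}$ into microscopic energy conservation of $U$, while the incoherence of $\sigma_A$ comes for free because $\ket{0}_A$ is a single energy eigenstate of $H_A$.
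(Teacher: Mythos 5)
Your proof is correct, and it is essentially the standard dilation argument that the paper deliberately omits (the theorem is stated ``without proof'', deferring to Sec.~4.4 of Marvian's thesis and Appendix~B of Keyl--Werner, both of which proceed exactly as you do: covariant Kraus decomposition into definite Bohr frequencies, an ancilla Hamiltonian $H_A\ket{\alpha}=-\omega_\alpha\ket{\alpha}$ that turns the intertwining relation into energy conservation, and a block-by-block unitary extension within each eigenspace of $H_S+H_A$). The one step worth making explicit in the part you flag as delicate: to get a genuine one-parameter unitary group $w(t)$ on the Kraus index (so that it can be written as $e^{-i\Omega t}$ and diagonalised), you should start from a \emph{minimal} (linearly independent) Kraus decomposition, since only then is the mixing unitary relating $\{e^{iH_St}K_\alpha e^{-iH_St}\}$ to $\{K_\alpha\}$ unique, which is what forces $w(t+s)=w(t)w(s)$ and continuity in $t$. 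With that caveat the argument is complete; the dimension count in the unitary extension is valid because all spaces are finite dimensional, as assumed throughout the paper.
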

The result holds for more general symmetry groups $G$, see Sec.~4.4 of Ref.~\cite{marvianthesis}, and Appendix~B of Ref.~\cite{keyl1999optimal}. In the final example we discuss the relation between time-translation symmetry and standard approximations performed in the context of open quantum system dynamics, which may be useful to those familiar with the latter:
 
  \begin{rmk}[Master equations and $U(1)$-covariance \cite{lostaglio2017markovian}]
  	\label{rmk:masterequations}
 	Consider the set of channels $\mathcal{E}$ that admit a time-independent generator $\mathcal{L}$, meaning that there exists $s>0$ and a Lindbladian $\mathcal{L}$ such that $\mathcal{E} = e^{\mathcal{L} s}$ (see Ref.~\cite{breuer2002open}, Section~3.2; the jargon is that $\mathcal{E}$ is time-homogeneous Markovian). In standard microscopic derivations of master equations one performs the secular or rotating wave approximation after the Born-Markov approximation (typically justified in the weak coupling limit, see Section~3.3 of Ref.~\cite{breuer2002open}). This ensures that $\mathcal{L}$ commutes with the superoperator $\mathcal{H}:= [H, \cdot]$ that generates the unitary part of the dynamics. A direct calculation shows that the resulting channel $\mathcal{E}$ is time-translation symmetric with respect to the group generated by $H$.
 	 This provides a point of view on the emergence of time-translation symmetry in practical considerations that is rooted in the master equation formalism. In fact, a typical set of channels used to study thermodynamic processes in the weak coupling limit are the so-called Davies maps \cite{davies1974markovian}, which are examples of time-translation symmetric channels.
 \end{rmk}

\subsubsection{Time-translation symmetry of Thermal Operations}
\label{sec:timetranslationsymmetryTO}
  
  Consider the action of time translations (a $U(1)$ group generated by $H_S$) on the set of quantum states: $t \mapsto \mathcal{U}_t(\cdot) = e^{-i H_S t}(\cdot)e^{i H_S t}$. The initial states $\rho_S$ for which thermo-majorisation gives necessary and sufficient conditions are, as we have seen, those for which $[\rho_S, H_S] =0$, i.e. with no coherence \emph{among} energy eigenspaces, or \emph{incoherent} for short (see Theorem~\ref{thm:thermalnielsen} and Remark~\ref{rmk:simplextension}). This can be equivalently written as $\mathcal{U}_t(\rho_S) = \rho_S$ for all $t$, i.e. states that are incoherent in the energy basis are those that are \emph{symmetric} under the action of time-translations. Only for those thermo-majorisation is the whole story.
  
  Now consider the action of a Thermal Operation $\mathcal{T}$ on a time translated state $\mathcal{U}_t(\rho_S)$. Using the invariance of $\gamma_B$ under the time translations generated by $H_B$ and the commutation relation $[U,H_S+H_B] = 0$, from Eq.~\eqref{eq:thermal} one can see that, as anticipated in Remark~\ref{rmk:alternative},
  \begin{equation}
  \label{eq:covariant}
  \mathcal{T}(\mathcal{U}_t(\rho_S)) = \mathcal{U}_t(\mathcal{T}(\rho_S)) \quad \forall t \, \; \; \forall \rho_S.
  \end{equation} 
 We conclude that Thermal Operations are $U(1)$-\emph{covariant} or \emph{time-translation symmetric} \cite{lostaglio2015description}. Physically this tells us that it does not matter if we apply $\mathcal{T}$ at time $s=0$ and then let the system freely evolve for some time $t$ or we invert the order of the operations: the final state will be identical. It also tells us that Thermal Operations do not require any external source of coherence. This is obvious from the definition in Eq.~\eqref{eq:thermal} (since $\gamma_B$ is an incoherent state). However, it follows from Eq.~\eqref{eq:covariant} alone, as Theorem~\ref{thm:stinespringcovariant} shows. Since symmetric evolutions can only degrade asymmetry properties, and in our case asymmetry coincides with energy coherence, we see that the fact that Thermal Operations are symmetric implies that they degrade quantum coherence. Hence, $A(\rho_S)$ is one measure of coherence that needs to decrease under Thermal Operations. Another one is the quantum Fisher information with respect to the unitary orbit generated by $H_S$:

\begin{ex}[Quantum Fisher Information degradation under Thermal Operations \cite{janzing2003quasi}]
	\label{ex:fisher}
The quantum Fisher information for the family $\{\mathcal{U}_t(\rho_S)\}$ is an asymmetry monotone. Let $Q(\rho_S,t)$ be defined as 
	\begin{equation}
	Q(\rho_S,t) := 2 \lim_{\delta \rightarrow 0} (1-\mathcal{F}(\rho_t,\rho_{t+\delta})^2)/\delta^2
	\end{equation}
	where $\mathcal{F}(\cdot,\cdot)$ is the fidelity, $\mathcal{F}(\rho,\sigma)= \tr{}{\sqrt{\rho^{1/2} \sigma \rho^{1/2}}}$, and $\rho_t := \mathcal{U}_t (\rho_S)$. Now, if $\mathcal{E}$ is a symmetric channel, 
	\begin{equation}
	\mathcal{F}(\mathcal{E}(\rho_S)_t,\mathcal{E}(\rho_S)_{t+\delta}) = \mathcal{F}(\mathcal{E}(\rho_t),\mathcal{E}(\rho_{t+\delta})) \geq \mathcal{F}(\rho_t,\rho_{t+\delta}),
	\end{equation}
	where in the first step we used the condition that $\mathcal{E}$ is symmetric and in the second that $\mathcal{F}$ is contractive under quantum channels  (see Section 3.2 of Ref.~\cite{watrous2018theory}). The above implies that under any symmetric channel \mbox{$Q(\mathcal{E}(\rho_S),t) \leq Q(\rho_S,t)$}, i.e. the quantum Fisher information is an asymmetry monotone and hence, in particular, it is a Thermal Operations monotone: for every $\rho_S$ and Thermal Operation $\mathcal{T}$, $Q(\mathcal{T}(\rho_S),t) \leq Q(\rho_S,t)$.
\end{ex}

 The considerations above lead to a point of view on Thermal Operations as a set of maps that satisfy two core properties:
 \begin{enumerate}
 	\item \label{eq:GPcondition} $\mathcal{T}(\gamma_S) = \gamma_S$, the \emph{Gibbs-preserving condition}, ensures that no external work can be brought in for free (we want a fair accounting of the work resources employed).
 	\item \label{eq:covariancecondition} $\mathcal{T} \circ \mathcal{U}_t = \mathcal{U}_t \circ \mathcal{T} \; \; \forall t$, the \emph{symmetry condition}, ensures that no external source of coherence can be brought in for free (we want a fair accounting of coherent resources, see also considerations in Remark~\ref{rmk:alternative}).
 \end{enumerate}
In fact, as anticipated in Remark~\ref{rmk:alternative}, it is tempting to focus on the superset of Thermal Operations that satisfy these two properties, the Thermal Processes. While these allow the same set of transformations among states as Thermal Operations on qubit systems, the situation is unclear in higher dimension~\cite{cwiklinski2015limitations}. Necessary and sufficient, but implicit, conditions for transformations to be possible under this set were given in Ref.~\cite{gour2017quantum2}. We leave the following conjecture open: 
\begin{rmk}
	[Conjecture] The closure of the set of states achievable with Thermal Operations coincides with the set of states achievable with Thermal Processes. If true, this would be a remarkable simplification of the set of operations we need to consider (not least, the question of the existence of a Thermal Operation $\rho_S \rightarrow \sigma_S$ would be proven to be a semidefinite program, and the results of Ref.~\cite{gour2017quantum2} would be applicable to Thermal Operations). If false, it would mean there is more to Thermal Operations than the two core properties listed  above, and it would be interesting to understand the physical meaning of the extra constraints.
	\label{conjecture} 
\end{rmk}

 The symmetry constraints introduce `second laws for coherence', i.e. analogues of Eq.~\eqref{eq:secondlaws} for quantum coherence. An example of such relations can be obtained by introducing the quantum R\'enyi divergences (see Ref.~\cite{mosonyi2015quantum} and references therein)
\be
\nonumber
S_{\alpha}(\rho_S\|\sigma_S) =
\begin{cases} 
	\frac{1}{\alpha -1}\log \tr{}{\rho_S^{\alpha} \sigma_S^{1-\alpha}}, \; \, \quad \quad \quad \quad \, \quad \alpha \in (0,1), \\
	\frac{1}{\alpha-1}\log \tr{}{\left(\sigma_S^{\frac{1-\alpha}{2\alpha}}\rho_S \sigma_S^{\frac{1-\alpha}{2\alpha}}\right)^\alpha}, \quad \quad \alpha >1.
\end{cases}
\ee 
The limit for $\alpha \rightarrow 1$ is given by $S_1(\rho_S\|\sigma_S) = \tr{}{\rho_S (\log \rho_S - \log \sigma_S)} = S(\rho_S \| \sigma_S)$. Also $\alpha \rightarrow 0, \infty$ are defined by suitable limits: denoting by $\Pi_{\rho_S}$ the projector on the support of $\rho_{S}$,
\be
S_0(\rho_S\|\sigma_S)= - \log \Pi_{\rho_S} \sigma_S, \quad S_\infty(\rho_S\|\sigma_S) = \log \min\{ \lambda: \rho_S \leq \lambda \sigma_S\}.
\ee
These quantities have the (non-obvious!) property of being contractive under quantum channels ($S_\alpha (\mathcal{E}(\rho_S)\|\mathcal{E}(\sigma_S)) \leq S_{\alpha} (\rho_S\|\sigma_S) $ for every $\alpha \geq 0$ and every channel $\mathcal{E}$). 
Then, one can define for any $\alpha \geq 0$
\be
\nonumber
A_{\alpha}(\rho_S):=S_{\alpha}(\rho_S \| \mathcal{D}(\rho_S)),
\ee
which recovers asymmetry for $\alpha =1$. Since $[\mathcal{E},\mathcal{U}_t]=0$ for every $t$ and $\mathcal{D}(\cdot) = \int dt \mathcal{U}_t(\cdot)$, it is simple to show that $[\mathcal{E},\mathcal{D}] =0$. This, together with the contractivity of the $\alpha$-relative entropy, immediately implies that under any Thermal Operation (see Fig.~\ref{fig:blob})
\begin{equation}
\label{eq:alphaasymmetry}
\Delta A_\alpha (\rho_S) \leq 0 \quad \forall \alpha \geq 0.
\end{equation}

That the $A_\alpha$ constraints, together with thermo-majorisation, cannot be sufficient to characterise Thermal Operations follows from the fact that we proved $\Delta A_\alpha \leq 0$ using only the property $[\mathcal{E}, \mathcal{D}] = 0$, which defines a strict superset of time-translation symmetric channels \cite{marvian2016how}. The reason is that Thermal Operations operate independently on different `coherence modes' of the quantum state, as we will discuss in Sec.~\ref{subsection:modes}.

\begin{figure}[h]
	\begin{center}
		\includegraphics[width=0.45\textwidth]{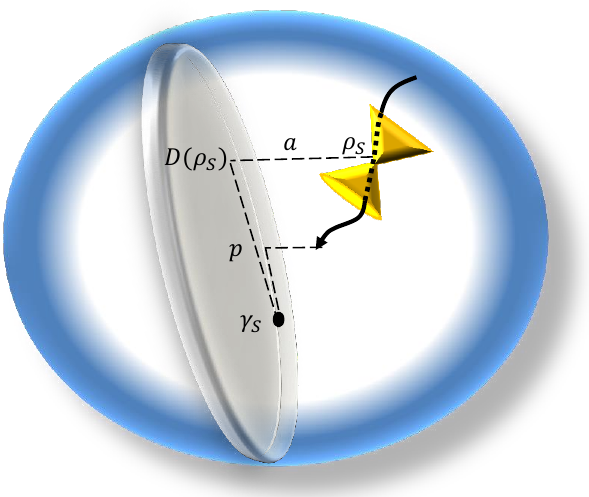}
	\end{center}
	\caption[Quantum Thermodynamics as a theory of athermality and coherence]{{\bf Quantum thermodynamics as a theory of athermality and quantum coherence.} The blue blob pictorially represents the convex set of all quantum states. Any state $\rho_S$ is associated to a ``thermal cone'' (in yellow), the convex set of states accessible from it by means of Thermal Operations (and the backward cone of states that can access it), with a representative trajectory. For any state $\rho_S$ we can identify measures $p$ of its athermality -- which corresponds to the deviation of $\mathcal{D}(\rho_S)$ from the thermal state $\gamma_S$, as measured by $\{F_\alpha \}$ and other thermodynamic Schur-concave functions (see Sec.~\ref{sec:thermodynamicschur}); and asymmetry $a$ -- which corresponds to the deviation, as measured by $\{A_\alpha\}$ or any other asymmetry monotone, of $\rho_S$ from the manifold of time-symmetric states (grey flat region). All of these must be monotonic during the thermalisation process.}
	\label{fig:blob}
\end{figure}

\subsubsection{Coherence constraints are not reducible to free energies. Coherent and incoherent components of the free energy}
\label{sec:decomposition}

It is important to recognize that the constraints imposed by time translation symmetry are not reducible to standard considerations involving \emph{free energy measures}. The intuition is as follows. Let $\Delta \tilde{F}$ be any of the free energy difference measures introduced in the literature. These include the quantum free energy difference \begin{equation}
\Delta F(\rho_S) = kT S(\rho_S \| \gamma_S) = F(\rho_S) - F(\gamma_S), \quad F(X_S) = \tr{}{X_S H_S} - kT S(X_S),
\end{equation}  
where $S(X_S)$ is the von Neumann entropy, the $\alpha$-free energies defined as $\Delta F_{\alpha}(\rho_S) := k T S_{\alpha}(\rho_S\|\gamma_S)$ where $S_\alpha(\cdot\|\cdot)$ are the $\alpha$-R\'enyi divergences defined above (for incoherent states, $\Delta F_\alpha(\rho_S) = F_\alpha(\v{x}) - F_\alpha(\v{g})$, with $\v{x}$ the eigenvalues of $\rho_S$ and $F_\alpha(\v{x})$ defined in Eq.~\eqref{eq:alphafreeenergy}). 
These quantities, and arguably all measures that can be meaningfully called free energies, have the property that they are finite, at least for full rank states, and grow unboundedly on pure energy states $\ket{E}$ of increasing energy, \mbox{$\Delta\tilde{F} (\ket{E}) \rightarrow \infty$} as $E \rightarrow \infty$. However note that, for any $\epsilon>0$, the transformation \mbox{$\ket{E} \rightarrow \sigma^\epsilon_S$}, with \mbox{$\sigma^\epsilon_S:= \epsilon \ketbra{+}{+} +(1-\epsilon) \gamma_S$}, is impossible under Thermal Operations, no matter $E$. In fact, $A(\sigma^\epsilon_S) > A(\ket{E}) =0$, and as we discussed $A$ is a Thermal Operation monotone. The transformation $\ket{E} \rightarrow \sigma^\epsilon_S$ is impossible despite the fact that every constraint based on a free energy measure is trivialised by adding enough (incoherent) work, that is $\Delta \tilde{F} (\ket{E}) > \Delta \tilde{F}(\sigma^\epsilon_S)$ for $E$ large enough. This suggests that something beyond a `generalised free energy' is needed to characterise thermodynamic transformations.

A more detailed understanding can be obtained noting that using `battery states' $\ket{E}$ as resources one can simulate any time-translation symmetric operation with Thermal Operations. Specifically, for any $U(1)$-covariant operation $\mathcal{E}$ we can find a battery state $\ket{E}$ such that 
\begin{equation*}
\tr{W}{\mathcal{T}(\rho_S \otimes \ketbra{E}{E}_W)} \approx_\epsilon \mathcal{E}(\rho_S)
\end{equation*} 
arbitrarily well (using Theorem~\ref{thm:stinespringcovariant}, we just need to use the battery to produce some diagonal state $\sigma_A$; that this can be done follows from an application of Theorem~\ref{thm:thermalnielsen}, see Appendix~B of Ref.~\cite{lostaglio2015quantum}). Conversely, no operation outside this set can be realised using Thermal Operations and battery states $\ket{E}$ (as it follows from Theorem~\ref{thm:stinespringcovariant}). Hence, the set of time-translation symmetric operations are all and only the channels that can be obtained with Thermal Operations and arbitrary energy states $\ket{E}$. Since adding energy states is exactly the construction that lifts all free energy measures, we see that asymmetry constraints are those that remain. 
These are the symmetry `backbone', describing constraints on the evolution of quantum coherence that follow from time-translation symmetry only. This identifies a crucial difference from the classical scenario, where all constraints are lifted by adding enough work. 
\begin{rmk}
Using this approach, Ref.~\cite{marvian2018coherence} studied the thermodynamic coherence costs of creating quantum states when work is available, showing that it is related to the quantum Fisher information defined in Example~\ref{ex:fisher}. It is also shown that coherence cannot be distilled in the form of a pure, uniform superposition even from an infinite number of full rank input states. 
\end{rmk}

There is an interesting decomposition of the quantum free energy, first derived in Ref.~\cite{janzing2006quantum}, that helps in understanding the previous discussion more concretely. 
Define $\Delta F_C(\rho_S) := \Delta F(\mathcal{D}(\rho_S))$ the \emph{classical free energy}. This can be seen to be equal to the (non-equilibrium) free energy of the vector of populations $x_i = \bra{E^S_i} \rho_S \ket{E^S_i}$, i.e. $\Delta F_C(\rho_S)  = F(\v{x}) - F(\v{g})$, where $F(\v{x}) = \sum_i x_i E^S_i - kT H(\v{x})$ and $H$ is the Shannon entropy. Then 
\begin{thm}[Free energy decomposition into incoherent and coherent parts \cite{janzing2006quantum, lostaglio2015description}]
	\begin{equation}
	\label{eq:freenergydecomposition}
	\Delta F(\rho_S) = \Delta F_C(\rho_S) + kT A(\rho_S),
	\end{equation}
	where $\Delta F_C(\rho_S)$ is the classical free energy and $A(\rho_S)$ is asymmetry with respect to time-translations,  defined in Eq.~\eqref{eq:asymmetry}. Under a Thermal Operation $\mathcal{T}$, $\Delta F_C(\mathcal{T}(\rho_S)) \leq \Delta F_C(\rho_S)$ \emph{and} $ A(\mathcal{T}(\rho_S)) \leq A(\rho_S)$.
\end{thm}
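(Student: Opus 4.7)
The plan is to establish the identity first by direct algebra on the three relative entropies, and then derive the two monotonicity claims from contractivity of the relative entropy together with the already-established properties $\mathcal{T}\circ\mathcal{D}=\mathcal{D}\circ\mathcal{T}$ (from the proof of Theorem~\ref{th:thermalgibbs}) and $\mathcal{T}(\gamma_S)=\gamma_S$.

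For the decomposition, I would expand $kT A(\rho_S) + \Delta F_C(\rho_S) = kT[S(\rho_S\|\mathcal{D}(\rho_S))+S(\mathcal{D}(\rho_S)\|\gamma_S)]$ into four $\tr{}{\cdot}$ terms using the definition of relative entropy. The only nontrivial observation needed is that $\log\gamma_S$ and $\log\mathcal{D}(\rho_S)$ are both diagonal in the energy eigenbasis, and for any energy-diagonal operator $O$ one has $\tr{}{\rho_S O}=\tr{}{\mathcal{D}(\rho_S)O}$, since $\mathcal{D}$ only erases the off-diagonal blocks of $\rho_S$ and these make no contribution when paired against a diagonal $O$. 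Applying this identity with $O=\log\gamma_S$ and then with $O=\log\mathcal{D}(\rho_S)$ makes the $\mathcal{D}(\rho_S)$ contributions cancel in a Pythagorean fashion, leaving exactly $kT[\tr{}{\rho_S\log\rho_S}-\tr{}{\rho_S\log\gamma_S}] = kT S(\rho_S\|\gamma_S) = \Delta F(\rho_S)$.

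For the two inequalities, both are one-liners built from the same trick. For asymmetry, $A(\mathcal{T}(\rho_S)) = S(\mathcal{T}(\rho_S)\|\mathcal{D}(\mathcal{T}(\rho_S))) = S(\mathcal{T}(\rho_S)\|\mathcal{T}(\mathcal{D}(\rho_S))) \leq S(\rho_S\|\mathcal{D}(\rho_S)) = A(\rho_S)$, using $[\mathcal{T},\mathcal{D}]=0$ and then contractivity of the relative entropy under the CPTP map $\mathcal{T}$. For the classical free energy, the same push-through plus $\mathcal{T}(\gamma_S)=\gamma_S$ gives $\Delta F_C(\mathcal{T}(\rho_S)) = kT S(\mathcal{T}(\mathcal{D}(\rho_S))\|\mathcal{T}(\gamma_S)) \leq kT S(\mathcal{D}(\rho_S)\|\gamma_S) = \Delta F_C(\rho_S)$. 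There is no real obstacle here: the whole argument is a single algebraic identity plus two invocations of data processing. The conceptual content, rather than any technical difficulty, lies in recognising $\mathcal{D}$ as the orthogonal projection onto the energy-diagonal subalgebra, which is what simultaneously powers the Pythagorean splitting of $S(\rho_S\|\gamma_S)$ and the separate monotonicities of its two summands.
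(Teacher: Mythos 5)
Your proposal is correct and follows essentially the same route as the paper: the identity is obtained by the same add-and-subtract/cancellation of the $\mathcal{D}(\rho_S)$ terms (the paper phrases your Pythagorean observation as $\tr{}{\rho_S \log \gamma_S} = \tr{}{\mathcal{D}(\rho_S)\log\gamma_S}$, justified via $\mathcal{D}=\mathcal{D}^\dag$ and $\mathcal{D}(\gamma_S)=\gamma_S$), and the two monotonicities are derived exactly as you do, from $[\mathcal{T},\mathcal{D}]=0$, $\mathcal{T}(\gamma_S)=\gamma_S$ and data processing.
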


In other words the quantum (non-equilibrium) free energy additively decomposes in a component that is the free energy of the population only, measuring the distance of the population of $\rho_S$ from a thermal population; and a coherent component $kT A(\rho_S)$, measuring the distance between $\rho_S$ and the closest incoherent state. The latter interpretation is made precise from the fact that $A(\rho_S) = \min_{\sigma|\sigma = \mathcal{D}(\sigma_S)} S(\rho_S\|\sigma_S)$ (Proposition~2 of Ref.~\cite{gour2009relative}). Both components must independently decrease.
\begin{proof}[Proof of free energy decomposition]
	Using $\mathcal{D} = \mathcal{D}^\dag$ and $\gamma_S = \mathcal{D}(\gamma_S)$, we get $\tr{}{\rho_S \log \gamma_S} = \tr{}{\mathcal{D}(\rho_S) \log \gamma_S}$. Then, summing and subtracting $S(\mathcal{D}(\rho_S))$, we get
	\begin{equation}
	\Delta F(\rho_S) = kT (\tr{}{\rho_S \log \rho_S} - \tr{}{\mathcal{D}(\rho_S) \log \mathcal{D}(\rho_S)} + \tr{}{\mathcal{D}(\rho_S) \log \mathcal{D}(\rho_S)} - \tr{}{\mathcal{D}(\rho_S) \log \gamma_S})  
	\end{equation}
	The first two terms are $kT A(\rho_S)$ and the last two are $\Delta F(\mathcal{D}(\rho_S))$. As we have discussed before, $A(\mathcal{T}(\rho_S)) \leq A(\rho_S)$. Furthermore, by noting that $\mathcal{T}$ commutes with $\mathcal{D}$ (again due to symmetry), one can immediately verify using the contractivity of the relative entropy that $\Delta F_C (\mathcal{T}(\rho_S)) \leq \Delta F_C(\rho_S)$.
\end{proof}
From these considerations it is simple to see why the transformation $\ket{E} \rightarrow \ket{+}$ cannot happen under Thermal Operations: while for $E>0$ large enough certainly the classical free energy as well as the (total) quantum free energy are decreasing in the process, one has $A(\ket{+}) = \log 2$, $A(\ket{E}) = 0$, so the coherent component of the free energy would be increasing if the transition was possible. This immediately rules out the above as an allowed transformation. That the set of channels satisfying property~\eqref{eq:GPcondition} (Gibbs-preserving condition) but not property~\eqref{eq:covariancecondition} (symmetry condition) `outperform' Thermal Operations \cite{faist2015gibbs} should be intuitive from the above considerations, since they only have to decrease the total free energy.

We note in passing that both terms in the decomposition have an operational meaning: $\Delta F(\rho_S)$ is the maximum amount of work that can be extracted on average from the quantum state $\rho_S$ by applying general unitaries on system and a thermal environment;  while $\Delta F_C(\rho_S)$ is the maximum amount of work that can be extracted from $\rho_S$ on average by the same set of protocols after an energy measurement that destroys energy coherence (see, e.g., Ref.~\cite{kammerlander2016coherence} and the Appendix of Ref.~\cite{baumer2018fluctuating}). 
 
 \subsubsection{Application: work-locking and limits of semiclassical treatments}

Let us go back to the question of work extraction, discussed in Sec.~\ref{sec:workextractionincoherent} for incoherent states. One looks for a Thermal Operation of the form
\begin{equation}
\label{eq:workextractionprocess}
\mathcal{T}(\rho_S \otimes \ketbra{0}{0}_W) = \gamma_S \otimes \sigma_W,
\end{equation}
where $\sigma_W$ is some diagonal state that stores the work extracted from $\rho_S$ (for example, $\sigma_W = \ketbra{1}{1}$ with \mbox{$H_W = W \ketbra{1}{1}$} for deterministic work extraction). As discussed, since $\mathcal{T}$ are time-translation symmetric channels they commute with the dephasing operator $\mathcal{D}$. Applying a dephasing to both sides of Eq.~\eqref{eq:workextractionprocess} allow us to conclude that if the transformation in Eq.~\eqref{eq:workextractionprocess} is possible, also the following is possible:
\begin{equation}
\mathcal{T}(\mathcal{D}(\rho_S) \otimes \ketbra{0}{0}_W) = \gamma_S \otimes \sigma_W.
\end{equation}
In other words, the work that can be extracted from $\rho_S$ cannot exceed the work that can be extracted from $\mathcal{D}(\rho_S)$. This may seem a bit puzzling, since $\Delta F (\rho_S) > \Delta F(\mathcal{D}(\rho_S))$ for every state with coherence, and in particular the difference is exactly the coherent part of the non-equilibrium quantum free energy. The impossibility to convert the coherent part of the free energy into work with Thermal Operations was called \emph{work locking} in Ref.~\cite{lostaglio2015description}. There are interesting tradeoffs between the amount of coherence in the energy degenerate subspaces, which can be extracted as work, and coherence among distinct eigenspaces, which due to work locking cannot be extracted \cite{kwon2018clock}. 

To access the coherent part of the free energy, we need an external source of coherence. More specifically, we need to have at our disposal an ancillary system $R$, with Hamiltonian $H_R$ and in a state $\sigma_R$ with $[\sigma_R,H_R]\neq 0$, that aids the transformation. What $R$ does it to break the time-translation symmetry on $SW$. By inducing on $SW$ a channel that does not commute with $\mathcal{D}$ we circumvent work locking. 

$R$ is known as a quantum reference frame \cite{bartlett2007reference}; often this role is implicitly played by the classical field that, in standard treatments, is responsible for a generic unitary that one is allowed to apply on $S$, as in the discussion around Eq.~\eqref{eq:workextractionpassive}. Within this semiclassical approach, one simply posits that the change of average energy in the system during a unitary process is work; but for small scale thermodynamics this can be problematic, because it neglects the back-reaction of the system on the field, which may deteriorate. For example, one may assume that a unitary $U$ on the system and bath is realised as
\begin{equation}
\tr{R}{V(U) \rho_{S} \otimes \gamma_B \otimes \ketbra{\alpha}{\alpha}_R V(U)^\dag} \approx U \rho_{SB} U^\dag,
\end{equation}
where $V(U)$ is an energy-preserving unitary involving $SB$ and a field state represented, for example, by an optical coherent state $\ket{\alpha}_R$ (for simplicity, let us use $R$ also as a work storage system). 
A self-contained treatment accounts for the back-reaction on $\ket{\alpha}_R$, as well as the fact that unitaries on $SB$ can be realised only approximatively. To simply dismiss the problem by saying that the change of the state on $R$ is very small and can be neglected does not suffice: one could argue in the same way that the amount of extracted work is very small and can hence be neglected! Back-reactions can sum up over many uses, and in principle be large once a sizeable amount of overall work is extracted.

One can indeed approach the standard result $W_{\rm ave} \rightarrow \Delta F(\rho_S)$, when a very large but finite coherent source is at our disposal; but we only know of very specific interactions that are able prevent the deterioration of the field (which can only exist in infinite dimensional systems \cite{lostaglio2019coherence}), and protocols exploiting them require an energy investment whose rate becomes small only in the limit of a very large number of uses of the field. For further considerations on these issues, see e.g.  Ref.~\cite{klimovsky2013work, aberg2014catalytic, korzekwa2016extraction}.  
 
   \subsubsection{Modes of coherence and hierarchy of thermodynamic constraints}
   \label{subsection:modes}
   
   We discussed the role of the coherent properties of a quantum state in thermodynamics. In this respect, quantum states that are not symmetric admit a more refined decomposition in `chunks' that transform very simply under time translations. These are called modes of asymmetry for a general group $G$ \cite{marvian2014modes}, but here we will focus on $G=U(1)$, where they are called \emph{modes of coherence}: 

   \begin{defn}
   	Given $H_S$, construct the Bohr spectrum $\Omega$ defined as the set of all transition frequencies: \{$\omega \in \Omega \Leftrightarrow \exists E^S_i, E^S_j \in {\rm spec}(H_S)| \omega = E^S_i - E^S_j$\}, where ${\rm spec}(H_S)$ denotes the spectrum of $H_S$. If $\rho_S$ is a quantum state acting on $S$, it can be decomposed as
   	\begin{equation}
   	\rho_S = \sum_{\omega \in \Omega} \rho_S^{(\omega)},
   	\end{equation}
   	where each $\rho_S^{(\omega)}$ satisfies $\mathcal{U}_t(\rho_S^{(\omega)}) = e^{-i \omega t} \rho_S^{(\omega)}$ and is called \emph{mode of coherence} $\omega$.
   \end{defn}
\begin{ex}
	Let $H_S = \sum_n n E  \ketbra{n}{n}$. The modes of coherence of $\rho_S$ are given by $\omega = E\{\dots, -2,-1,0, 1,2,\dots\}$, with $
	\rho_S^{(\omega)} = \sum_{n=0}^{+\infty} \rho_{n+\omega,n} \ketbra{n+\omega}{n}$, where $\rho_{n,n+\omega}$ are the matrix elements of $\rho_S$ in the energy eigenbasis.
\end{ex}

\begin{rmk}
In most elementary examples it should be straightforward to identify the modes of coherence. There are however systematic ways of constructing them. In fact, finding them corresponds to decomposing $\rho_S$ according to a so-called irreducible tensor operator basis \cite{marvian2014modes}. In other words, $\rho_S = \sum_{\omega,m} \rho^{(\omega)}_m $ where $\mathcal{U}_g(\rho^{(\omega)}_m) = \sum_{m'} u^{(\omega)}_{mm'}(g) \rho^{(\omega)}_{m'}$ and $u^{(\omega)}_{mm'}(g)$ are the matrix elements of the irreducible representation of $U_g \otimes U^*_g$ labelled by $\omega$ (this relation can be understood by vectorisation). $\rho^{(\omega)}_m$ are known as an irreducible tensor operator basis. 
\end{rmk}

The relation between the assumption that the dynamics is symmetric under time translation and the modes of coherence is simple:  
\begin{thm}
	$\mathcal{T}$ is $G$-covariant if and only if 
	\begin{equation}
	\label{eq:modesintomodes}
	\mathcal{T}(\rho_S^{(\omega)}) = \mathcal{T}(\rho_S)^{(\omega)}, \quad \forall \rho_S, \forall \omega \in \Omega.
	\end{equation}
\end{thm}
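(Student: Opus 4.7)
The plan is to prove both implications directly by using the defining property of modes, $\mathcal{U}_t(\rho_S^{(\omega)}) = e^{-i\omega t}\rho_S^{(\omega)}$, together with the linearity of $\mathcal{T}$. A preliminary ingredient that I expect to need explicitly is the \emph{uniqueness} of the mode decomposition: if an operator $X$ can be written as $X = \sum_{\omega \in \Omega} X^{(\omega)}$ with each $X^{(\omega)}$ satisfying $\mathcal{U}_t(X^{(\omega)}) = e^{-i\omega t} X^{(\omega)}$, then the $X^{(\omega)}$ are unique. This follows from the time-averaging projector
\begin{equation}
X^{(\omega)} = \lim_{T\to\infty}\frac{1}{T}\int_0^T e^{i\omega t}\, \mathcal{U}_t(X)\, dt,
\end{equation}
which picks out the $\omega$-component because the off-resonant terms $e^{i(\omega-\omega')t}$ average to zero over the discrete Bohr spectrum $\Omega$.

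For the forward direction, assume $\mathcal{T}$ is $G$-covariant with $G=U(1)$, i.e.\ $\mathcal{T}\circ \mathcal{U}_t = \mathcal{U}_t \circ \mathcal{T}$ for all $t$. Then for each fixed $\omega$,
\begin{equation}
\mathcal{U}_t\bigl(\mathcal{T}(\rho_S^{(\omega)})\bigr) = \mathcal{T}\bigl(\mathcal{U}_t(\rho_S^{(\omega)})\bigr) = \mathcal{T}\bigl(e^{-i\omega t}\rho_S^{(\omega)}\bigr) = e^{-i\omega t}\,\mathcal{T}(\rho_S^{(\omega)}),
\end{equation}
so $\mathcal{T}(\rho_S^{(\omega)})$ is itself a mode of coherence with frequency $\omega$. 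Applying $\mathcal{T}$ to $\rho_S = \sum_\omega \rho_S^{(\omega)}$ and using linearity gives $\mathcal{T}(\rho_S) = \sum_\omega \mathcal{T}(\rho_S^{(\omega)})$, which is a decomposition of $\mathcal{T}(\rho_S)$ into operators each transforming with the correct frequency. By uniqueness of such a decomposition, the $\omega$-component of $\mathcal{T}(\rho_S)$ is exactly $\mathcal{T}(\rho_S^{(\omega)})$, which is the claim in Eq.~\eqref{eq:modesintomodes}.

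For the reverse direction, assume $\mathcal{T}(\rho_S^{(\omega)}) = \mathcal{T}(\rho_S)^{(\omega)}$ for every $\rho_S$ and every $\omega$. Expanding $\rho_S$ into modes and using linearity of $\mathcal{T}$ and of $\mathcal{U}_t$,
\begin{equation}
\mathcal{T}\bigl(\mathcal{U}_t(\rho_S)\bigr) = \sum_\omega e^{-i\omega t}\,\mathcal{T}(\rho_S^{(\omega)}) = \sum_\omega e^{-i\omega t}\,\mathcal{T}(\rho_S)^{(\omega)} = \mathcal{U}_t\bigl(\mathcal{T}(\rho_S)\bigr),
\end{equation}
so $\mathcal{T}$ commutes with $\mathcal{U}_t$ for every $t$, i.e.\ is covariant.

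The only non-routine step is the uniqueness of the mode decomposition, so that is where I would spend most care: one must argue that the time-averaging projector above is well-defined on the (finite-dimensional) operator space and that the $\{e^{-i\omega t}\}_{\omega\in\Omega}$, viewed as functions of $t$, are linearly independent, which follows because $\Omega$ is a discrete set of real numbers. Everything else is bookkeeping with the two definitions.
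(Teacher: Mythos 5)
Your proof is correct and follows essentially the same route as the paper: both arguments rest on the time-average projector $P_\omega(\cdot)=\int dt\, e^{i\omega t}\,\mathcal{U}_t(\cdot)$ (the paper commutes $P_\omega$ directly through $\mathcal{T}$, while you use the same object to justify uniqueness of the mode decomposition and then invoke that uniqueness — the same idea packaged slightly differently), and your reverse direction is identical to the paper's.
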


\begin{proof}
	Let $P_\omega(\cdot) = \int dt e^{i \omega t} \mathcal{U}_t(\cdot)$. By direct computation it should be clear that $P_\omega$ is a projector on mode $\omega$, so that $P_\omega (\rho_S^{(\omega')}) = \delta_{\omega \omega'} \rhomega$. Then, assuming $\mathcal{T}$ is covariant,
	\begin{equation}
\mathcal{T}(\rho_S)^{(\omega)} = P_\omega \mathcal{T}(\rho_S) = \mathcal{T}(P_\omega(\rho_S)) = \mathcal{T}(\rho_S^{(\omega)}).
	\end{equation}
	Conversely, assume Equation~\eqref{eq:modesintomodes} holds. Then
	\begin{equation}
	\mathcal{U}_t \mathcal{T}(\rho_S) = \mathcal{U}_t \sum_{\omega} \mathcal{T}(\rho_S)^{(\omega)} =  \sum_{\omega} e^{-i \omega t} \mathcal{T}(\rho_S)^{(\omega)} = \mathcal{T}\left( \sum_{\omega}  e^{-i \omega t} \rhomega\right) = \mathcal{T}\left( \sum_{\omega}  \mathcal{U}_t (\rhomega)\right) = \mathcal{T}(\mathcal{U}_t (\rho_S)). 
	\end{equation} 
\end{proof}

We see now that we can separate the various constraints imposed by Thermal Operations as follows. Suppose there exist a Thermal Operation $\mathcal{T}$ such that $\mathcal{T}(\rho_S) = \sigma_S$. Then 
\begin{equation}
\label{eq:hierarchy}
\mathcal{T}(\rho_S^{(\omega)}) = \sigma_S^{(\omega)}, \quad \forall \omega \in \Omega.
\end{equation} 
The zero mode corresponds to the vector of population. Then, thanks to Theorem~\ref{th:thermalgibbs}, after an obvious correspondence between diagonal matrices and vectors of probabilities, the mode $\omega =0$ constraints corresponds to the existence of a Gibbs-stochastic matrix $G$ such that $G \rho_S^{(0)} = \sigma_S^{(0)}$, which is equivalent to 
\begin{equation}
\rho_S^{(0)} \succ_g \sigma_S^{(0)},
\end{equation}
i.e. thermo-majorisation. Hence, thermo-majorisation is a zero mode constraint of a hierarchy that also includes $\mathcal{T}(\rho_S^{(\omega)}) = \sigma_S^{(\omega)}$ for $\omega >0$, $\omega \in \Omega$.

\subsection{Thermodynamic constraints on the evolution of quantum coherence}

 \subsubsection{A general theorem connecting population and coherence constraints}
 
 So far we have considered the coherence constraints independently of the population dynamics, but it is clear that, if a quantum channel implements a given dynamics on the population, the corresponding coherent evolutions are limited by the overall complete positivity of the map.
Given some initial state $\rho_S$, for any given $x,y$ we are interested in 
\begin{align}
\max_{\E} & \; \; |\mathcal{E}(\rho_S)_{xy}| \\
\textrm{ subject to } & \E \circ \mathcal{U}_t = \mathcal{U}_t \circ \mathcal{E} \quad \forall t \\
  & \label{eq:classicalaction} \E(\ketbra{x}{x}) = \sum_{x'} P_{x'|x} \ketbra{x'}{x'},
\end{align}
for a stochastic matrix $P$ which will be later identified with the Gibbs-stochastic matrix that Thermal Operations induce on the population vector. 
We note in passing that the above can be written as a semidefinite program, using the channel-state duality and seeing the symmetry constraint $\E \circ \mathcal{U}_t = \mathcal{U}_t \circ \mathcal{E} $ as a projection of the space of quantum maps on the covariant subset (a `super $G$-twirling', see Eq.~(2.17) of Ref.~\cite{bartlett2007reference}). In physical terms, we can think of this problem as follows: if we know the classical action $P$ of $\E$, representing the energy flows induced by $\mathcal{E}$, how much coherence can be preserved? We will express the off-diagonal matrix elements of $\rho_S$ in terms of their magnitudes and phase factors as $\rho_{xy}=|\rho_{xy}|\vartheta_{xy}$. The symbol $\sum^{(\omega)}_{x,y}$ will indicate a sum over all indices $x$,$y$ such that $\omega_x - \omega_y = \omega$. Denoting $\sigma_S = \mathcal{E}(\rho_S)$, one has 
\begin{thm}[\cite{lostaglio2015quantum}, tightness conditions in~\cite{lostaglio2017markovian})]
	\label{thm:CP_bound}
	Let $\E$ be a time-translation symmetric map such that \mbox{$\sigma_S=\E(\rho_S)$}, and satisfying Eq.~\eqref{eq:classicalaction}. Then $|\sigma_{x'y'}|$ is bounded as
	\begin{equation}
	\label{eq:CP_bound}
	|\sigma_{x'y'}| \leq \sum^{\ompp{x}{y}}_{x,y} \sqrt{P_{x'|x}P_{y'|y}} |\rho_{xy}|, \quad \omega_{x'y'} := \omega_{x'}-\omega_{y'}.
	\end{equation}
\end{thm}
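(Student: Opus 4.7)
The plan is to exploit the existence of a Kraus decomposition with definite modes for time-translation covariant channels, then reduce the bound to an application of the triangle inequality and Cauchy--Schwarz.

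First I would invoke the fact that any $U(1)$-covariant channel $\mathcal{E}$ admits a Kraus representation $\mathcal{E}(\rho_S) = \sum_\alpha K_\alpha \rho_S K_\alpha^\dagger$ in which each Kraus operator carries a definite mode $\mu_\alpha$, in the sense that $e^{-i H_S t} K_\alpha e^{i H_S t} = e^{-i \mu_\alpha t} K_\alpha$. Equivalently, writing $c^\alpha_{x'x} := \bra{x'} K_\alpha \ket{x}$, one has $c^\alpha_{x'x} = 0$ whenever $\omega_{x'} - \omega_x \neq \mu_\alpha$. This follows from the mode decomposition of the Choi state of $\mathcal{E}$ (alternatively, one applies $P_\omega$ to each Kraus operator). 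The main conceptual step is packaging covariance into this definite-mode form; the rest is computation.

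Next I would compute $\sigma_{x'y'} = \sum_\alpha \bra{x'} K_\alpha \rho_S K_\alpha^\dagger \ket{y'} = \sum_\alpha \sum_{x,y} c^\alpha_{x'x} (c^\alpha_{y'y})^* \rho_{xy}$. A pair $(x,y)$ contributes only if both $c^\alpha_{x'x}$ and $c^\alpha_{y'y}$ are nonzero for some $\alpha$, which forces $\omega_{x'}-\omega_x = \mu_\alpha = \omega_{y'} - \omega_y$, i.e.\ $\omega_{xy} = \omega_{x'y'}$. Hence the only surviving terms lie in the sum $\sum^{(\omega_{x'y'})}_{x,y}$, recovering the mode-matching structure of the claim.

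Finally I apply the triangle inequality to bring absolute values inside the sum over $(x,y)$, and bound each coefficient via Cauchy--Schwarz:
\begin{equation}
\left| \sum_\alpha c^\alpha_{x'x} (c^\alpha_{y'y})^* \right| \leq \sqrt{\sum_\alpha |c^\alpha_{x'x}|^2} \, \sqrt{\sum_\alpha |c^\alpha_{y'y}|^2}.
\end{equation}
The point is then to recognise $\sum_\alpha |c^\alpha_{x'x}|^2 = \bra{x'} \mathcal{E}(\ketbra{x}{x}) \ket{x'} = P_{x'|x}$, and similarly $\sum_\alpha |c^\alpha_{y'y}|^2 = P_{y'|y}$. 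Plugging this in yields $|\sigma_{x'y'}| \leq \sum^{(\omega_{x'y'})}_{x,y} \sqrt{P_{x'|x} P_{y'|y}} \, |\rho_{xy}|$, as claimed. The only subtle step is the first one (existence of a mode-definite Kraus decomposition); the inequality itself is a clean triangle-inequality/Cauchy--Schwarz calculation once the Kraus operators are organised by mode.
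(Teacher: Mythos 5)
Your proof is correct and is essentially the paper's proof in different notation: the paper establishes that covariance makes the Choi state block-diagonal in the eigenbasis of $H_S \otimes \I - \I \otimes H_S^*$ and then uses positivity of each block to bound its off-diagonal entries $c^{x'|x}_{y'|y}$ by $\sqrt{P_{x'|x}P_{y'|y}}$, which is exactly your Cauchy--Schwarz step once one notes that a mode-definite Kraus decomposition is a choice of Gram vectors for those blocks (so $c^{x'|x}_{y'|y} = \sum_\alpha c^\alpha_{x'x}(c^\alpha_{y'y})^*$). The mode-matching selection rule and the final triangle inequality coincide with the paper's, so no gap.
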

\begin{proof} We follow the proof given in Ref.~\cite{lostaglio2017markovian}.	The complete positivity of $\E$ is equivalent to the positivity of the Choi-Jamio\l kowski state $J[\E] := \E \otimes \mathcal{I} (\Phi_+)$, where $\Phi_+$ is the maximally entangled state ($\Phi_+ = \ketbra{\phi_+}{\phi_+}$, $\ket{\phi_+} \propto \sum_{i} \ket{ii}$) \cite{choi1975completely,bengtsson2006geometry}. However, $J[\mathcal{E}]$ satisfies \mbox{$e^{-i\widetilde{H} t} J[\mathcal{E}] e^{i\widetilde{H} t} = J[\mathcal{E}]$}, where \mbox{$\widetilde{H}= H_S \otimes \I -\I \otimes H_S^* $} (see Ref.~\cite{dariano2001optimal}, Eqs.~(18)-(19)). This can be seen as follows: 	for any unitary $U$, $\mathcal{U} \otimes \mathcal{U}^* (\Phi_+) = \Phi_+$ where $\mathcal{U}(\cdot) = U(\cdot)U^\dag$. 
	Taking $U= e^{-i H_S t}$ we obtain
	\be                                             
	[\mathcal{U}_t \circ \mathcal{E} \circ \mathcal{U}^\dag_t \otimes \mathcal{I}] (\Phi_+) = [\mathcal{U}_t \circ \mathcal{E} \otimes \mathcal{I}](\mathcal{I}\otimes \mathcal{U}^*_t)(\Phi_+) 
	= \mathcal{U}_t \otimes \mathcal{U}_t^* [\mathcal{E} \otimes \mathcal{I}](\Phi_+).
	\ee
	This immediately implies (using that $J$ is an isomorphism), $\mathcal{U}_t \circ \mathcal{E} \circ \mathcal{U}^\dag_t = \mathcal{E} \Leftrightarrow \mathcal{U}_t \otimes \mathcal{U}_t^* J[\mathcal{E}] = J[\mathcal{E}]$. 
	Hence, $\mathcal{E}$ is covariant if and only if $J[\mathcal{E}]$ is symmetric with respect to the Hamiltonian $\tilde{H}= H_S \otimes \I - \I \otimes H^*_S$.
	
	Hence, the Choi state is block diagonal in the eigenbasis of $\widetilde{H}$ and the positivity of $J[\E]$ is equivalent to positivity of each block. From the definition of the Choi state and denoting by $c^{x'|x}_{y'|y} = \bra{x'}\mathcal{E}(\ketbra{x}{y})\ket{y'}$ we get
	\be
	\label{eq:choi}
	J[\mathcal{E}]= \sum_{x,y} \sum^{\om{x}{y}}_{x',y'} c^{x'|x}_{y'|y}\ketbra{x'}{y'} \otimes \ketbra{x}{y}
	= \sum_{x',x} \sum^{\omp{x}{x}}_{y',y} c^{x'|x}_{y'|y}\ketbra{x'x}{y'y},
	\ee
	where we have rearranged the expression to emphasise the block-diagonal structure (we used $\omega_{x'y'} = \omega_{x y} \Leftrightarrow \omega_{x'x} = \omega_{y'y}$). Each block consists of matrix elements $c^{x'|x}_{y'|y}$ for which \mbox{$\omega_{x'}-\omega_x = \omega_{y'}-\omega_y=\omega$} and can thus be labelled by $\omega$ (see Fig.~\ref{fig:choi}). A necessary condition for the positivity of block $\omega$ is that for all $x,y$ and $x',y'$ within the block one has
	\begin{equation}
	\label{eq:opt_coh}
	|c^{x'|x}_{y'|y}|\leq \sqrt{P_{x'|x}P_{y'|y}}.
	\end{equation}
	Since $\sigma_{x'y'} = \sum^{(\omega_{x'y'})}_{x,y}c^{x'|x}_{y'|y}\rho_{xy}$, by the triangle inequality and Eq.~\eqref{eq:opt_coh} we obtain the result claimed in Eq.~\eqref{eq:CP_bound}.
\end{proof}

\begin{figure}[h]
	\begin{center}
		\includegraphics[width=0.5\textwidth]{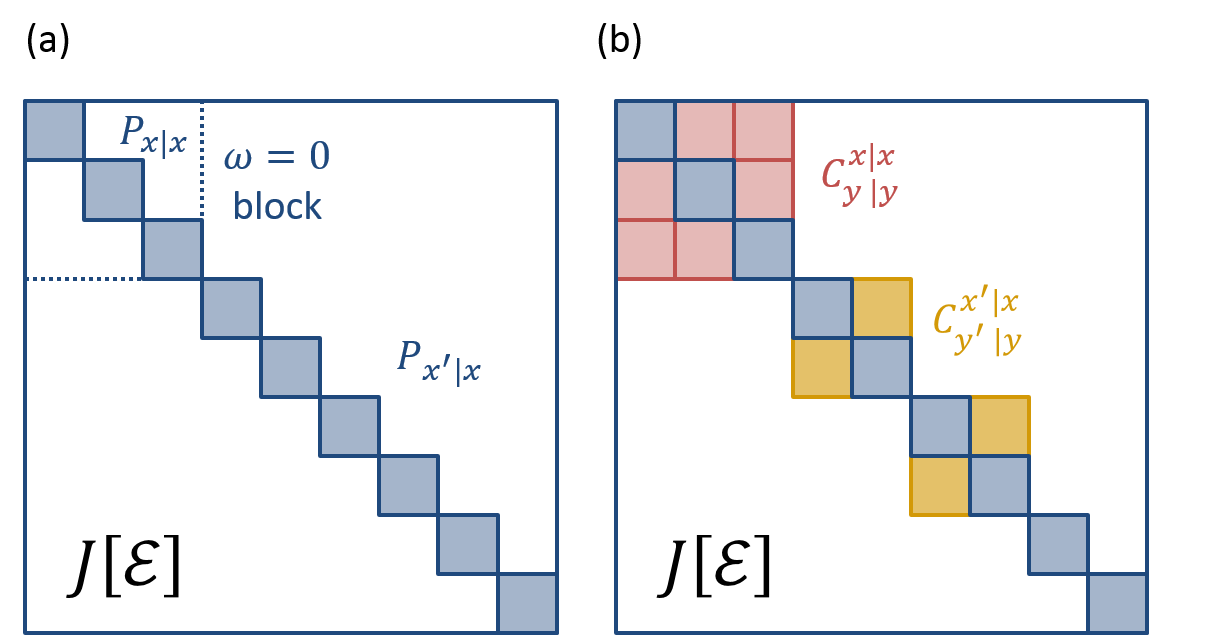}
	\end{center}
	\caption{Structure of the Choi matrix of a covariant channel. In (a) we emphasise that the diagonal elements of each block $\omega$ correspond to transition probabilities $P_{x'|x}$ with $\omega_{x'}- \omega_x = \omega$; in (b) we show that the off-diagonal elements of each block $\omega$ corresponds to transition amplitudes $c^{x'|x}_{y'|y}$ with $\omega_{x'}- \omega_{y'} = \omega_{x} - \omega_y = \omega$. Picture from Ref.~\cite{lostaglio2017markovian}.}
	\label{fig:choi}
\end{figure}

\subsubsection{Application: qubit Thermal Operations}

We now briefly survey two interesting applications of the previous theorem: the complete solution to the question of state transformations $\rho_S \rightarrow \sigma_S$ under Thermal Operations for qubit systems, and a result on the irreversibility of coherence transfers.

First, we can solve the theory of Thermal Operations for a single qubit. In these two-dimensional systems the most general Gibbs-stochastic matrix acting on the diagonal can be expressed as a function of a single parameter $\lambda$:
\begin{equation}
G = \begin{pmatrix}
1-\lambda e^{-\beta E} & \lambda \\
e^{-\beta E} \lambda & 1-\lambda \end{pmatrix}, \quad  \lambda \in [0,1].
\end{equation}
Define the initial and final state
\begin{equation}
\rho_S = \begin{pmatrix}
p & c \\
c & 1-p \end{pmatrix}, \quad \quad \sigma_S = \begin{pmatrix}
q & d \\
d & 1-q \end{pmatrix},
\end{equation}
where without loss of generality we can take $c,d \geq 0$, since an energy preserving unitary allows us to adjust the phase of the off-diagonal terms. 
 
The condition that $(p,1-p)$ is mapped into $(q,1-q)$ fixes $\lambda$, giving $\lambda = \frac{q-p}{g-p}g$.  
Theorem~\ref{eq:CP_bound} implies that $d \leq \sqrt{G_{0|0} G_{1|1}} c = \sqrt{(1- \lambda e^{-\beta E})(1-\lambda)}c$. This provides the final relation (see Fig.~\ref{fig:qubitthermal})
\begin{equation}
\label{eq:qubitboundcoherence}
d\leq \frac{\sqrt{(q(1-g)-g(1-p))(p(1-g)-g(1-q))}}{|p-g|}c.
\end{equation}
One can see that the bound is achievable by means of the Gibbs-preserving and time-translation symmetric channel (i.e., Thermal Process) \mbox{$\E(\cdot) = \sum_{\omega}K_{\omega} (\cdot) K_{\omega}^\dag$} with Kraus operators
 \begin{equation}
 K_0 = \sqrt{G_{0|0}} \ketbra{0}{0} + \sqrt{G_{1|1}} \ketbra{1}{1}, \quad K_1 = \sqrt{G_{1|0}} \ketbra{1}{0}, \quad K_{-1} = \sqrt{G_{0|1}} \ketbra{0}{1},
 \end{equation}
 fixed by the above choice of $\lambda$.
 One can directly check that $\mathcal{E}(\gamma_S) = \gamma_S$ and $\mathcal{E} \circ \mathcal{U}_t = \mathcal{U}_t \circ \mathcal{E}$ (covariance also follows immediately from Proposition~7 of Ref.~\cite{marvian2016how}). A direct calculation shows that this channel saturates the bound of Eq.~\eqref{eq:qubitboundcoherence}. Any other state `inside the boundary' can be achieved by this optimal channel followed by a partial dephasing $\mathcal{D}_s = (1-s)\mathcal{I} + s \mathcal{D}$, where $s \in [0,1]$ and $\mathcal{I}$ is the identity channel ($\mathcal{D}_s$ is a Thermal Operation). 
 
 In fact, it was proved in Ref.~\cite{cwiklinski2015limitations} that this transformation can be achieved by a Thermal Operation.
 
 \begin{figure}[h]
 	\begin{center}
 		\includegraphics[width=0.4\textwidth]{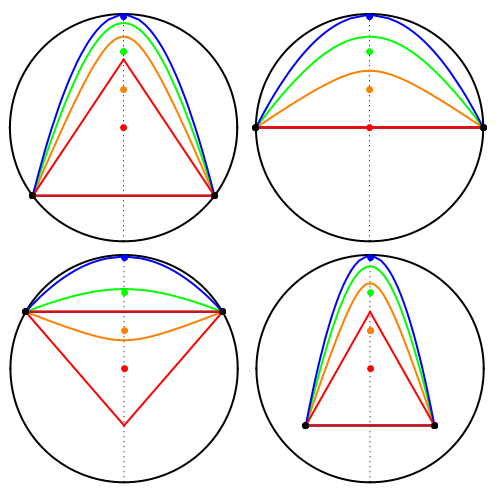}
 	\end{center}
 	\caption{Boundaries of the set of qubit states achievable by means of Thermal Operations for various choices of temperatures and initial states. The regions are represented on the $xz$ plane of the Bloch sphere due to symmetry under rotations about $z$ (the state $\ket{0}$ is on top and the state $\ket{1}$ at the bottom). Initial states are the black dots at the boundary of the various regions. The red/orange/green/blue dots on the $z$ axis represent the thermal state for different choices of the temperature $T$, from $T=\infty$ (red) to $T=0$ (blue). The corresponding boundaries represent the extremal states achievable from the initial state with a Thermal Operation at that temperature. Note in passing that the region is a polytope only for $T=\infty$. Figure from Ref.~\cite{lostaglio2015quantum}.}
 	\label{fig:qubitthermal}
 \end{figure}

 \subsubsection{Application: irreversibility in coherence transfers}
 We have seen that coherence transformations have a mode structure, but it is interesting to analyse in more detail how the evolution of coherence inside a given mode happens. We do this only for a very simple example \cite{lostaglio2015quantum}.
 
 Let us assume that $\rho_S$, with Hamiltonian $H_S = \sum_{i=0}^2 E^S_i \ketbra{E^S_i}{E^S_i}$, has a single non zero coherence element $|\rho_{01}|>0$, i.e. a superposition of energies $E^S_0$ and $E^S_1$ (and any population vector). We wish to transport this coherence `up in energy' to a superposition of the energies $E^S_1$ and $E^S_2$, where $\Delta E = E^S_2 - E^S_1 = E^S_1 - E^S_0$; in other words, we want the final state $\sigma_S$ to have the largest $|\sigma_{12}|$ possible, and we do not care about the final population.  What are the limits imposed by Thermal Operations, and how do they compare to the reverse process of transporting coherence `down in energy' $12 \rightarrow 01$?
 \begin{equation}
 \rho_S = \begin{pmatrix}
 ? & |\rho_{01}| & ? \\
 |\rho_{01}|  & ? & 0\\
 ? & 0 & ?
 \end{pmatrix} \quad \longleftrightarrow \quad  \sigma_S = \begin{pmatrix}
 ? & 0 & ? \\
 0  & ? & |\sigma_{12}|\\
 ? & |\sigma_{12}| & ?
 \end{pmatrix}. 
 \end{equation}
 
  Recall that one has the bound
 \begin{equation}
 	\label{eq:boundcovarianttrit}
 |\sigma_{12}| \leq \sqrt{G_{1|0} G_{2|1}} |\rho_{01}|.
 \end{equation}
 From the Gibbs-preserving condition, denoting the thermal vector of the system by $\v{g} = (g_0,g_1,g_2)$,
 \begin{equation}
 G_{1|0}g_0 + G_{1|1} g_1 + G_{1|2} g_2 = g_1 \Rightarrow G_{1|1} = 1- G_{1|0}g_0/g_1 - G_{1|2}g_2/g_1
 \end{equation}
 Since $G_{1|1} \geq 0$, this gives $
 G_{1|0} \leq g_1/g_0 - G_{1|2}g_2/g_0 \leq g_1/g_0 = e^{-\beta \Delta E}$.
 In fact, the same reasoning yields \mbox{$G_{i|j} \leq e^{-\beta (E^S_i-E^S_j )}$}. By substitution in Eq.~\eqref{eq:boundcovarianttrit} we get
 \begin{equation}
 \label{eq:finaltritbound}
 |\sigma_{12}| \leq e^{-\beta \Delta E} |\rho_{01}|.
 \end{equation}
  Taking coherence ``up in energy'' can be done, but an exponential amount is lost. The bound is achievable and, furthermore, the reverse process ($12 \rightarrow 01$) can be done perfectly. To see this, take a single bosonic mode in a thermal state, $\gamma_B=\frac{1}{Z}\sum_{n=0}^{\infty}e^{-\beta n \Delta E}\ketbra{n}{n}$,
 where $Z=(1-e^{-\beta \Delta E})^{-1}$. Consider now the energy preserving unitary on system and bath given by
 \begin{equation*}
 	U=\ketbra{00}{00}+\ketbra{01}{10}+\ketbra{10}{01}
 	+\sum_{i=2}^{\infty}\ketbra{1;i-1}{2;i-2}+\ketbra{0;i}{1;i-1}+\ketbra{2;i-2}{0;i}.
 \end{equation*}
 It is a direct calculation to show
 \begin{align*}
 	\tr{B}{U(\ketbra{2}{1}\otimes\gamma_B)U^{\dagger}}=\ketbra{1}{0}, \quad \quad  & \textrm{(perfect transport of coherence down in energy)}\\
 	\tr{B}{U^{\dagger}(\ketbra{1}{0}\otimes\gamma_B)U}=e^{-\beta \Delta E}\ketbra{2}{1}, \quad & \textrm{(exponentially damped transport of coherence up in energy)}
 \end{align*}
 This illustrates how the irreversibility of energy transfers under Thermal Operations is reflected in the irreversibility of coherence transfers within a mode .
 
 \section*{Concluding remarks}
 
 There are many other results in the resource theory of Thermal Operations that I either only touched upon very briefly, or I did not discuss at all. However, you should now have the necessary background to explore the most recent developments. An incomplete list includes: the low temperature regime and the third law \cite{narasimhachar2015low,wilming2017third, masanes2017general}, interpolations between the single shot and average \cite{aberg2013truly, richens2016work} as well as single-copy and collective \cite{llobet2019collective} work extraction regimes, fluctuation theorems in the resource theory context \cite{aberg2016fully, alhambra2016fluctuating2, holmes2018coherent, boes2019passing}, relating the resource theory framework to the axiomatic approach of Lieb and Yngvason \cite{lieb2013entropy, weilenmann2015axiomatic}, correlations in the single-shot regime \cite{lostaglio2015stochastic, mueller2017correlating, sapienza2019correlations} (including quantum coherence \cite{lostaglio2019coherence, marvian2019no}) limits to catalysis \cite{ng2015limits},  explorations beyond i.i.d. limits \cite{chubb2017beyond, korzekwa2019avoiding}, i.i.d in many-body systems \cite{sagawa2019asymptotic}, approximate transformations~\cite{renes2016relative, horodecki2017approximate, van2017smoothed}, finite heat capacities \cite{richens2017finite}, conditioned Thermal Operations \cite{narasimhachar2017resource}, Gaussian Thermal Operations~\cite{serafini2019gaussian, narasimhachar2019thermodynamic}, study of the role of batteries \cite{lipka2019second}, thermodynamics of quantum channels~\cite{faist2019thermodynamic}, optimal cooling protocols~\cite{alhambra2018heat} and many more directions more or less tightly related to the framework described here (see, e.g., references in \cite{goold2016role, vinjanampathy2016quantum, ng2018resource, bera2019thermodynamics}). I hope this introduction will help you navigate the growing literature of this subject, develop new connections with complementary approaches, find practical applications to the framework and identify genuinely quantum effects in quantum thermodynamics.

\bigskip

 {\bf Acknowledgments.} I am particularly indebted to Kamil Korzekwa, Antony Milne, Terry Rudolph for many discussions on these topics during my PhD, and in particular David Jennings and Raam Uzdin for their extensive comments. Many thanks to Antonio Acin, Alessio Belenchia, Dario Egloff, Chung-Yun Hsieh, Mohammad Mehboudi, Markus M\"uller, Marti' Perarnau-Llobet, Valerio Scarani, Ivan \u{S}upi\'c for useful discussions and helpful comments on an earlier draft, and Iman Marvian for the argument in Remark~4. I acknowledge financial support from the the European Union's Marie Sk\l odowska-Curie individual Fellowships (H2020-MSCA-IF-2017, GA794842), Spanish MINECO (Severo Ochoa SEV-2015-0522 and project QIBEQI FIS2016-80773-P), Fundacio Cellex and Generalitat de Catalunya (CERCA Programme and SGR 1381).
 \bibliography{Bibliography_thermodynamics}
 
 \appendix
 
 \section*{Appendix: Proof of Hardy-Littlewood-Polya theorem (Theorem~\ref{thm:majorisationbistochastic})}

 \label{appendix:proofhlp}
\begin{proof} If $\v{x}, \v{y} \in \R^2$, then $\v{x} \succ \v{y}$ if and only if $x^{\downarrow}_1 \geq y^{\downarrow}_1$ and \mbox{$x^{\downarrow}_1 + x^{\downarrow}_2 = y^{\downarrow}_1+ y^{\downarrow}_2$}. This implies $x^\downarrow_2 \leq y^\downarrow_2$, so $x^\downarrow_1 \geq y^\downarrow_1 \geq y^\downarrow_2 \geq x^\downarrow_2$. Hence, $y^\downarrow_1 = t x^\downarrow_1 + (1-t)x^\downarrow_2$ for some $t \in [0,1]$. From this and $y^\downarrow_2 = x^\downarrow_1 + x^\downarrow_2 - y^\downarrow_1$, we get $y^\downarrow_2 = (1-t)x^\downarrow_1 + t x^\downarrow_2$. Hence $\v{y}$ can be obtained from $\v{x}$ by means of a doubly stochastic matrix. We now proceed by induction. Assume the case $n-1$. By means of permutations, assume both $\v{x}$ and $\v{y}$ are sorted in decreasing order. Since $\v{x} \succ \v{y}$, we have $x_1 \geq y_1 \geq x_n$ (the second inequality follows from $x_n \leq y_n \leq y_1$). Let $k$ be the smallest index such that $x_1 \geq y_1 \geq x_k$. Then $y_1 = t x_1 + (1-t)x_k$, $t \in [0,1]$. Let $\v{z} = T_1 \v{x}$, where $T_1$ is a doubly-stochastic matrix such that $T_1 x_1 =  t x_1 + (1-t)x_k$ and $T_1 x_k = (1-t)x_1 + t x_k$ ($T_1$ it acts trivially on any other $x_j$). Note that $z_1 = y_1$. Moreover, denote by $\tilde{\v{z}}$ and $\tilde{\v{y}}$ the vectors $\v{z}$ and $\v{y}$ truncated of the first element. We have 
 \be
 \tilde{\v{z}} = (x_2,...,x_{k-1},(1-t)x_1 + t x_k, x_{k+1}, ...,x_n).
 \ee
 By definition of $k$, $x_1 \geq ... \geq x_{k-1} \geq y_1 \geq ... \geq y_n$. It follows that $\sum_{i=2}^m x_i \geq \sum_{j=2}^m y_i$ for all $m=2,..,k-1$. For $m \geq k$, due to $\v{x} \succ \v{y}$,
 {\small 
 	\be
 	\nonumber
 	\sum_{i=2}^m \tilde{z}_i = \sum_{i=2}^{k-1} x_i + (1-t)x_1 + tx_k + \sum_{i=k+1}^m x_i = \sum_{i=1}^m x_i - tx_1 + (t-1)x_k \geq  \sum_{i=1}^m y_i - y_1 = \sum_{i=2}^m \tilde{y}_i.
 	\ee
 }
 Equality holds when $m=n$ because $\v{x} \succ \v{y}$. We conclude that $\tilde{\v{z}} \succ \tilde{\v{y}}$. By induction hypothesis, there is a set of doubly-stochastic matrices $T_2,...,T_p$ (each acting non trivially only on two elements of $\tilde{\v{z}}$) such that $T_p \dots T_2 \tilde{\v{z}} = \tilde{\v{y}}$. Hence, $T_p \dots T_2 T_1 \v{x} = \v{y}$. Each $T$ 
 is a convex combination of the identity and a transposition. Hence, the composition of the $T_i$ is a convex combination of permutations. A convex combination of permutations is a doubly stochastic matrix, so we conclude.	
 
 Conversely, without loss of generality, assume $\{x_i\}$ are sorted in non-increasing order. By assumption $y_j = \sum_{i=1}^n B_{j|i} x_i$, with $B$ doubly-stochastic. Then $\sum_{j=1}^k y_j = \sum_{i=1}^n t_i x_i$, where we defined $t_i = \sum_{j=1}^k B_{j|i} \in [0,1]$. Moreover, $\sum_{i=1}^n t_i = k$. Then,
 \begin{eqnarray*}
 	\sum_{j=1}^k y_j - \sum_{j=1}^k x_j =	\sum_{i=1}^n t_ i x_i - \sum_{i=1}^k x_i &=& 	\sum_{i=1}^k (t_ i-1) x_i  + \sum_{i=k+1}^n t_i x_i  + \left(k-\sum_{i=1}^n t_i\right)x_k= \\
 	\sum_{i=1}^k (t_ i-1) (x_i - x_k ) &+& \sum_{i=k+1}^n t_i (x_i - x_k) \leq 0.
 \end{eqnarray*}
 So $\sum_{j=1}^k y_j \leq \sum_{j=1}^k x_j$, and equality holds for $k=n$ because $B$ is doubly-stochastic. We conclude that $\v{x} \succ \v{y}$.  
 \end{proof}
 
 	Reconsidering the previous proof, one can note that we proved the equivalence of $\v{x} \succ \v{y}$ and the existence of a doubly-stochastic map from $\v{x}$ to $\v{y}$. However, two more equivalent conditions can be deduced:
 	\begin{enumerate}
 		\item $\v{y}$ is in the convex hull of the permutations of $\v{x}$,
 		\item $\v{y}$ can be obtained from $\v{x}$ by means of a sequence of doubly-stochastic matrices which have the property that each acts non trivially only on a 2-level subsystem (technically these are known as $T$-transforms).
 	\end{enumerate}
\end{document}